\documentclass[12pt]{article}

\usepackage{a4wide}
\usepackage[english]{babel}
\usepackage{graphicx}
\usepackage{amsmath}
\usepackage{amssymb}
\usepackage{wrapfig}
\usepackage{xspace}
\usepackage{enumerate}
\usepackage{array}
\usepackage{longtable}
\usepackage{float}

\graphicspath{{./images/}}

%

\newtheorem{defin}{Definition}
 
\newtheorem{theo}[defin]{Theorem}
 \newenvironment{theorem}{\begin{theo} \sl}{\end{theo}}
\newtheorem{lem}[defin]{Lemma}
 \newenvironment{lemma}{\begin{lem} \sl}{\end{lem}}
\newtheorem{coro}[defin]{Corollary}
 \newenvironment{corollary}{\begin{coro} \sl}{\end{coro}}
\newtheorem{prop}[defin]{Proposition}

\newenvironment{proof}{\emph{Proof.}}{\hfill $\Box$\\}

\newcommand{\etal}{\emph{et~al.}\xspace}
\newcommand{\keywords}[1]{\noindent\textbf{Keywords}~ #1}

\newcommand{\halfgraph}{half-\ensuremath{\theta_6}-graph\xspace}
\newcommand{\Graph}[1]{\ensuremath{\theta_{(4 k + #1)}}-Graph\xspace}
\newcommand{\graph}[1]{\ensuremath{\theta_{(4 k + #1)}}-graph\xspace}
\newcommand{\canon}[2]{\ensuremath{T_{#1 #2}}}

\newcommand{\const}{\ensuremath{\boldsymbol{c}}\xspace}

\def\Item{\item\abovedisplayskip=0pt\abovedisplayshortskip=0pt~\vspace*{-\baselineskip}}

\title{Towards Tight Bounds on Theta-Graphs}

\author{
Prosenjit Bose\thanks{School of Computer Science, Carleton University. Research supported in part by FQRNT, NSERC, and Carleton University's President's 2010 Doctoral Fellowship. Email: \texttt{jit@scs.carleton.ca, jdecaruf@cg.scs.carleton.ca, morin@scs.carleton.ca, andre@cg.scs.carleton.ca, sander@cg.scs.carleton.ca}.}
\and
\addtocounter{footnote}{-1}
Jean-Lou De Carufel\footnotemark
\and 
\addtocounter{footnote}{-1}
Pat Morin\footnotemark
\and 
\addtocounter{footnote}{-1}
Andr\'e van Renssen\footnotemark
\and
\addtocounter{footnote}{-1}
Sander Verdonschot\footnotemark
}

\date{}

\begin{document}

\maketitle

\begin{abstract}
  We present improved upper and lower bounds on the spanning ratio of $\theta$-graphs with at least six cones. Given a set of points in the plane, a $\theta$-graph partitions the plane around each vertex into $m$ disjoint cones, each having aperture $\theta = 2 \pi/m$, and adds an edge to the `closest' vertex in each cone. We show that for any integer $k \geq 1$, $\theta$-graphs with $4k + 2$ cones have a spanning ratio of $1 + 2 \sin(\theta/2)$ and we provide a matching lower bound, showing that this spanning ratio tight. 

  Next, we show that for any integer $k \geq 1$, $\theta$-graphs with $4k + 4$ cones have spanning ratio at most $1 + 2 \sin(\theta/2) / (\cos(\theta/2) - \sin(\theta/2))$. We also show that $\theta$-graphs with $4k + 3$ and $4k + 5$ cones have spanning ratio at most $\cos (\theta/4) / (\cos (\theta/2) - \sin (3\theta/4))$. This is a significant improvement on all families of $\theta$-graphs for which exact bounds are not known. For example, the spanning ratio of the $\theta$-graph with 7 cones is decreased from at most 7.5625 to at most 3.5132. These spanning proofs also imply improved upper bounds on the competitiveness of the $\theta$-routing algorithm. In particular, we show that the $\theta$-routing algorithm is $(1 + 2 \sin(\theta/2) / (\cos(\theta/2) - \sin(\theta/2)))$-competitive on $\theta$-graphs with $4k + 4$ cones and that this ratio is tight.  

  Finally, we present improved lower bounds on the spanning ratio of these graphs. Using these bounds, we provide a partial order on these families of $\theta$-graphs. In particular, we show that $\theta$-graphs with $4k + 4$ cones have spanning ratio at least $1 + 2 \tan(\theta/2) + 2 \tan^2(\theta/2)$, where $\theta$ is $2 \pi / (4k + 4)$. This is somewhat surprising since, for equal values of $k$, the spanning ratio of $\theta$-graphs with $4k + 4$ cones is greater than that of $\theta$-graphs with $4k + 2$ cones, showing that increasing the number of cones can make the spanning ratio worse. \\

  \keywords{Computational geometry, Spanners, Theta-graphs, Spanning Ratio, Tight bounds}
\end{abstract}

\section{Introduction}
A geometric graph $G$ is a graph whose vertices are points in the plane and whose edges are line segments between pairs of points. A graph $G$ is called plane if no two edges intersect properly. Every edge is weighted by the Euclidean distance between its endpoints. The distance between two vertices $u$ and $v$ in $G$, denoted by $\delta_G(u, v)$, or simply $\delta(u, v)$ when $G$ is clear from the context, is defined as the sum of the weights of the edges along the shortest path between $u$ and $v$ in $G$. A subgraph $H$ of $G$ is a $t$-spanner of $G$ (for $t\geq 1$) if for each pair of vertices $u$ and $v$, $\delta_H(u, v) \leq t \cdot \delta_G(u, v)$. The smallest value $t$ for which $H$ is a $t$-spanner is the \emph{spanning ratio} or \emph{stretch factor} of $H$. The graph $G$ is referred to as the {\em underlying graph} of $H$. The spanning properties of various geometric graphs have been studied extensively in the literature (see \cite{BS11,NS-GSN-06} for a comprehensive overview of the topic). 

Given a spanner, however, it is important to be able to route, i.e. find a short path, between any two vertices. A routing algorithm is said to be \emph{$c$-competitive} with respect to $G$ if the length of the path returned by the routing algorithm is not more than $c$ times the length of the shortest path in $G$~\cite{BFRV12}. The smallest value $c$ for which a routing algorithm is $c$-competitive with respect to $G$ is the \emph{routing ratio} of that routing algorithm. 

In this paper, we consider the situation where the underlying graph $G$ is a straightline embedding of the complete graph on a set of $n$ points in the plane with the weight of an edge $(u,v)$ being the Euclidean distance~$|u v|$ between $u$ and $v$. A spanner of such a graph is called a \emph{geometric spanner}. We look at a specific type of geometric spanner: $\theta$-graphs. 

Introduced independently by Clarkson~\cite{Cl87} and Keil~\cite{Keil88}, $\theta$-graphs are constructed as follows (a more precise definition follows in Section~\ref{sec:Preliminaries}): for each vertex $u$, we partition the plane into $m$ disjoint cones with apex $u$, each having aperture $\theta = 2 \pi/m$. When $m$ cones are used, we denote the resulting $\theta$-graph by the $\theta_m$-graph. The $\theta$-graph is constructed by, for each cone with apex $u$, connecting $u$ to the vertex $v$ whose projection onto the bisector of the cone is closest. Ruppert and Seidel~\cite{RS91} showed that the spanning ratio of these graphs is at most $1/(1 - 2 \sin (\theta/2))$, when $\theta < \pi/3$, i.e. there are at least seven cones. This proof also showed that the \emph{$\theta$-routing} algorithm (defined in Section~\ref{sec:Preliminaries}) is $1/(1 - 2 \sin (\theta/2))$-competitive on these graphs. 

Recently, Bonichon~\etal~\cite{BGHI10} showed that the $\theta_6$-graph has spanning ratio 2. This was done by dividing the cones into two sets, positive and negative cones, such that each positive cone is adjacent to two negative cones and vice versa. It was shown that when edges are added only in the positive cones, in which case the graph is called the \halfgraph, the resulting graph is equivalent to the Delaunay triangulation where the empty region is an equilateral triangle. The spanning ratio of this graph is 2, as shown by Chew~\cite{Chew89}. An alternative, inductive proof of the spanning ratio of the \halfgraph was presented by \mbox{Bose~\etal~\cite{BFRV12}}, along with an optimal local competitive routing algorithm on the half-$\theta_6$-graph. 

Tight bounds on spanning ratios are notoriously hard to obtain. The standard Delaunay triangulation (where the empty region is a circle) is a good example. Its spanning ratio has been studied for over 20 years and the upper and lower bounds still do not match. Also, even though it was introduced about 25 years ago, the spanning ratio of the $\theta_6$-graph has only recently been shown to be finite and tight, making it the first and, until now, only $\theta$-graph for which tight bounds are known. 

In this paper, we improve on the existing upper bounds on the spanning ratio of all $\theta$-graphs with at least six cones. First, we generalize the spanning proof of the \halfgraph given by Bose~\etal~\cite{BFRV12} to a large family of $\theta$-graphs: the \graph{2}, where $k \geq 1$ is an integer. We show that the \graph{2} has a tight spanning ratio of $1 + 2 \sin(\theta/2)$ (see Section~\ref{subsec:Theta4k+2}). 

We continue by looking at upper bounds on the spanning ratio of the other three families of $\theta$-graphs: the \graph{3}, the \graph{4}, and the \graph{5}, where $k$ is an integer and at least 1. We show that the \graph{4} has a spanning ratio of at most $1 + 2 \sin(\theta/2) / (\cos(\theta/2) - \sin(\theta/2))$ (see Section~\ref{subsec:Theta4k+4}). We also show that the \graph{3} and the \graph{5} have spanning ratio at most $\cos (\theta/4) / (\cos (\theta/2) - \sin (3\theta/4))$ (see Section~\ref{subsec:Theta4k+35}). As was the case for Ruppert and Seidel, the structure of these spanning proofs implies that the upper bounds also apply to the competitiveness of $\theta$-routing on these graphs. These results are summarized in Table~\ref{tab:Summary}. 

Finally, we present improved lower bounds on the spanning ratio of these graphs (see Section~\ref{sec:LowerBounds}) and we provide a partial order on these families (see Section~\ref{sec:Comparison}). In particular, we show that $\theta$-graphs with $4k + 4$ cones have spanning ratio at least $1 + 2 \tan(\theta/2) + 2 \tan^2(\theta/2)$. This is somewhat surprising since, for equal values of $k$, the spanning ratio of $\theta$-graphs with $4k + 4$ cones is greater than that of $\theta$-graphs with $4k + 2$ cones, showing that increasing the number of cones can make the spanning ratio worse.

\begin{table}[ht]
  \begin{center}
    \begin{tabular}{| >{\centering\arraybackslash}m{\dimexpr.17\linewidth-2\tabcolsep} || >{\centering\arraybackslash}m{\dimexpr.23\linewidth-2\tabcolsep} | >{\centering\arraybackslash}m{\dimexpr.23\linewidth-2\tabcolsep} | >{\centering\arraybackslash}m{\dimexpr.36\linewidth-2\tabcolsep} |}
    \hline
    & Current Spanning & Current Routing & Previous Spanning \& Routing \\ 
    \hline \hline
    \graph{2} & $1 + 2 \sin \left( \frac{\theta}{2} \right)$ & $\frac{1}{1 - 2 \sin \left( \frac{\theta}{2} \right)}$~\cite{RS91} & \vspace{2ex} $\frac{1}{1 - 2 \sin \left( \frac{\theta}{2} \right)}$~\cite{RS91} \\ [2ex]
    \hline
    \graph{3} & $\frac{\cos \left( \frac{\theta}{4} \right)}{\cos \left( \frac{\theta}{2} \right) - \sin \left( \frac{3\theta}{4} \right)}$ & $1 + \frac{2 \sin \left( \frac{\theta}{2} \right) \cos \left( \frac{\theta}{4} \right)}{\cos \left( \frac{\theta}{2} \right) - \sin \left( \frac{\theta}{2} \right)}$  & \vspace{2ex} $\frac{1}{1 - 2 \sin \left( \frac{\theta}{2} \right)}$~\cite{RS91} \\ [2ex] 
    \hline
    \graph{4} & $1 + \frac{2 \sin \left( \frac{\theta}{2} \right)}{\cos \left( \frac{\theta}{2} \right) - \sin \left( \frac{\theta}{2} \right)}$ & $1 + \frac{2 \sin \left( \frac{\theta}{2} \right)}{\cos \left( \frac{\theta}{2} \right) - \sin \left( \frac{\theta}{2} \right)}$ & \vspace{2ex} $\frac{1}{1 - 2 \sin \left( \frac{\theta}{2} \right)}$~\cite{RS91} \\ [2ex]
    \hline
    \graph{5} & $\frac{\cos \left( \frac{\theta}{4} \right)}{\cos \left( \frac{\theta}{2} \right) - \sin \left( \frac{3\theta}{4} \right)}$ & $1 + \frac{2 \sin \left( \frac{\theta}{2} \right) \cos \left( \frac{\theta}{4} \right)}{\cos \left( \frac{\theta}{2} \right) - \sin \left( \frac{\theta}{2} \right)}$  & \vspace{2ex} $\frac{1}{1 - 2 \sin \left( \frac{\theta}{2} \right)}$~\cite{RS91} \\ [2ex] 
    \hline
    \end{tabular}
  \end{center} 
  \caption{An overview of current and previous spanning and routing ratios of $\theta$-graphs}
  \label{tab:Summary}
\end{table}

\section{Preliminaries}
\label{sec:Preliminaries}
Let a \emph{cone} be the region in the plane between two rays originating from the same vertex (referred to as the apex of the cone). When constructing a $\theta_m$-graph, for each vertex $u$ consider the rays originating from~$u$ with the angle between consecutive rays being $\theta = 2 \pi / m$ (see Figure~\ref{fig:Cones}). Each pair of consecutive rays defines a cone. The cones are oriented such that the bisector of some cone coincides with the vertical halfline through $u$ that lies above $u$. We refer to this cone as $C^u_0$ and number the cones in clockwise order around $u$. The cones around the other vertices have the same orientation as the ones around $u$. If the apex is clear from the context, we write $C_i$ to indicate the $i$-th cone. 

For ease of exposition, we only consider point sets in general position: no two vertices lie on a line parallel to one of the rays that define the cones, no two vertices lie on a line perpendicular to the bisector of one of the cones, and no three points are collinear. 

\begin{figure}[ht]
  \begin{center}
    \includegraphics{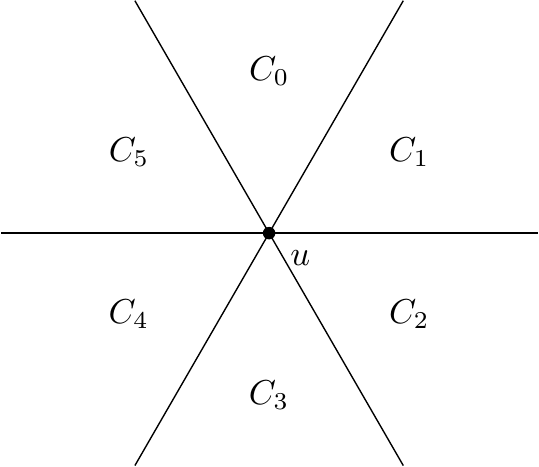}
  \end{center}
  \caption{The cones having apex $u$ in the $\theta_6$-graph}
  \label{fig:Cones}
\end{figure}

The $\theta_m$-graph is constructed as follows: for each cone $C^u_i$ of each vertex~$u$, add an edge from~$u$ to the closest vertex in that cone, where the distance is measured along the bisector of the cone (see Figure~\ref{fig:Projection}). More formally, we add an edge between two vertices $u$ and $v$ if $v \in C^u_i$, and for all vertices $w \in C^u_i$, $|u v'| \leq |u w'|$, where $v'$ and $w'$ denote the orthogonal projection of $v$ and $w$ onto the bisector of $C_i$. Note that our assumptions of general position imply that each vertex adds at most one edge per cone to the graph. 

\begin{figure}[ht]
  \begin{center}
    \includegraphics{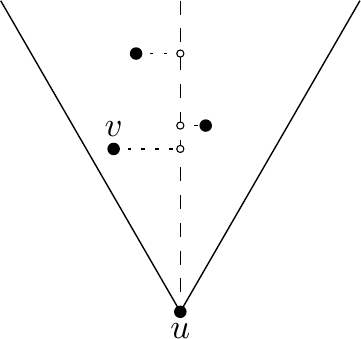}
  \end{center}
  \caption{Three vertices are projected onto the bisector of a cone of $u$. Vertex $v$ is the closest vertex}
  \label{fig:Projection}
\end{figure}

Using the structure of the $\theta_m$-graph, \emph{$\theta$-routing} is defined as follows. Let $t$ be the destination of the routing algorithm and let $u$ be the current vertex. If there exists a direct edge to $t$, follow this edge. Otherwise, follow the edge to the closest vertex in the cone of $u$ that contains $t$. 

Finally, given a vertex $w$ in cone $C$ of a vertex $u$, we define the \emph{canonical triangle} \canon{u}{w} to be the triangle defined by the borders of $C$ and the line through $w$ perpendicular to the bisector of $C$. We use $m$ to denote the midpoint of the side of \canon{u}{w} opposite $u$ and $\alpha$ to denote the smaller unsigned angle between $u w$ and $u m$ (see Figure~\ref{fig:CanonicalTriangle}). Note that for any pair of vertices $u$ and $w$ in the $\theta_m$-graph, there exist two canonical triangles: \canon{u}{w} and \canon{w}{u}. 

\begin{figure}[ht]
  \begin{center}
    \includegraphics{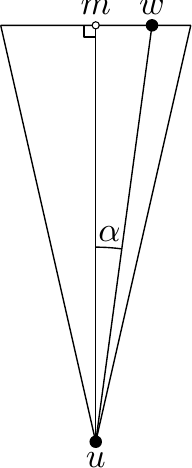}
  \end{center}
  \caption{The canonical triangle \canon{u}{w}}
  \label{fig:CanonicalTriangle}
\end{figure}

\section{Some Geometric Lemmas}
\label{sec:GeometricLemmas}
First, we prove a few geometric lemmas that are useful when bounding the spanning ratios of the graphs. We start with a nice geometric property of the \graph{2}. 

\begin{lemma}
  \label{lem:Boundary}
  In the \graph{2}, any line perpendicular to the bisector of a cone is parallel to the boundary of some cone. 
\end{lemma}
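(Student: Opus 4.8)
The plan is to reduce the statement to a purely angular computation: a line is perpendicular to a bisector, or parallel to a cone boundary, only by virtue of its \emph{direction} (taken modulo $\pi$, since a line has two opposite orientations), so it suffices to compare the direction of the perpendicular against the list of all boundary directions. First I would fix the orientation convention from Section~\ref{sec:Preliminaries}: the bisector of $C_0$ is vertical and the cones are numbered clockwise with aperture $\theta = 2\pi/m$, so the bisector of cone $C_i$ points in direction $\pi/2 - i\theta$, while the boundary shared by $C_j$ and $C_{j+1}$ points in direction $\pi/2 - (j + \tfrac12)\theta$. Thus the bisector directions form an arithmetic progression with step $\theta$, and the boundary directions form the \emph{same} progression shifted by the half-step $\theta/2$.

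The key observation, and the only place where the hypothesis $m = 4k+2$ enters, is the identity $\pi/2 = (m/4)\,\theta$. For $m = 4k+2$ we have $m/4 = k + \tfrac12$, a half-integer, so $\pi/2$ is exactly a half-integer multiple of $\theta$; equivalently $(2k+1)\theta = \pi$. Passing to the perpendicular of the bisector of $C_i$ rotates its direction by $\pi/2$, sending $\pi/2 - i\theta$ to $-i\theta \pmod{\pi}$. I would then verify directly that this is a boundary direction: since $(k + \tfrac12)\theta = \pi/2$, we get $-i\theta \equiv \pi/2 - (i + k + \tfrac12)\theta \pmod{\pi}$, which is precisely the direction of the boundary shared by $C_{i+k}$ and $C_{i+k+1}$ (indices modulo $m$). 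This exhibits the parallel cone boundary explicitly and completes the argument.

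I expect no deep obstacle here; the entire content lies in the bookkeeping of cone directions and in isolating the arithmetic fact that singles out this family. The point to handle carefully is the normalization modulo $\pi$ together with the clockwise numbering, so that the rotation by $\pi/2$ lands exactly halfway between two consecutive bisector directions rather than on a bisector. It is also worth recording why the other families fail, since the lemma is stated only for the \graph{2}: the quantity $\pi/2 = (m/4)\theta$ is a half-integer multiple of $\theta$ precisely when $m \equiv 2 \pmod 4$. When $m \equiv 0 \pmod 4$ it is instead an integer multiple of $\theta$, so the same computation makes the perpendicular parallel to a \emph{bisector}; and when $m$ is odd a short divisibility check shows it is parallel to neither a bisector nor a boundary. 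This is exactly why the property holds only for $m = 4k+2$.
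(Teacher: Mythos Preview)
Your proposal is correct and follows essentially the same approach as the paper: both arguments reduce to the angular identity $\pi/2 = (k+\tfrac12)\theta$ for $m=4k+2$, so that rotating a bisector direction by $\pi/2$ lands exactly on a boundary direction. The paper's version is more terse (it just computes the angle $\pi/2 - \theta/2 = k\theta$ between the perpendicular and the adjacent cone boundary), whereas you set up explicit direction coordinates and name the specific boundary $C_{i+k}\cap C_{i+k+1}$, but the content is the same.
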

\begin{proof}
  The angle between the bisector of a cone and the boundary of that cone is $\theta/2$. In the \graph{2}, since $\theta = 2\pi / (4 k + 2)$, the angle between the bisector and the line perpendicular to this bisector is $\pi/2 = ((4 k + 2) / 4) \cdot \theta = k \cdot \theta + \theta/2$. Thus the angle between the line perpendicular to the bisector and the boundary of the cone is $\pi/2 - \theta/2 = k \cdot \theta$. Since a cone boundary is placed at every multiple of $\theta$, the line perpendicular to the bisector is parallel to the boundary of some cone. 
\end{proof}

This property helps when bounding the spanning ratio of the \graph{2}. However, before deriving this bound, we prove a few other geometric lemmas. We use $\angle xyz$ to denote the smaller angle between line segments $xy$ and $yz$. 

\begin{lemma}
  \label{lem:FourPoints}
  Let $a$, $b$, $c$, and $d$ be four points on a circle such that $\angle cad \leq \angle bad \leq \angle adc$. It holds that $|a c| + |c d| \leq |a b| + |b d|$ and $|c d| \leq |b d|$. 
\end{lemma}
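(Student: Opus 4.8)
The plan is to convert every chord length into the sine of an inscribed angle and then reduce both claims to elementary trigonometric inequalities. Let $R$ be the radius of the circle. By the inscribed angle theorem, a chord subtending an inscribed angle $\phi$ has length $2R\sin\phi$, so writing $\phi_1 = \angle cad$, $\phi_2 = \angle bad$, $\phi_3 = \angle adc$ and $\phi_4 = \angle adb$, I have $|cd| = 2R\sin\phi_1$, $|bd| = 2R\sin\phi_2$, $|ac| = 2R\sin\phi_3$ and $|ab| = 2R\sin\phi_4$, and the hypotheses become $\phi_1 \le \phi_2 \le \phi_3$. The one structural fact I need is that $b$ and $c$ lie on the same arc determined by the chord $ad$ (this is what the angle hypotheses encode once the cyclic order of the four points is fixed); consequently $b$ and $c$ subtend equal angles to $ad$, i.e. $\angle dba = \angle dca$, and summing the angles of triangles $abd$ and $acd$ gives the key identity $\phi_1 + \phi_3 = \phi_2 + \phi_4 =: S$. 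Since $S = \pi - \angle acd$, I also record that $0 < S < \pi$.

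Next I rewrite the hypotheses in a usable form. From $\phi_2 \le \phi_3 = S - \phi_1$ I obtain $\phi_1 + \phi_2 \le S < \pi$, and together with $\phi_1 \le \phi_2$ this also forces $\phi_1 \le S/2$. For the second claim $|cd| \le |bd|$ it suffices to show $\sin\phi_1 \le \sin\phi_2$. Applying the sum-to-product identity,
\[
  \sin\phi_2 - \sin\phi_1 = 2\cos\!\left(\tfrac{\phi_1 + \phi_2}{2}\right)\sin\!\left(\tfrac{\phi_2 - \phi_1}{2}\right),
\]
which is nonnegative because $\tfrac{\phi_1 + \phi_2}{2} < \pi/2$ makes the cosine positive, while $\phi_1 \le \phi_2$ makes the sine nonnegative. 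This settles $|cd| \le |bd|$.

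For the first claim I must show $\sin\phi_1 + \sin\phi_3 \le \sin\phi_2 + \sin\phi_4$. Using $\phi_1 + \phi_3 = \phi_2 + \phi_4 = S$ together with sum-to-product, both sides factor through $2\sin(S/2)$, and since $\sin(S/2) > 0$ the inequality reduces to
\[
  \cos\!\left(\tfrac{S}{2} - \phi_1\right) \le \cos\!\left(\phi_2 - \tfrac{S}{2}\right).
\]
Both arguments lie in $[0,\pi)$, where cosine is decreasing, so it is enough to prove $\left|\phi_2 - \tfrac{S}{2}\right| \le \tfrac{S}{2} - \phi_1$. When $\phi_2 \ge S/2$ this is exactly $\phi_1 + \phi_2 \le S$, and when $\phi_2 < S/2$ it follows from $\phi_1 \le \phi_2$; either way the inequality holds, completing the proof.

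The main obstacle is that sine is not monotone on $(0,\pi)$, so neither claim follows from the angle ordering alone: an inscribed angle exceeding $\pi/2$ could in principle reverse a chord comparison. This is precisely where the seemingly mild hypothesis $\angle bad \le \angle adc$ earns its keep, since it delivers $\phi_1 + \phi_2 \le S < \pi$, which keeps the relevant half-angle sums below $\pi/2$ and pins the arguments of cosine into the range where monotonicity applies. A secondary point to justify carefully is the configuration claim that $b$ and $c$ share an arc of $ad$; I would establish it by first arguing that the three angle inequalities force a unique cyclic order of $a,b,c,d$ (up to reflection), from which the identity $\phi_1 + \phi_3 = \phi_2 + \phi_4$ and the bound $S < \pi$ both follow.
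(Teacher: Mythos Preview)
Your proof is correct and takes a genuinely different route from the paper. The paper proves the perimeter inequality geometrically: it reflects $c$ across the perpendicular bisector of $ad$ to a point $c'$, observes that the ellipse with foci $a,d$ through $c$ and $c'$ has its major axis along $ad$, and argues that the upper arc of the circle (where $b$ sits) lies outside this ellipse, so $|ab|+|bd|$ exceeds $|ac|+|cd|$. For $|cd|\le|bd|$ it uses the sine law in essentially your form. Your approach instead converts everything to inscribed angles, extracts the identity $\phi_1+\phi_3=\phi_2+\phi_4=S$ from $\angle abd=\angle acd$, and finishes with sum-to-product formulas. This is more self-contained and avoids the slightly delicate ellipse--circle comparison the paper leaves implicit; the paper's argument is more visual. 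Both routes hinge on the same structural input, namely that $b$ and $c$ lie on the same arc of $ad$.

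One small caution: your final paragraph proposes to derive the same-arc configuration from the three angle inequalities alone, but those inequalities do not force it. For instance, if $c$ lies on the major arc with $\angle acd=60^\circ$, $\phi_1=30^\circ$, $\phi_3=90^\circ$, and $b$ lies on the minor arc with $\phi_2=40^\circ$, $\phi_4=20^\circ$, then $\phi_1\le\phi_2\le\phi_3$ holds but $|ac|+|cd|>|ab|+|bd|$. The paper sidesteps this by reading the configuration off its figure and its intended application; you should do the same and simply take the same-arc hypothesis as part of the setup rather than claim to deduce it.
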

\begin{proof}
  This situation is illustrated in Figure~\ref{fig:FourPoints}. Without loss of generality, we assume that $|a d| = 1$. Since $b$ and $c$ lie on the same circle and $\angle abd$ and $\angle acd$ are the angle opposite to the same chord $a d$, the inscribed angle theorem implies that $\angle abd = \angle acd$. Furthermore, since $\angle cad \leq \angle adc$, $c$ lies to the right of the perpendicular bisector of $a d$. 

  \begin{figure}
    \begin{center}
      \includegraphics{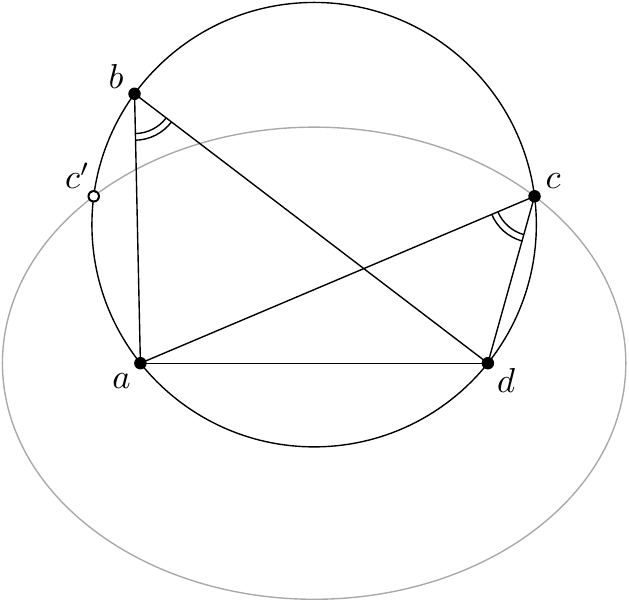}
    \end{center}
    \caption{Illustration of the proof of Lemma~\ref{lem:FourPoints}}
    \label{fig:FourPoints}
  \end{figure}

  First, we show that $|a c| + |c d| \leq |a b| + |b d|$ by showing that $|a c| + |c d| + |a d| \leq |a b| + |b d| + |a d|$. Let $c'$ be the point on the circle when we mirror $c$ along the perpendicular bisector of $a d$. Points $c$ and $c'$ partition the circle into two arcs. Since $\angle cad \leq \angle bad \leq \angle adc$, $b$ lies on the upper arc of the circle. We focus on triangle $a c d$. The locus of the point $c$ such that the perimeter of $a c d$ is constant defines an ellipse. This ellipse has major axis $a d$ and goes through $c$ and $c'$. Since this major axis is horizontal, the ellipse does not intersect the upper arc of the circle. Hence, since $b$ lies on the upper arc of the circle, which is outside of the ellipse, the perimeter of $a b d$ is greater than that of $a c d$, completing the first half of the proof. 

  Next, we show that $|c d| \leq |b d|$. Using the sine law, we have that $|c d| = \sin \angle cad /$ $\sin \angle acd$ and $|b d| = \sin \angle bad / \sin \angle abd$. Since $\angle cad \leq \angle bad \leq \angle adc \leq \pi - \angle cad$, we have that $\sin \angle cad \leq \sin \angle bad$. Hence, since $\angle abd = \angle acd$, we have that $|c d| \leq |b d|$. 
\end{proof}

\begin{lemma}
  \label{lem:ApplyFourPoints} 
  Let $u$, $v$ and $w$ be three vertices in the \graph{x}, where $x \in \{2, 3, 4, 5\}$, such that $w \in C_0^u$ and $v \in \canon{u}{w}$, to the left of $w$. Let $a$ be the intersection of the side of $\canon{u}{w}$ opposite to $u$ with the left boundary of $C_0^v$. Let $C_i^v$ denote the cone of $v$ that contains $w$ and let $c$ and $d$ be the upper and lower corner of $\canon{v}{w}$. If $1 \leq i \leq k-1$, or $i = k$ and $|c w| \leq |d w|$, then $\max \left\{|v c| + |c w|, |v d| + |d w|\right\} \leq |v a| + |a w|$ and $\max \left\{|c w|, |d w|\right\} \leq |a w|$.
\end{lemma}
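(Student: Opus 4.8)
The name of this lemma signals the approach: reduce both desired inequalities to Lemma~\ref{lem:FourPoints}. The plan is to exhibit, for each of the two corners, a circle through $v$ and $w$ on which that corner lies together with $a$, and then read off the two conclusions of Lemma~\ref{lem:FourPoints} with the chord $vw$ playing the role of the chord $ad$ there. First I would record the inscribed angles that the three auxiliary points subtend on $vw$. Since $\canon{v}{w}$ is isosceles with apex angle $\theta$ at $v$ and opposite side $cd$ perpendicular to its bisector, the base angles satisfy $\angle vcw = \angle vdw = \pi/2 - \theta/2$; and because $a$ lies on the left boundary of $C_0^v$ while $aw$ is horizontal (parallel to the side of $\canon{u}{w}$ opposite $u$), one also gets $\angle vaw = \pi/2 - \theta/2$. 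Checking the directions of the rays $v\to a$, $v\to c$, $v\to w$ shows that $a$ and $c$ lie on the same side of the line $vw$, whereas $d$ lies on the opposite side.

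Hence $v, a, c, w$ are concyclic (equal inscribed angles on $vw$ from the same side), and I would apply Lemma~\ref{lem:FourPoints} to the quadruple $(v, a, c, w)$ in the roles $(a,b,c,d)$ of that lemma; its two conclusions give exactly $|vc| + |cw| \le |va| + |aw|$ and $|cw| \le |aw|$, provided the ordering hypothesis $\angle cvw \le \angle avw \le \angle vwc$ holds. For the corner $d$, which is on the wrong side of $vw$, the key idea is to reflect it across the line $vw$ to a point $d^*$. Reflection fixes $v$ and $w$, so $|vd^*| = |vd|$, $|d^*w| = |dw|$, $\angle d^*vw = \angle dvw$ and $\angle vwd^* = \angle vwd$, while $d^*$ now lies on the same side as $a$ and still subtends $\pi/2-\theta/2$ on $vw$. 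Thus $v, a, d^*, w$ are concyclic, and a second application of Lemma~\ref{lem:FourPoints} to $(v, a, d^*, w)$ yields $|vd| + |dw| \le |va| + |aw|$ and $|dw| \le |aw|$, provided $\angle dvw \le \angle avw \le \angle vwd$. Bounding each of the two terms in this way bounds both maxima, so there is no need to determine which corner is farther.

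The crux is verifying the two ordering hypotheses, and this is exactly where the arithmetic condition on $i$ enters. Writing $\gamma$ for the clockwise angle of $vw$ from the upward vertical, I would establish $\angle avw = \gamma + \theta/2$, $\angle cvw = \gamma - (i-\tfrac12)\theta$, and $\angle dvw = (i+\tfrac12)\theta - \gamma$, together with $\angle vwc = \pi/2 + \theta/2 - \angle cvw$ and $\angle vwd = \pi/2 + \theta/2 - \angle dvw$ from the triangle angle sums. The left inequalities $\angle cvw \le \angle avw$ and $\angle dvw \le \angle avw$ are immediate once $i \ge 1$. The binding constraint is the right inequality for $c$, namely $\angle avw \le \angle vwc$, which after substituting $\pi/2 = (k + x/4)\theta$ reduces to a linear inequality in $i$ and $\gamma$: requiring it over the whole cone forces $i \le k-1$, while for $i = k$ it collapses to $\gamma \le k\theta$, which is precisely $|cw| \le |dw|$. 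The corresponding right inequality for $d^*$, namely $\angle avw \le \angle vwd$, reduces to $(i+\tfrac12)\theta \le \pi/2$, i.e. $i \le k$, which holds throughout the stated range.

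I would carry out this bookkeeping in detail for $x = 2$ (the tightest case, where the bound $\pi/2 = (k+\tfrac12)\theta$ makes the $c$-constraint exactly $i \le k-1$ and the $i=k$ refinement exactly $|cw|\le|dw|$) and then note that the same hypothesis suffices for $x \in \{3,4,5\}$, since enlarging $x$ only increases $\pi/2$ relative to $\theta$ and thereby weakens both constraints. The main obstacle is therefore not any single hard step but assembling all of the angle identities and the concyclicity/side-of-line claims correctly so that the hypotheses of Lemma~\ref{lem:FourPoints} are literally met; the reflection of $d$ is the device that brings the second corner into the same framework as the first.
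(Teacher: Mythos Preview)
Your proposal is correct and follows essentially the same approach as the paper: establish concyclicity of $v,a,c,w$ (respectively $v,a,d^*,w$ after reflecting $d$ across $vw$) via the common inscribed angle $\pi/2-\theta/2$, then invoke Lemma~\ref{lem:FourPoints} after verifying the angle-ordering hypothesis, which is exactly where the condition on $i$ enters. The only organizational difference is that the paper case-splits on which of $|cw|,|dw|$ is larger and proves just the relevant corner's inequality, whereas you bound both corners uniformly; both routes use the same ideas and the same arithmetic.
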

\begin{proof}
  This situation is illustrated in Figure~\ref{fig:ApplyFourPoints}. We perform case distinction on $\max \{|c w|,$ $|d w|\}$. 

  \begin{figure}[ht]
    \begin{center}
      \includegraphics{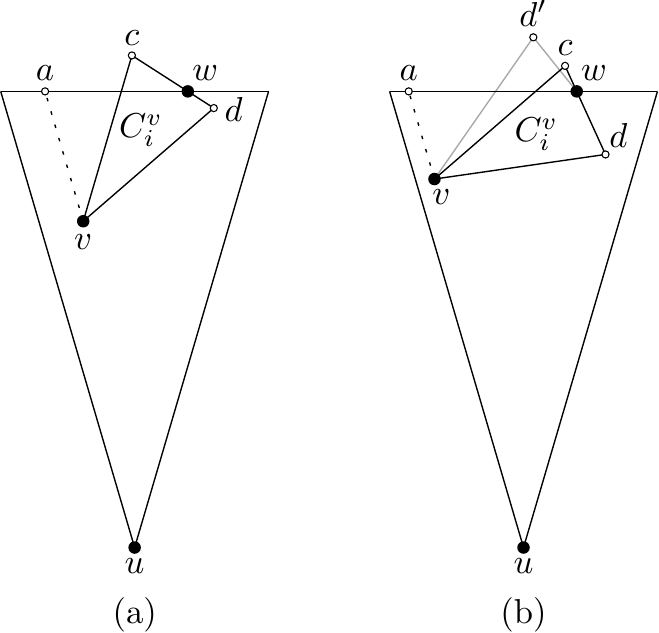}
    \end{center}
    \caption{The two cases for the situation where we apply Lemma~\ref{lem:FourPoints}: (a) $|c w| > |d w|$, \mbox{(b) $|c w| \leq |d w|$}}
    \label{fig:ApplyFourPoints}
  \end{figure}

  \textit{Case 1:} If $|c w| > |d w|$ (see Figure~\ref{fig:ApplyFourPoints}a), we need to show that when $1 \leq i \leq k-1$, we have that $|v c| + |c w| \leq |v a| + |a w|$ and $|c w| \leq |a w|$. Since angles $\angle vaw$ and $\angle vcw$ are both angles between the boundary of a cone and the line perpendicular to its bisector, we have that $\angle vaw = \angle vcw$. Thus, $c$ lies on the circle through $a$, $v$, and $w$. Therefore, if we can show that $\angle cvw \leq \angle avw \leq \angle vwc$, Lemma~\ref{lem:FourPoints} proves this case. 

  We show $\angle cvw \leq \angle avw \leq \angle vwc$ in two steps. Since $w \in C_i^v$ and $i \geq 1$, we have that $\angle avc = i \cdot \theta \geq \theta$. Hence, since $\angle avw = \angle avc + \angle cvw$, we have that $\angle cvw \leq \angle avw$. It remains to show that $\angle avw \leq \angle vwc$. We note that $\angle avw \leq (i + 1) \cdot \theta$ and $(\pi - \theta)/2 \leq \angle vwc$, since $|c w| > |d w|$. Using that $\theta = 2 \pi / (4 k + x)$ and $x \in \{2, 3, 4, 5\}$, we have the following. 
  \begin{eqnarray*}
    i &\leq& k - 1 \\
    i &\leq& k + \frac{x}{4} - \frac{3}{2} \\
    i &\leq& \frac{\pi \cdot (4 k + x)}{4 \pi} - \frac{3}{2} \\
    i &\leq& \frac{\pi}{2 \theta} - \frac{3}{2} \\
    (i + 1) \cdot \theta &\leq& \frac{\pi - \theta}{2} \\
    \angle avw &\leq& \angle vwc 
  \end{eqnarray*}

  \textit{Case 2:} If $|c w| \leq |d w|$ (see Figure~\ref{fig:ApplyFourPoints}b), we need to show that when $1 \leq i \leq k$, we have that $|v d| + |d w| \leq |v a| + |a w|$ and $|d w| \leq |a w|$. Since angles $\angle vaw$ and $\angle vdw$ are both angles between the boundary of a cone and the line perpendicular to its bisector, we have that $\angle vaw = \angle vdw$. Thus, when we reflect $d$ in the line through $v w$, the resulting point $d'$ lies on the circle through $a$, $v$, and $w$. Therefore, if we can show that $\angle d'vw \leq \angle avw \leq \angle vwd'$, Lemma~\ref{lem:FourPoints} proves this case. 

  We show $\angle d'vw \leq \angle avw \leq \angle vwd'$ in two steps. Since $w \in C_i^v$ and $i \geq 1$, we have that $\angle avw \geq \angle avc = i \cdot \theta \geq \theta$. Hence, since $\angle d'vw \leq \theta$, we have that $\angle d'vw \leq \angle avw$. It remains to show that $\angle avw \leq \angle vwd'$. We note that $\angle vwd' = \angle dwv = \pi - (\pi - \theta)/2 - \angle dvw$ and $\angle avw = \angle avd - \angle dvw = (i + 1) \cdot \theta - \angle dvw$. Using that $\theta = 2 \pi / (4 k + x)$ and $x \in \{2, 3, 4, 5\}$, we have the following. 
  \begin{eqnarray*}
    i &\leq& k \\
    i &\leq& k + \frac{x}{4} - \frac{1}{2} \\
    i &\leq& \frac{\pi \cdot (4 k + x)}{4 \pi} - \frac{1}{2} \\
    i &\leq& \frac{\pi}{2 \theta} - \frac{1}{2} \\
    (i + 1) \cdot \theta - \angle dvw &\leq& \frac{\pi + \theta}{2} - \angle dvw \\
    \angle avw &\leq& \angle vwd' 
  \end{eqnarray*} 
\end{proof}

\begin{lemma}
  \label{lem:CalculationCase}
  Let $u$, $v$ and $w$ be three vertices in the \graph{x}, such that $w \in C_0^u$, $v \in \canon{u}{w}$ to the left of $w$, and $w \not \in C_0^v$. Let $a$ be the intersection of the side of $\canon{u}{w}$ opposite to $u$ with the left boundary of $C_0^v$. Let $c$ and $d$ be the corners of $\canon{v}{w}$ opposite to $v$. Let $\beta = \angle a w v$ and let $\gamma$ be the unsigned angle between $v w$ and the bisector of \canon{v}{w}. Let \const be a positive constant. If 
  \begin{align}
  \label{ineq:CalculationCase1}
  \const \geq \frac{\cos \gamma - \sin \beta}{\cos \left( \frac{\theta}{2} - \beta \right) - \sin \left( \frac{\theta}{2} + \gamma \right)},
  \end{align}
  then 
  \begin{align}
  \label{ineq:CalculationCase2}
  \max \left\{|v c| + \const \cdot |c w|, |v d| + \const \cdot |d w|\right\} \leq |v a| + \const \cdot |a w|.
  \end{align}
\end{lemma}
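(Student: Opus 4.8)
The plan is to reduce the geometric inequality~\eqref{ineq:CalculationCase2} to a single trigonometric inequality in $\beta$ and $\gamma$ that is just the rearrangement of the hypothesis~\eqref{ineq:CalculationCase1}. Since every length scales with $|vw|$, I would normalize $|vw| = 1$ and express the six lengths $|va|$, $|aw|$, $|vc|$, $|vd|$, $|cw|$, $|dw|$ in terms of $\beta$, $\gamma$, and $\theta$.

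First I would analyze triangle $avw$. The segment $av$ lies along the left boundary of $C_0^v$, while $aw$ lies along the side of $\canon{u}{w}$ opposite $u$, which is perpendicular to the vertical bisector of $C_0^u$; hence the angle at $a$ is $\angle vaw = \pi/2 - \theta/2$. Together with $\angle awv = \beta$ this forces $\angle avw = \pi/2 + \theta/2 - \beta$, and the sine law gives $|va| = \sin\beta / \cos(\theta/2)$ and $|aw| = \cos(\theta/2 - \beta)/\cos(\theta/2)$. Next I would analyze $\canon{v}{w}$. Because $c$ and $d$ are the two corners opposite the apex $v$, they are symmetric about the bisector of the cone of $v$ that contains $w$, so $|vc| = |vd| = \cos\gamma/\cos(\theta/2)$, and since $w$ sits at angle $\gamma$ from that bisector the two corner distances are $\cos\gamma\tan(\theta/2) \mp \sin\gamma$, whence $\max\{|cw|,|dw|\} = \cos\gamma\tan(\theta/2) + \sin\gamma$.

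The equality $|vc| = |vd|$ is what keeps the argument clean: since $\const > 0$, the maximum on the left of~\eqref{ineq:CalculationCase2} is simply $|vc| + \const \cdot \max\{|cw|,|dw|\}$, so I need only one inequality rather than two. Substituting the six lengths, multiplying through by $\cos(\theta/2)$, and collapsing $\cos\gamma\sin(\theta/2) + \sin\gamma\cos(\theta/2)$ into $\sin(\theta/2 + \gamma)$, the inequality~\eqref{ineq:CalculationCase2} becomes $\cos\gamma - \sin\beta \leq \const \cdot (\cos(\theta/2 - \beta) - \sin(\theta/2 + \gamma))$, which is exactly~\eqref{ineq:CalculationCase1} cleared of its denominator.

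The step I expect to be the \emph{real} obstacle is justifying that clearing this denominator preserves the direction of the inequality, i.e. that $D := \cos(\theta/2 - \beta) - \sin(\theta/2 + \gamma) > 0$; otherwise~\eqref{ineq:CalculationCase1} would flip sign. This is where the hypotheses are used. Since $w$ lies above and to the right of $v$, the direction $vw$ makes an angle $\phi \in (0, \pi/2)$ with the upward vertical and $\beta = \pi/2 - \phi$; the assumption $w \notin C_0^v$ forces $\phi > \theta/2$, so $\beta \in (0, \pi/2 - \theta/2)$, while $\gamma \in [0, \theta/2)$ because $w$ lies in a cone of $v$. Writing $\sin(\theta/2 + \gamma) = \cos(\pi/2 - \theta/2 - \gamma)$ and using that $\cos$ is decreasing on $[0,\pi]$, the inequality $D > 0$ is equivalent to $|\theta/2 - \beta| < \pi/2 - \theta/2 - \gamma$, which I would verify by splitting on whether $\beta \leq \theta/2$: the case $\beta > \theta/2$ reduces to $\beta + \gamma < \pi/2$, immediate from the two ranges, and the case $\beta \leq \theta/2$ reduces to $\theta + \gamma - \beta < \pi/2$, which follows from $\gamma < \theta/2$ together with $\theta \leq \pi/3$ (there are at least six cones). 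With $D > 0$ established, multiplying~\eqref{ineq:CalculationCase1} by $D$ gives exactly $\cos\gamma - \sin\beta \leq \const \cdot D$, completing the proof.
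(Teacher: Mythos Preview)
Your proposal is correct and follows essentially the same approach as the paper: express all six lengths via the sine law in terms of $\beta$, $\gamma$, $\theta$, exploit $|vc|=|vd|$ to reduce~\eqref{ineq:CalculationCase2} to the single inequality $\cos\gamma - \sin\beta \le \const\cdot\bigl(\cos(\theta/2-\beta)-\sin(\theta/2+\gamma)\bigr)$, and verify that the denominator $D$ is positive from the admissible ranges of $\beta$ and $\gamma$. Your justification of why $D>0$ is required (to preserve the inequality direction when clearing the denominator) is in fact more explicit than the paper's, and you invoke $\theta\le\pi/3$ where the paper uses $\theta\le 2\pi/7$, but either bound suffices.
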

\begin{proof}
  This situation is illustrated in Figure~\ref{fig:CalculationLemma}. Since the angle between the bisector of a cone and its boundary is $\theta/2$, by the sine law, we have the following. 
  \begin{eqnarray*}
    |v c| = |v d| &=& |v w| \cdot \frac{\cos \gamma}{\cos \left( \frac{\theta}{2} \right) } \\ 
    \max \left\{|c w|, |d w| \right\} &=& |v w| \cdot \left( \sin \gamma + \cos \gamma \tan \left( \frac{\theta}{2} \right) \right) \\ 
    |v a| &=& |v w| \cdot \frac{\sin \beta}{\cos \left( \frac{\theta}{2} \right) } \\ 
    |a w| &=& |v w| \cdot \left( \cos \beta + \sin \beta \tan \left( \frac{\theta}{2} \right) \right)
  \end{eqnarray*}

  \begin{figure}[ht]
    \begin{center}
      \includegraphics{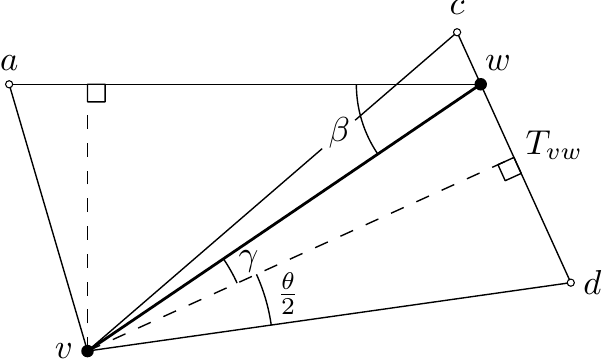}
    \end{center}
    \caption{Finding a constant \const such that $|v d| + \const \cdot |d w| \leq |v a| + \const \cdot |a w|$}
    \label{fig:CalculationLemma}
  \end{figure}

  \noindent To show that (\ref{ineq:CalculationCase2}) holds, we first multiply both sides by $\cos (\theta/2) / |v w|$ and rewrite as follows. 
  \begin{eqnarray*}
    \frac{\cos \left( \frac{\theta}{2} \right)}{|v w|} \cdot \max \left\{|v c| + \const \cdot |c w|, |v d| + \const \cdot |d w|\right\}&& \\
    &\hspace{-6.5cm} =& \hspace{-3.1cm} \cos \gamma + \const \cdot \left( \sin \gamma \cos \left( \frac{\theta}{2} \right) + \cos \gamma \sin \left( \frac{\theta}{2} \right) \right) \\
    &\hspace{-6.5cm} =& \hspace{-3.1cm} \cos \gamma + \const \cdot \sin \left( \frac{\theta}{2} + \gamma \right)
  \end{eqnarray*}
  \begin{eqnarray*}
    \frac{\cos \left( \frac{\theta}{2} \right)}{|v w|} \cdot (|v a| + \const \cdot |a w|) &=& \sin \beta + \const \cdot \left( \cos \beta \cos \left( \frac{\theta}{2} \right) + \sin \beta \sin \left( \frac{\theta}{2} \right) \right) \\
    &=& \sin \beta + \const \cdot \cos \left( \frac{\theta}{2} - \beta \right)
  \end{eqnarray*}
  Therefore, to prove that (\ref{ineq:CalculationCase1}) implies (\ref{ineq:CalculationCase2}), we rewrite (\ref{ineq:CalculationCase1}) as follows. 
  \begin{eqnarray*}
    \const &\geq& \frac{\cos \gamma - \sin \beta}{\cos \left( \frac{\theta}{2} - \beta \right) - \sin \left( \frac{\theta}{2} + \gamma \right)} \\
    \cos \gamma - \sin \beta &\leq& \const \cdot \left( \cos \left( \frac{\theta}{2} - \beta \right) - \sin \left( \frac{\theta}{2} + \gamma \right) \right) \\
    \cos \gamma + \const \cdot \sin \left( \frac{\theta}{2} + \gamma \right) &\leq& \sin \beta + \const \cdot \cos \left( \frac{\theta}{2} - \beta \right) 
  \end{eqnarray*} 

  It remains to show that $\const > 0$. Since $w \not \in C_0^v$, we have that \mbox{$\beta \in (0, (\pi - \theta)/2)$}. Moreover, we have that $\gamma \in [0, \theta/2)$, by definition. This implies that $\sin (\pi/2 + \gamma) > \sin \beta$, or equivalently, $\cos \gamma - \sin \beta > 0$. Thus, we need to show that $\cos (\theta/2 - \beta) - \sin (\theta/2 + \gamma) > 0$, or equivalently, $\sin (\pi/2 + \theta/2 - \beta) > \sin (\theta/2 + \gamma)$. It suffices to show that $\theta/2 + \gamma < \pi/2 + \theta/2 - \beta < \pi - \theta/2 - \gamma$. This follows from $\beta \in (0, (\pi - \theta)/2)$, $\gamma \in [0, \theta/2)$, and the fact that $\theta \leq 2\pi/7$. 
\end{proof}

\section{Upper Bounds}
In this section, we provide improved upper bounds for the four families of $\theta$-graphs: the \graph{2}, the \graph{3}, the \graph{4}, and the \graph{5}. We first prove that the \graph{2} has a tight spanning ratio of $1 + 2 \sin (\theta/2)$. Next, we provide a generic framework for the spanning proof for the three other families of $\theta$-graphs. After providing this framework, we fill in the blanks for the individual families.

\subsection{Optimal Bounds on the \Graph{2}}
\label{subsec:Theta4k+2}
We start by showing that the \graph{2} has a spanning ratio of $1 + 2 \sin (\theta/2)$. At the end of this section, we also provide a matching lower bound, proving that this spanning ratio is tight. 

\begin{theorem}
  \label{theo:PathLength}
  Let $u$ and $w$ be two vertices in the plane. Let $m$ be the midpoint of the side of \canon{u}{w} opposite $u$ and let $\alpha$ be the unsigned angle between $u w$ and $u m$. There exists a path connecting $u$ and $w$ in the \graph{2} of length at most \[\left( \left(\frac{1 + \sin \left(\frac{\theta}{2}\right)}{\cos \left(\frac{\theta}{2}\right)} \right) \cdot \cos \alpha + \sin \alpha \right) \cdot |u w|.\] 
\end{theorem}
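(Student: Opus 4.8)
The plan is to argue by induction on the number of vertices inside the canonical triangle \canon{u}{w}, taking without loss of generality $w \in C_0^u$, so that the bisector of that cone points vertically upward and the side of \canon{u}{w} opposite $u$ is horizontal. The backbone of the argument is a reformulation of the target length in terms of the corners of \canon{u}{w}. Writing $c_1 = (1 + \sin(\theta/2))/\cos(\theta/2)$, the claimed bound equals $c_1 \cdot |um| + |mw|$, because $|um| = |uw|\cos\alpha$ and $|mw| = |uw|\sin\alpha$. A short computation with the half-aperture $\theta/2$ gives $c_1 \cdot |um| = |u\ell| + |m\ell|$ for either corner $\ell$ of \canon{u}{w}, whence the claimed bound is at least $|u\ell| + |\ell w|$. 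It therefore suffices to produce a path from $u$ to $w$ of length at most $|u\ell| + |\ell w|$, where $\ell$ is the corner of \canon{u}{w} on the side of the vertex chosen below.

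Let $v$ be the vertex to which $u$ connects inside $C_0^u$, that is, the closest vertex in that cone. If $v = w$, the edge $uw$ is present and the path length is $|uw|$; here it is enough that the coefficient $f(\alpha) = c_1\cos\alpha + \sin\alpha$ is at least $1$, which holds because $c_1 > 1$ gives $f(\alpha) \ge \cos\alpha + \sin\alpha \ge 1$ for $\alpha \in [0, \theta/2]$. Otherwise $v \neq w$ lies strictly inside \canon{u}{w}, and I assume without loss of generality that $v$ is to the left of $w$. Let $C_i^v$ be the cone of $v$ containing $w$, and let $c$ and $d$ be the upper and lower corners of \canon{v}{w}. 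Applying the induction hypothesis to $(v, w)$ and using the same corner reformulation as in the first paragraph yields
\[\delta(u,w) \le |uv| + \delta(v,w) \le |uv| + \max\{\,|vc| + |cw|,\ |vd| + |dw|\,\}.\]
For this recursion to be legitimate I must check that \canon{v}{w} contains strictly fewer vertices than \canon{u}{w}, which I would verify using that $v$ is the closest vertex to $u$ in $C_0^u$, so that $v$ is excluded from \canon{v}{w} while the vertices of \canon{v}{w} lie among those of \canon{u}{w}.

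Next I would replace the corner detour by a detour through the point $a$ where the horizontal side of \canon{u}{w} meets the left boundary of $C_0^v$. When $i = 0$ the point $a$ is exactly the left corner of \canon{v}{w}, so $\max\{|vc|+|cw|,\ |vd|+|dw|\} = |va| + |aw|$ outright. When $1 \le i \le k-1$, or $i = k$ and $|cw| \le |dw|$, Lemma~\ref{lem:ApplyFourPoints} (which itself rests on Lemma~\ref{lem:FourPoints}) gives $\max\{|vc|+|cw|,\ |vd|+|dw|\} \le |va| + |aw|$. In all these cases $\delta(u,w) \le |uv| + |va| + |aw|$. Since $va$ runs along the left boundary of $C_0^v$ it is parallel to the left boundary $u\ell$ of $C_0^u$, and since $\ell$, $a$, $w$ are collinear on the horizontal side with $\ell$ leftmost, the goal $\delta(u,w) \le |u\ell| + |\ell w|$ reduces to $|uv| + |va| \le |u\ell| + |\ell a|$. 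Substituting the lengths (all expressible through $\theta/2$ and the coordinates of $v$) collapses this to the single scalar inequality $|uv| \le x_v + c_1\, y_v$, with $u$ at the origin. I would verify this holds for every $v \in C_0^u$, with equality exactly when $v$ lies on the left boundary of $C_0^u$; this boundary case is what forces the constant $c_1 = (1+\sin(\theta/2))/\cos(\theta/2)$ and hence dictates the precise shape of the bound.

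The hardest part will be the single configuration left uncovered above, namely $i = k$ with $|cw| > |dw|$, which is exactly the case excluded from Lemma~\ref{lem:ApplyFourPoints}. Here the special property of the \graph{2} recorded in Lemma~\ref{lem:Boundary}---that the far boundary of $C_k^v$ is horizontal---should be exploited directly, either by a symmetric treatment that swaps the two corners (equivalently, reconsidering the side on which $v$ is taken) or by a dedicated estimate of $|va| + |aw|$ in this degenerate position. Two further points require care: confirming the well-foundedness of the recursion, i.e.\ that the vertex count of \canon{v}{w} genuinely decreases (for which I must ensure the vertices of \canon{v}{w} all lie in \canon{u}{w}); and discharging the closing scalar inequality $|uv| \le x_v + c_1\, y_v$ at the cone boundary, where the choice of $c_1$ is tight and where the construction meets the matching lower bound discussed at the end of the section.
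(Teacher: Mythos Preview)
Your proposal has a genuine gap exactly where you flag it: $w\in C_k^v$ with $|cw|>|dw|$. With only the theorem statement as induction hypothesis, applying induction to the pair $(v,w)$---in either canonical triangle---yields only the \emph{larger} of the two corner detours, and in this configuration that larger detour is not in general bounded by $|va|+|aw|$; this is precisely the case Lemma~\ref{lem:ApplyFourPoints} excludes. Neither of your proposed fixes closes it: you cannot ``swap sides'' because $v$ is the fixed closest vertex in $C_0^u$, not a choice; and Lemma~\ref{lem:Boundary} only tells you that the far side of $\canon{w}{v}$ is parallel to a boundary of $C_0$ (so a suitable parallelogram exists), not that the corner realising the maximum is the one on which that parallelogram sits.

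The idea you are missing is to \emph{strengthen the induction hypothesis} so that it carries emptiness information. With $y,z$ the intersections of the boundaries of $C_{2k+1}^w$ with the two sides of $\canon{u}{w}$, the paper's hypothesis has three clauses: if $ayw$ is empty then $\delta(u,w)\le |ub|+|bw|$; if $bzw$ is empty then $\delta(u,w)\le |ua|+|aw|$; otherwise the max bound holds. In Case~(c) one applies this refined hypothesis to $\canon{w}{v}$ rather than $\canon{v}{w}$: since $v$ is the closest vertex to $u$ in $C_0^u$, the interior of $\canon{u}{v}$ is empty, and the lower side-region $b''z''v$ of $\canon{w}{v}$ lies inside $\canon{u}{v}$, hence is empty as well. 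The emptiness clause then forces the bound $\delta(v,w)\le |va''|+|a''w|$ through the \emph{upper} corner $a''$, and now the parallelogram $v,a'',a,a'$ (whose existence does rely on Lemma~\ref{lem:Boundary}) finishes the step. Without tracking emptiness through the induction you have no mechanism to select this corner over $b''$.

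A secondary problem: your well-foundedness argument assumes $\canon{v}{w}\subseteq\canon{u}{w}$ so that the vertex count strictly drops, but for $i\ge 1$ the triangle $\canon{v}{w}$ is rotated and typically protrudes from $\canon{u}{w}$. The paper inducts instead on the \emph{area} (rank) of the canonical triangle, which does decrease because $v$ lies strictly inside $\canon{u}{w}$.
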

\begin{proof}
  We assume without loss of generality that $w \in C_0^u$. We prove the theorem by induction on the area of $\canon{u}{w}$ (formally, induction on the rank, when ordered by area, of the canonical triangles for all pairs of vertices). Let $a$ and $b$ be the upper left and right corners of $\canon{u}{w}$ and let $y$ and $z$ be the left and right intersections of the left and right boundaries of $\canon{u}{w}$ and the boundaries of $C_{2k+1}^w$, the cone of $w$ that contains $u$ (see Figure~\ref{fig:TriangleCases}). Our inductive hypothesis is the following, where $\delta(u,w)$ denotes the length of the shortest path from $u$ to $w$ in the \graph{2}:
  \begin{itemize}
    \item If $a y w$ is empty, then $\delta(u, w) \leq |u b| + |b w|$.
    \item If $b z w$ is empty, then $\delta(u, w) \leq |u a| + |a w|$.
    \item If neither $a y w$ nor $b z w$ is empty, then $\delta(u, w) \leq \max\{|u a| + |a w|, |u b| + |b w|\}$. 
  \end{itemize}
  Note that if both $a y w$ and $b z w$ are empty, the induction hypothesis implies that $\delta(u, w) \leq \min\{|u a| + |a w|, |u b| + |b w|\}$. 

  We first show that this induction hypothesis implies the theorem. Basic trigonometry gives us the following equalities: $|u m| = |u w| \cdot \cos \alpha$, $|m w| = |u w| \cdot \sin \alpha$, $|a m| = |b m| = |u w| \cdot \cos \alpha \tan (\theta/2)$, and $|u a| = |u b| = |u w| \cdot \cos \alpha / \cos (\theta/2)$. Thus, the induction hypothesis gives us that \[\delta(u, w) \leq|u a| + |a m| + |m w| = \left( \left(\frac{1 + \sin \left(\frac{\theta}{2}\right)}{\cos \left(\frac{\theta}{2}\right)} \right) \cdot \cos \alpha + \sin \alpha \right) \cdot |u w|.\] 

  \textbf{Base case:} $\canon{u}{w}$ has rank 1. Since the triangle is a smallest triangle, $w$ is the closest vertex to $u$ in that cone. Hence, the edge $(u,w)$ is part of the \graph{2} and $\delta(u, w) = |u w|$. From the triangle inequality, we have $|u w| \leq \min\{|u a| + |a w|, |u b| + |b w|\}$, so the induction hypothesis holds.

  \textbf{Induction step:} We assume that the induction hypothesis holds for all pairs of vertices with canonical triangles of rank up to $j$. Let $\canon{u}{w}$ be a canonical triangle of rank $j+1$.

  If $(u,w)$ is an edge in the \graph{2}, the induction hypothesis follows from the same argument as in the base case. If there is no edge between $u$ and $w$, let $v$ be the vertex closest to $u$ in $C_0^u$, and let $a'$ and $b'$ be the upper left and right corners of $\canon{u}{v}$ (see Figure~\ref{fig:TriangleCases}). By definition, $\delta(u, w) \leq |u v| + \delta(v, w)$, and by the triangle inequality, $|u v| \leq \min\{|u a'| + |a' v|, |u b'| + |b' v|\}$.

  \begin{figure}[ht]
    \begin{center}
      \includegraphics{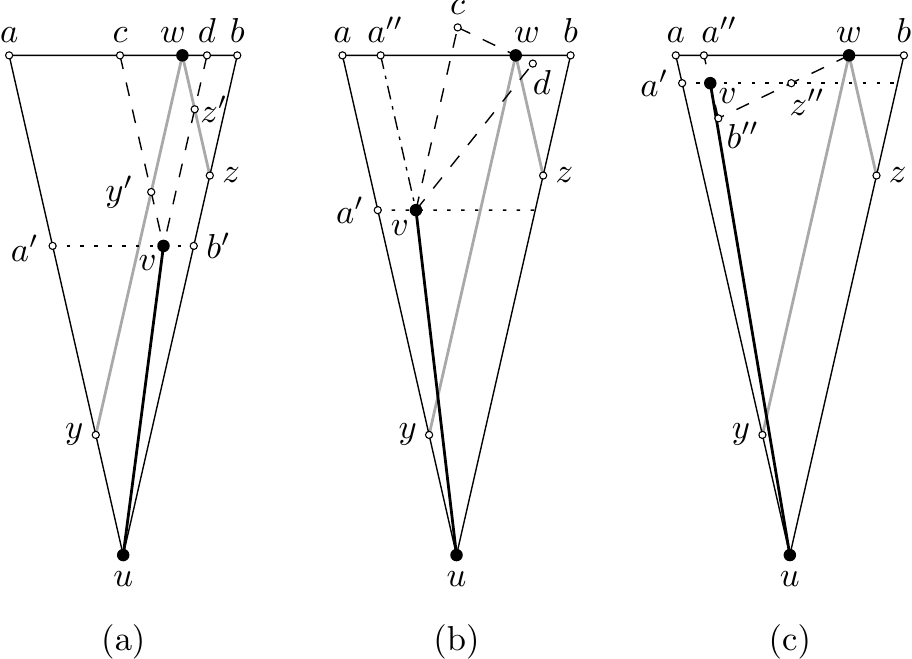}
    \end{center}
    \caption{The three cases of the induction step based on the cone of $v$ that contains $w$, in this case for the $\theta_{14}$-graph}
    \label{fig:TriangleCases}
  \end{figure}

  Without loss of generality, we assume that $v$ lies to the left of $w$. We perform a case analysis based on the cone of $v$ that contains $w$: (a) $w \in C_0^v$, (b) $w \in C_i^v$ where $1 \leq i \leq k-1$, (c) $w \in C_k^v$. 

  \textbf{Case (a):} Vertex $w$ lies in $C_0^v$ (see Figure~\ref{fig:TriangleCases}a). Let $c$ and $d$ be the upper left and right corners of $\canon{v}{w}$, and let $y'$ and $z'$ be the left and right intersections of $\canon{v}{w}$ and the boundaries of $C_{2k+1}^w$. Since $\canon{v}{w}$ has smaller area than $\canon{u}{w}$, we apply the inductive hypothesis to $\canon{v}{w}$. We need to prove all three statements of the inductive hypothesis for $\canon{u}{w}$.

  \begin{enumerate}
  \item If $a y w$ is empty, then $c y' w$ is also empty, so by induction $\delta(v, w) \leq |v d| + |d w|$. Since $v$, $d$, $b$, and $b'$ form a parallelogram, we have:
  \begin{eqnarray*}
    \delta(u, w) &\leq& |u v| + \delta(v, w) \\
    &\leq& |u b'| + |b' v| + |v d| + |d w| \\
    &=& |u b| + |b w|,
  \end{eqnarray*}
  which proves the first statement of the induction hypothesis.

  \item If $b z w$ is empty, an analogous argument proves the second statement of the induction hypothesis.

  \item If neither $a y w$ nor $b z w$ is empty, by induction we have $\delta(v, w) \leq \max\{|v c| + |c w|, |v d| + |d w|\}$. Assume, without loss of generality, that the maximum of the right hand side is attained by its second argument $|v d| + |d w|$ (the other case is similar). Since vertices $v$, $d$, $b$, and $b'$ form a parallelogram, we have that:
  \begin{eqnarray*}
    \delta(u, w) &\leq& |uv| + \delta(v, w) \\
    &\leq& |u b'| + |b' v| +  |v d| + |d w| \\
    &\leq& |u b| + |b w| \\
    &\leq& \max\{|u a| + |a w|, |u b| + |b w|\},
  \end{eqnarray*}
  which proves the third statement of the induction hypothesis. 
  \end{enumerate}

  \textbf{Case (b):} Vertex $w$ lies in $C_i^v$ where $1 \leq i \leq k-1$ (see Figure~\ref{fig:TriangleCases}b). In this case, $v$ lies in $a y w$. Therefore, the first statement of the induction hypothesis for \canon{u}{w} is vacuously true. It remains to prove the second and third statement of the induction hypothesis. Let $a''$ be the intersection of the side of $\canon{u}{w}$ opposite $u$ and the left boundary of $C_0^v$. Since $\canon{v}{w}$ is smaller than $\canon{u}{w}$, by induction we have $\delta(v, w) \leq \max\{|v c| + |c w|, |v d| + |d w|\}$. Since $w \in C_i^v$ where $1 \leq i \leq k-1$, we can apply Lemma~\ref{lem:ApplyFourPoints}. Note that point $a$ in Lemma~\ref{lem:ApplyFourPoints} corresponds to point $a''$ in this proof. Hence, we get that $\max \{|vc| + |cw|, |vd| + |dw| \} \leq |va''| + |a''w|$. Since $|u v| \leq |u a'| + |a' v|$ and $v$, $a''$, $a$, and $a'$ form a parallelogram, we have that $\delta(u, w) \leq |u a| + |a w|$, proving the induction hypothesis for \canon{u}{w}.

  \textbf{Case (c):} Vertex $w$ lies in $C_k^v$ (see Figure~\ref{fig:TriangleCases}c). Since $v$ lies in $a y w$, the first statement of the induction hypothesis for \canon{u}{w} is vacuously true. It remains to prove the second and third statement of the induction hypothesis. Let $a''$ and $b''$ be the upper and lower left corners of \canon{w}{v}, and let $z''$ be the intersection of \canon{w}{v} and the lower boundary of $C_k^v$, i.e. the cone of $v$ that contains $w$. Note that $z''$ is also the right intersection of \canon{u}{v} and \canon{w}{v}. Since $v$ is the closest vertex to $u$, \canon{u}{v} is empty. Hence, $b'' z'' v$ is empty. Since \canon{w}{v} is smaller than \canon{u}{w}, we can apply induction on it. As $b'' z'' v$ is empty, the induction hypothesis for \canon{w}{v} gives $\delta(v, w) \leq |v a''| + |a'' w|$. Since $|u v| \leq |u a'| + |a' v|$ and $v$, $a''$, $a$, and $a'$ form a parallelogram, we have that $\delta(u, w) \leq |u a| + |a w|$, proving the second and third statement of the induction hypothesis for \canon{u}{w}. 
\end{proof}

Since $((1 + \sin(\theta/2)) / \cos(\theta/2)) \cdot \cos \alpha + \sin \alpha$ is increasing for $\alpha \in [0, \theta/2]$, for $\theta \leq \pi/3$, it is maximized when $\alpha = \theta/2$, and we obtain the following corollary: 

\begin{coro}
  \label{cor:SpanningRatio}
  The \graph{2} is a $\left( 1 + 2 \sin \left( \theta/2 \right) \right)$-spanner. 
\end{coro}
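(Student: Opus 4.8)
The plan is to combine Theorem~\ref{theo:PathLength} with an elementary single-variable optimization. By definition, the spanning ratio of the \graph{2} is the supremum over all pairs of vertices $u, w$ of $\delta(u,w)/|uw|$. For any such pair, orient so that $w$ lies in a cone of $u$; the angle $\alpha$ between $uw$ and the bisector $um$ then satisfies $\alpha \in [0, \theta/2]$, since $w$ lies inside the cone of aperture $\theta$. Theorem~\ref{theo:PathLength} therefore bounds this ratio by
\[f(\alpha) = \left(\frac{1 + \sin(\theta/2)}{\cos(\theta/2)}\right)\cos\alpha + \sin\alpha,\]
so it suffices to maximize $f$ over $\alpha \in [0, \theta/2]$ and read off the value.

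First I would establish that $f$ is nondecreasing on $[0, \theta/2]$. Writing $A = (1+\sin(\theta/2))/\cos(\theta/2)$, we have $f'(\alpha) = \cos\alpha - A\sin\alpha$, which is nonnegative precisely when $\tan\alpha \le 1/A$. Since $\tan$ is increasing on $[0,\pi/2)$, it is enough to verify this at the right endpoint, i.e. to check that $\tan(\theta/2) \le \cos(\theta/2)/(1+\sin(\theta/2))$.

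The crux of the argument is this last inequality. Clearing denominators (all positive for $\theta/2 < \pi/2$) and using $\cos^2(\theta/2) = 1 - \sin^2(\theta/2)$, it reduces to $2\sin^2(\theta/2) + \sin(\theta/2) - 1 \le 0$, which factors as $(2\sin(\theta/2) - 1)(\sin(\theta/2) + 1) \le 0$. Since $\sin(\theta/2) + 1 > 0$, this holds exactly when $\sin(\theta/2) \le 1/2$, that is, when $\theta \le \pi/3$. For the \graph{2} with $k \ge 1$ we have $\theta = 2\pi/(4k+2) \le \pi/3$, so the condition is met and $f$ attains its maximum at the boundary point $\alpha = \theta/2$. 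This is precisely the threshold quoted in the text preceding the corollary.

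Finally I would evaluate the bound at the maximizing angle:
\[f(\theta/2) = \frac{1+\sin(\theta/2)}{\cos(\theta/2)}\cdot\cos(\theta/2) + \sin(\theta/2) = 1 + 2\sin(\theta/2),\]
which is exactly the claimed spanning ratio. I expect the monotonicity step — specifically verifying the trigonometric inequality that pins down the threshold $\theta \le \pi/3$ — to be the only genuine obstacle; the remaining endpoint substitution is immediate. The role of the threshold is to guarantee that the interior critical point of $f$ sits at or beyond $\theta/2$, so that the maximum over the admissible range is attained on its boundary rather than in its interior.
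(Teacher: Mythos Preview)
Your proposal is correct and follows exactly the approach sketched in the paper: maximize the bound from Theorem~\ref{theo:PathLength} over $\alpha \in [0,\theta/2]$ by observing that the expression is increasing there when $\theta \le \pi/3$, and evaluate at $\alpha = \theta/2$. The paper merely asserts the monotonicity, whereas you supply the explicit verification via the factorization $(2\sin(\theta/2)-1)(\sin(\theta/2)+1) \le 0$, which is a nice addition but not a different method.
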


The upper bounds given in Theorem~\ref{theo:PathLength} and Corollary~\ref{cor:SpanningRatio} are tight, as shown in Figure~\ref{fig:RatioTight}: we place a vertex~$v$ arbitrarily close to the upper corner of \canon{u}{w} that is furthest from $w$. Likewise, we place a vertex $v'$ arbitrarily close to the lower corner of \canon{w}{u} that is furthest from $u$. Both shortest paths between $u$ and $w$ visit either $v$ or $v'$, so the path length is arbitrarily close to $(((1 + \sin(\theta/2))/\cos(\theta/2)) \cdot \cos \alpha + \sin \alpha) \cdot |u w|$, showing that the upper bounds are tight. 

\begin{figure}[ht]
  \begin{center}
    \includegraphics{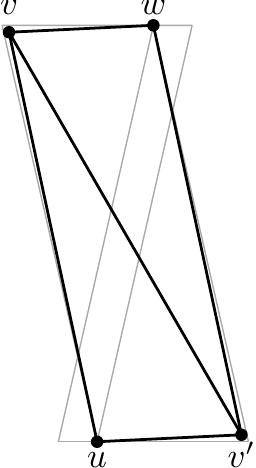}
  \end{center}
  \caption{The lower bound for the \graph{2}}
  \label{fig:RatioTight}
\end{figure}

\subsection{Generic Framework for the Spanning Proof}
In this section, we provide a generic framework for the spanning proof for the three other families of $\theta$-graphs: the \graph{3}, the \graph{4}, and the \graph{5}. This framework contains those parts of the spanning proof that are identical for all three families. In the subsequent sections, we handle the single case that depends on each specific family and determines their respective spanning ratios. 

\begin{theorem}
  \label{theo:PathLengthGeneric}
  Let $u$ and $w$ be two vertices in the plane. Let $m$ be the midpoint of the side of \canon{u}{w} opposite $u$ and let $\alpha$ be the unsigned angle between $u w$ and $u m$. There exists a path connecting $u$ and $w$ in the \graph{x} of length at most 
  \[\left( \frac{\cos \alpha}{\cos \left(\frac{\theta}{2}\right)} + \const \cdot \left(\cos \alpha \tan \left(\frac{\theta}{2}\right) + \sin \alpha\right) \right) \cdot |u w|,\] where $\const \geq 1$ is a function that depends on $x \in \{3, 4, 5\}$ and $\theta$. For the \graph{4}, \const equals $1 / (\cos (\theta/2) - \sin (\theta/2))$ and for the \graph{3} and \graph{5}, \const equals $\cos (\theta/4) /$ $(\cos (\theta/2) - \sin (3\theta/4))$.
\end{theorem}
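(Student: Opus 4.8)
The plan is to transcribe the inductive argument of Theorem~\ref{theo:PathLength} almost verbatim, inserting the weight \const on the ``perpendicular'' part of the path and isolating the single case that is sensitive to the specific family. Assuming $w \in C_0^u$, I would let $a$ and $b$ be the upper left and right corners of $\canon{u}{w}$, let $y$ and $z$ be the intersections of the boundaries of $\canon{u}{w}$ with the cone of $w$ containing $u$, and induct on the rank (area) of $\canon{u}{w}$ with the hypothesis:
\begin{itemize}
  \item if $a y w$ is empty, then $\delta(u,w) \leq |u b| + \const \cdot |b w|$;
  \item if $b z w$ is empty, then $\delta(u,w) \leq |u a| + \const \cdot |a w|$;
  \item if neither is empty, then $\delta(u,w) \leq \max\{|u a| + \const \cdot |a w|,\ |u b| + \const \cdot |b w|\}$.
\end{itemize}
Since $|u a| = |u b| = |u w|\cos\alpha/\cos(\theta/2)$ and the larger of $|a w|,|b w|$ (the corner farther from $w$) equals $|u w|(\cos\alpha\tan(\theta/2) + \sin\alpha)$, the maximum in the third bullet is exactly the claimed expression, so the hypothesis implies the theorem.

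The base case is identical to Theorem~\ref{theo:PathLength}: a rank-$1$ triangle forces the edge $(u,w)$, and the triangle inequality gives $|u w| \leq \min\{|u a| + \const \cdot |a w|,\ |u b| + \const \cdot |b w|\}$ because $\const \geq 1$. For the induction step, if $(u,w)$ is an edge we are done; otherwise let $v$ be the vertex closest to $u$ in $C_0^u$, assume without loss of generality that $v$ lies to the left of $w$, and split on the cone $C_i^v$ containing $w$. When $w \in C_0^v$ (case~(a)), I apply the induction hypothesis to the strictly smaller $\canon{v}{w}$ and transport the bound across the parallelogram formed by $v$ and the relevant corners; here the weight \const again passes through using $\const \geq 1$ to move from a term $|b d|$ to $\const \cdot |b d|$, recovering $\delta(u,w) \leq |u b| + \const \cdot |b w|$.

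The two informative cases use the geometric lemmas. When $1 \leq i \leq k-1$, or $i = k$ with $|c w| \leq |d w|$, I invoke Lemma~\ref{lem:ApplyFourPoints} to get $\max\{|v c| + |c w|,\ |v d| + |d w|\} \leq |v a''| + |a'' w|$ together with $\max\{|c w|,|d w|\} \leq |a'' w|$, where $a''$ is the intersection of the side of $\canon{u}{w}$ opposite $u$ with the left boundary of $C_0^v$. Writing $|v c| + \const \cdot |c w| = (|v c| + |c w|) + (\const - 1)|c w|$ and bounding $(\const-1)|c w| \leq (\const-1)|a'' w|$ (this is where $\const \geq 1$ is essential) yields $\max\{|v c| + \const \cdot |c w|,\ |v d| + \const \cdot |d w|\} \leq |v a''| + \const \cdot |a'' w|$; the same parallelogram step as in case~(a), again using $\const \geq 1$, then gives $\delta(u,w) \leq |u a| + \const \cdot |a w|$.

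The remaining case, $w \in C_k^v$ with $|c w| > |d w|$, is exactly the one the framework cannot close by itself and is the main obstacle: here $w \notin C_0^v$, so Lemma~\ref{lem:CalculationCase} applies and reduces the step to the scalar inequality~(\ref{ineq:CalculationCase1}), namely $\const \geq (\cos\gamma - \sin\beta)/(\cos(\theta/2 - \beta) - \sin(\theta/2 + \gamma))$, for the angles $\beta,\gamma$ that can arise. The generic proof therefore closes by invoking Lemma~\ref{lem:CalculationCase} to obtain $\max\{|v c| + \const \cdot |c w|,\ |v d| + \const \cdot |d w|\} \leq |v a''| + \const \cdot |a'' w|$ and running the parallelogram argument one last time. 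What remains genuinely new is the family-specific verification, deferred to the later sections, that the stated value of \const for each $x \in \{3,4,5\}$ dominates the right-hand side of~(\ref{ineq:CalculationCase1}) over the admissible range of $\beta$ and $\gamma$; I expect maximizing that right-hand side to be the only real work, since everything else is a weighted copy of the \graph{2} argument.
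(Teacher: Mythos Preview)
Your outline tracks the paper's approach closely---induction on the rank of $\canon{u}{w}$, the parallelogram step, and the use of Lemmas~\ref{lem:ApplyFourPoints} and~\ref{lem:CalculationCase}---but there is one genuine gap and one unnecessary complication.

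\textbf{The gap: a missing case.} You list as ``the remaining case'' only $w \in C_k^v$ with $|cw| > |dw|$. In the \graph{2} that is indeed the last case, because by Lemma~\ref{lem:Boundary} the line through $a$ and $b$ (perpendicular to the bisector of $C_0^u$) is exactly the boundary between $C_k^v$ and $C_{k+1}^v$; since $v$ lies strictly below that line, $w$ can never fall into $C_{k+1}^v$. For $x \in \{3,4,5\}$ that boundary property fails: the boundary between $C_k^v$ and $C_{k+1}^v$ now makes a strictly smaller angle with the vertical than $\pi/2$, so when $v$ sits near the upper-left corner of $\canon{u}{w}$ it is perfectly possible that $w \in C_{k+1}^v$. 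This is the paper's Case~(d), and it is handled (together with Case~(c)) via Lemma~\ref{lem:CalculationCase} in the family-specific sections. Your proposal does not account for this case at all, so as written the induction is incomplete.

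\textbf{The complication: the three-bullet hypothesis.} You carry over the refined hypothesis from Theorem~\ref{theo:PathLength} involving the regions $ayw$ and $bzw$. The paper does not: for the generic framework it uses only the single statement $\delta(u,w) \leq \max\{|ua| + \const \cdot |aw|,\ |ub| + \const \cdot |bw|\}$. The reason the refined hypothesis paid off in the \graph{2} was Case~(c) of Theorem~\ref{theo:PathLength}, where emptiness of the small triangle $b'' z'' v$ (a consequence of Lemma~\ref{lem:Boundary} and the emptiness of $\canon{u}{v}$) let one invoke the \emph{stronger} bullet on $\canon{w}{v}$. That trick is unavailable here precisely because Lemma~\ref{lem:Boundary} fails for $x \in \{3,4,5\}$, so the extra bullets give you no leverage in Cases~(c) and~(d); they just add bookkeeping in Case~(a). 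Dropping them to the single $\max$ statement simplifies the argument with no loss.
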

\begin{proof}
  We assume without loss of generality that $w \in C_0^u$. We prove the theorem by induction on the area of \canon{u}{w} (formally, induction on the rank, when ordered by area, of the canonical triangles for all pairs of vertices). Let $a$ and $b$ be the upper left and right corners of \canon{u}{w}. Our inductive hypothesis is the following, where $\delta(u,w)$ denotes the length of the shortest path from $u$ to $w$ in the \graph{x}: $\delta(u, w) \leq \max\{|u a| + \const \cdot |a w|, |u b| + \const \cdot |b w|\}$. 

  We first show that this induction hypothesis implies the theorem. Basic trigonometry gives us the following equalities: $|u m| = |u w| \cdot \cos \alpha$, $|m w| = |u w| \cdot \sin \alpha$, $|a m| = |b m| = |u w| \cdot \cos \alpha \tan (\theta/2)$, and $|u a| = |u b| = |u w| \cdot \cos \alpha / \cos (\theta/2)$. Thus the induction hypothesis gives that \[\delta(u, w) \leq |u a| + \const \cdot (|a m| + |m w|) = \left( \frac{\cos \alpha}{\cos \left(\frac{\theta}{2}\right)} + \const \cdot \left(\cos \alpha \tan \left(\frac{\theta}{2}\right) + \sin \alpha\right) \right) \cdot |u w|.\] 

  \textbf{Base case:} \canon{u}{w} has rank 1. Since the triangle is a smallest triangle, $w$ is the closest vertex to $u$ in that cone. Hence, the edge $(u,w)$ is part of the \graph{x} and $\delta(u, w) = |u w|$. From the triangle inequality and the fact that $\const \geq 1$, we have $|u w| \leq \max\{|u a| + \const \cdot |a w|, |u b| + \const \cdot |b w|\}$, so the induction hypothesis holds.

  \textbf{Induction step:} We assume that the induction hypothesis holds for all pairs of vertices with canonical triangles of rank up to $j$. Let \canon{u}{w} be a canonical triangle of rank $j+1$.

  If $(u,w)$ is an edge in the \graph{x}, the induction hypothesis follows from the same argument as in the base case. If there is no edge between $u$ and $w$, let $v$ be the vertex closest to $u$ in \canon{u}{w}, and let $a'$ and $b'$ be the upper left and right corners of \canon{u}{v} (see Figure~\ref{fig:TriangleCasesGeneric}). By definition, $\delta(u, w) \leq |u v| + \delta(v, w)$, and by the triangle inequality, $|u v| \leq \min\{|u a'| + |a' v|, |u b'| + |b' v|\}$.

  \begin{figure}[ht]
    \begin{center}
      \includegraphics{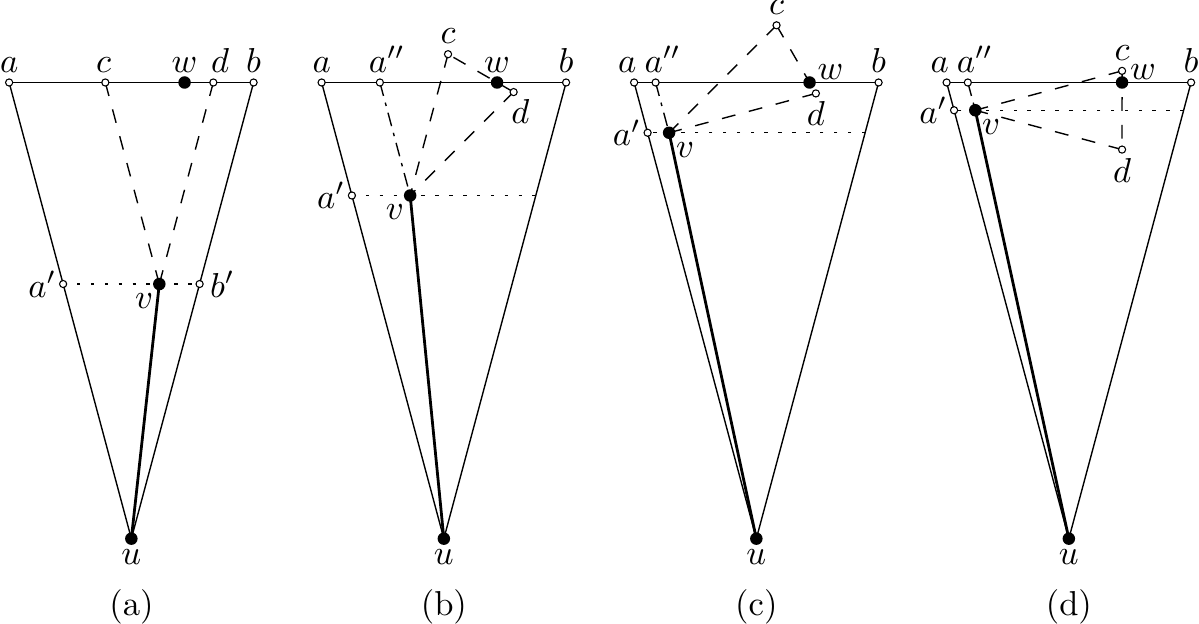}
    \end{center}
    \caption{The four cases of the induction step based on the cone of $v$ that contains $w$, in this case for the $\theta_{12}$-graph}
    \label{fig:TriangleCasesGeneric}
  \end{figure}

  Without loss of generality, we assume that $v$ lies to the left of $w$. We perform a case analysis based on the cone of $v$ that contains $w$, where $c$ and $d$ are the left and right corners of \canon{v}{w}, opposite to $v$: (a) $w \in C_0^v$, (b) $w \in C_i^v$ where $1 \leq i \leq k-1$, or $i = k$ and $|c w| \leq |d w|$, (c) $w \in C_k^v$ and $|c w| > |d w|$, \mbox{(d) $w \in C_{k+1}^v$}. 

  \textbf{Case (a):} Vertex $w$ lies in $C_0^v$ (see Figure~\ref{fig:TriangleCasesGeneric}a). Since \canon{v}{w} has smaller area than \canon{u}{w}, we apply the inductive hypothesis to \canon{v}{w}. Hence we have $\delta(v, w) \leq \max\{|v c| + \const \cdot |c w|, |v d| + \const \cdot |d w|\}$. Since $v$ lies to the left of $w$, the maximum of the right hand side is attained by its first argument, $|v c| + \const \cdot |c w|$. Since vertices $v$, $c$, $a$, and $a'$ form a parallelogram, and $\const \geq 1$, we have that
  \begin{eqnarray*}
    \delta(u, w) &\leq& |uv| + \delta(v, w) \\
		 &\leq& |u a'| + |a' v| + |v c| + \const \cdot |c w| \\
		 &\leq& |u a| + \const \cdot |a w| \\
		 &\leq& \max\{|u a| + \const \cdot |a w|, |u b| + \const \cdot |b w|\},
  \end{eqnarray*}
  which proves the induction hypothesis. 

  \textbf{Case (b):} Vertex $w$ lies in $C_i^v$, where $1 \leq i \leq k-1$, or $i = k$ and $|c w| \leq |d w|$ (see Figure~\ref{fig:TriangleCasesGeneric}b). Let $a''$ be the intersection of the side of $\canon{u}{w}$ opposite $u$ and the left boundary of $C_0^v$. Since $\canon{v}{w}$ is smaller than $\canon{u}{w}$, by induction we have $\delta(v, w) \leq \max\{|v c| + \const \cdot |c w|, |v d| + \const \cdot |d w|\}$. Since $w \in C_i^v$ where $1 \leq i \leq k-1$, or $i = k$ and $|c w| \leq |d w|$, we can apply Lemma~\ref{lem:ApplyFourPoints}. Note that point $a$ in Lemma~\ref{lem:ApplyFourPoints} corresponds to point $a''$ in this proof. Hence, we get that $\max\left\{|vc| + |cw|, |vd| + |dw|\right\} \leq |va''| + |a''w|$ and $\max\left\{|cw|, |dw|\right\} \leq |a''w|$. Since $\const \geq 1$, this implies that $\max \{ |vc| + \const \cdot |cw|,$ $|vd| + \const \cdot |dw| \} \leq |va''| + \const \cdot |a''w|$. Since $|u v| \leq |u a'| + |a' v|$ and $v$, $a''$, $a$, and $a'$ form a parallelogram, we have that $\delta(u, w) \leq |u a| + \const \cdot |a w|$, proving the induction hypothesis for \canon{u}{w}.

  \textbf{Case (c) and (d)} Vertex $w$ lies in $C_k^v$ and $|c w| > |d w|$, or $w$ lies in $C_{k+1}^v$ (see Figures~\ref{fig:TriangleCasesGeneric}c and d). Let $a''$ be the intersection of the side of $\canon{u}{w}$ opposite $u$ and the left boundary of $C_0^v$. Since \canon{v}{w} is smaller than \canon{u}{w}, we can apply induction on it. The actual application of the induction hypothesis varies for the three families of $\theta$-graphs and, using Lemma~\ref{lem:CalculationCase}, determines the value of \const. Hence, these cases are discussed in the spanning proofs of the three families. 
\end{proof}

\subsection{Upper Bound on the \Graph{4}}
\label{subsec:Theta4k+4}
In this section, we improve the upper bounds on the spanning ratio of the \graph{4}, for any integer $k \geq 1$. 

\begin{theorem}
  \label{theo:PathLength4k+4}
  Let $u$ and $w$ be two vertices in the plane. Let $m$ be the midpoint of the side of \canon{u}{w} opposite $u$ and let $\alpha$ be the unsigned angle between $u w$ and $u m$. There exists a path connecting $u$ and $w$ in the \graph{4} of length at most 
  \[\left( \frac{\cos \alpha}{\cos \left(\frac{\theta}{2}\right)} + \frac{\cos \alpha \tan \left(\frac{\theta}{2}\right) + \sin \alpha}{\cos \left(\frac{\theta}{2}\right) - \sin \left(\frac{\theta}{2}\right)} \right) \cdot |u w|.\] 
\end{theorem}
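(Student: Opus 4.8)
The plan is to invoke the generic framework of Theorem~\ref{theo:PathLengthGeneric} with $x = 4$ and $\const = 1/(\cos(\theta/2) - \sin(\theta/2))$. That theorem already disposes of the base case, the reduction of $\delta(u,w)$ to $|uv| + \delta(v,w)$, and Cases~(a) and~(b); moreover, once the inductive hypothesis $\delta(u, w) \leq \max\{|u a| + \const \cdot |a w|, |u b| + \const \cdot |b w|\}$ is established, it converts this into the closed-form bound claimed in the statement via the trigonometric identities for $|u a|$, $|a m|$, and $|m w|$. So all that remains is to treat the deferred Cases~(c) and~(d) and to check that the stated value of $\const$ is large enough, i.e.\ that it satisfies inequality~(\ref{ineq:CalculationCase1}) of Lemma~\ref{lem:CalculationCase} in each of these two cases.

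The structural fact I would exploit is special to the \graph{4}: since $\theta = 2\pi/(4k+4)$, we have $(k+1)\theta = \pi/2$, so the cone $C_{k+1}^v$ has a horizontal bisector (and the side of $\canon{v}{w}$ opposite $v$ is vertical), while $C_k^v$ has its bisector at angle $\theta$ above the horizontal. Because $v \in \canon{u}{w}$ lies below the horizontal line through $w$ (the apex $u$ of $\canon{u}{w}$ sits below its opposite side, which passes through $w$), the segment $v w$ points up and to the right, and the point $a$ of Lemma~\ref{lem:CalculationCase} — the intersection of that horizontal line with the left boundary of $C_0^v$ — lies to the left of $w$ on a horizontal segment $w a$. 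Tracking these directions is exactly what pins down the relationship between $\beta = \angle a w v$ and $\gamma$.

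In Cases~(c) and~(d) I would apply the inductive hypothesis to $\canon{v}{w}$ to obtain $\delta(v, w) \leq \max\{|v c| + \const \cdot |c w|, |v d| + \const \cdot |d w|\}$, invoke Lemma~\ref{lem:CalculationCase} (whose point $a$ is the point $a''$ of the induction) to bound this by $|v a''| + \const \cdot |a'' w|$, and then combine $|u v| \leq |u a'| + |a' v|$ with the parallelogram $v, a'', a, a'$ to conclude $\delta(u, w) \leq |u a| + \const \cdot |a w|$, which is the inductive hypothesis. The computation that makes this work is the verification of~(\ref{ineq:CalculationCase1}). In Case~(d), $w \in C_{k+1}^v$ with its horizontal bisector forces $\beta = \gamma$, and substituting $\beta = \gamma$ into the denominator together with the factorization
\[\cos\!\left(\tfrac{\theta}{2} - \gamma\right) - \sin\!\left(\tfrac{\theta}{2} + \gamma\right) = \left(\cos\tfrac{\theta}{2} - \sin\tfrac{\theta}{2}\right)(\cos\gamma - \sin\gamma)\]
cancels the numerator $\cos\gamma - \sin\gamma$, so the right-hand side of~(\ref{ineq:CalculationCase1}) equals $1/(\cos(\theta/2) - \sin(\theta/2))$ identically, for every admissible $\gamma$. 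In Case~(c), $w \in C_k^v$ with $|c w| > |d w|$ forces $\beta + \gamma = \theta$ with $\gamma \in (0, \theta/2)$; the same factorization handles the denominator, and after clearing the positive factors $\cos(\theta/2) - \sin(\theta/2)$ and $\cos\gamma - \sin\gamma$ the required inequality reduces to $\sin(\theta - \gamma) \geq \sin\gamma$, which holds because $\gamma < \theta/2$.

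The hard part will be the second paragraph: correctly extracting the identities $\beta = \gamma$ (Case~d) and $\beta + \gamma = \theta$ (Case~c) from the cone geometry, since everything downstream is a short trigonometric reduction once these relations are in hand. It is worth noting that Case~(d) meets~(\ref{ineq:CalculationCase1}) with equality for every admissible $\gamma$, which is precisely what singles out $\const = 1/(\cos(\theta/2) - \sin(\theta/2))$ as the exact constant forced by this family rather than merely a convenient upper bound.
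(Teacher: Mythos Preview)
Your proposal is correct and follows essentially the same approach as the paper: invoke the generic framework of Theorem~\ref{theo:PathLengthGeneric}, identify the angle relations $\gamma = \theta - \beta$ in Case~(c) and $\gamma = \beta$ in Case~(d), and verify Lemma~\ref{lem:CalculationCase}'s hypothesis for $\const = 1/(\cos(\theta/2)-\sin(\theta/2))$. The only cosmetic difference is that for Case~(c) the paper argues the right-hand side of~(\ref{ineq:CalculationCase1}) is decreasing in $\beta$ and evaluates at $\beta = \theta/2$, whereas you use the factorization $\cos(\theta/2-\gamma)-\sin(\theta/2+\gamma) = (\cos(\theta/2)-\sin(\theta/2))(\cos\gamma-\sin\gamma)$ to reduce directly to $\sin(\theta-\gamma)\geq\sin\gamma$; both are valid and reach the same constant.
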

\begin{proof}
  We apply Theorem~\ref{theo:PathLengthGeneric} using $\const = 1 / (\cos (\theta/2) - \sin (\theta/2))$. It remains to handle Case (c), where $w \in C_k^v$ and $|c w| > |d w|$, and Case (d), where $w \in C_{k+1}^v$. 

  Recall that $c$ and $d$ are the left and right corners of \canon{v}{w}, opposite to $v$, and $a''$ is the intersection of the side of $\canon{u}{w}$ opposite $u$ and the left boundary of $C_0^v$. Let $\beta$ be $\angle a'' w v$ and let $\gamma$ be the angle between $v w$ and the bisector of \canon{v}{w}. Since \canon{v}{w} is smaller than \canon{u}{w}, the induction hypothesis gives an upper bound on $\delta(v, w)$. Since $|u v| \leq |u a'| + |a' v|$ and $v$, $a''$, $a$, and $a'$ form a parallelogram, we need to show that $\delta(v, w) \leq |v a''| + \const \cdot |a'' w|$ for both cases in order to complete the proof. 

  \begin{figure}[ht]
    \begin{center}
      \includegraphics{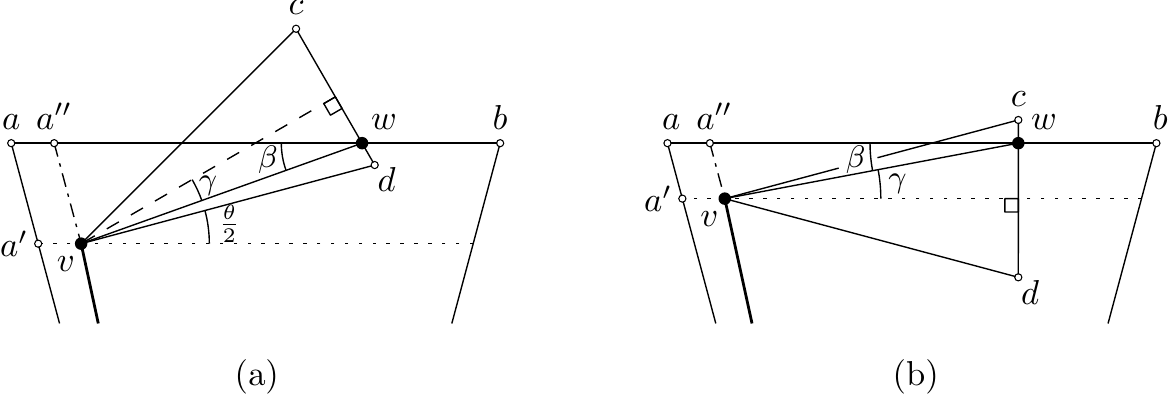}
    \end{center}
    \caption{The remaining cases of the induction step for the \graph{4}: (a) $w$ lies in $C_k^v$ and $|c w| > |d w|$, (b) $w$ lies in $C_{k+1}^v$}
    \label{fig:SpanningProof4k+4}
  \end{figure}

  \textbf{Case (c):} When $w$ lies in $C_k^v$ and $|c w| > |d w|$, the induction hypothesis for \canon{v}{w} gives $\delta(v, w) \leq |v c| + \const \cdot |c w|$ (see Figure~\ref{fig:SpanningProof4k+4}a). We note that $\gamma = \theta - \beta$. Hence, the inequality follows from Lemma~\ref{lem:CalculationCase} when $\const \geq (\cos (\theta - \beta) - \sin \beta) / (\cos (\theta/2 - \beta) - \sin (3\theta/2 - \beta))$. Since this function is decreasing in $\beta$ for $\theta/2 \leq \beta \leq \theta$, it is maximized when $\beta$ equals $\theta/2$. Hence, $\const$ needs to be at least $(\cos (\theta/2) - \sin (\theta/2)) / (1 - \sin \theta)$, which can be rewritten to $1 / (\cos (\theta/2) - \sin (\theta/2))$. 

  \textbf{Case (d):} When $w$ lies in $C_{k+1}^v$, $w$ lies above the bisector of \canon{v}{w} (see Figure~\ref{fig:SpanningProof4k+4}b) and the induction hypothesis for \canon{v}{w} gives $\delta(v, w) \leq |w d| + \const \cdot |d v|$. We note that $\gamma = \beta$. Hence, the inequality follows from Lemma~\ref{lem:CalculationCase} when $\const \geq (\cos \beta - \sin \beta) / (\cos (\theta/2 - \beta) - \sin (\theta/2 + \beta))$, which is equal to $1 / (\cos (\theta/2) - \sin (\theta/2))$. 
\end{proof}

Since $\cos \alpha / \cos (\theta/2) + (\cos \alpha \tan (\theta/2) + \sin \alpha) / (\cos (\theta/2) - \sin (\theta/2))$ is increasing for $\alpha \in [0, \theta/2]$, for $\theta \leq \pi/4$, it is maximized when $\alpha = \theta/2$, and we obtain the following corollary: 

\begin{corollary}
  \label{cor:SpanningRatio4k+4}
  The \graph{4} is a $\left( 1 + \frac{2 \sin \left( \frac{\theta}{2} \right)}{\cos \left( \frac{\theta}{2} \right) - \sin \left( \frac{\theta}{2} \right)} \right)$-spanner. 
\end{corollary}

Furthermore, we observe that the proof of Theorem~\ref{theo:PathLength4k+4} follows the same path as the $\theta$-routing algorithm follows: if the direct edge to the destination is part of the graph, it follows this edge, and if it is not, it follows the edge to the closest vertex in the cone that contains the destination. 

\begin{corollary}
  \label{cor:Routing4k+4}
  The $\theta$-routing algorithm is $\left( 1 + \frac{2 \sin \left( \frac{\theta}{2} \right)}{\cos \left( \frac{\theta}{2} \right) - \sin \left( \frac{\theta}{2} \right)} \right)$-competitive on the \\ \graph{4}. 
\end{corollary}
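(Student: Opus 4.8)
The plan is to reduce the statement to the path-length bound already established in Theorem~\ref{theo:PathLength4k+4} by observing that the recursive path constructed there is precisely the walk that $\theta$-routing traces. Concretely, I would argue that in the induction of Theorem~\ref{theo:PathLengthGeneric} (specialized to the \graph{4} in Theorem~\ref{theo:PathLength4k+4}) the destination $w$ is held fixed throughout, and at each vertex $u$ on the path the next vertex is chosen by the rule: take the edge $(u,w)$ if it exists, and otherwise take the edge to the vertex $v$ closest to $u$ in the cone $C_0^u$ containing $w$. This is verbatim the $\theta$-routing rule. The only point that needs checking is that ``closest vertex in \canon{u}{w}'' (as phrased in the induction) coincides with ``closest vertex in the cone of $u$ that contains $w$'' (as phrased in the routing algorithm): since the edge of the \graph{4} in cone $C_0^u$ goes to the vertex whose projection onto the bisector is closest, and $w$ itself is a candidate, that vertex has projection no larger than $w$ and therefore lies inside \canon{u}{w}. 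Hence the two notions agree and the constructed path is exactly the $\theta$-routing path.

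Granting this correspondence, the length bound of Theorem~\ref{theo:PathLength4k+4} applies to the $\theta$-routing path without change, so its length is at most $\left( \cos\alpha / \cos(\theta/2) + (\cos\alpha \tan(\theta/2) + \sin\alpha)/(\cos(\theta/2) - \sin(\theta/2)) \right) \cdot |u w|$. To convert this into a competitive ratio I would invoke the definition of $c$-competitiveness: the underlying graph $G$ is the straight-line embedding of the complete graph, whose shortest path from $u$ to $w$ is the single edge of length $|u w|$. Dividing the routing-path length by $|u w|$ therefore yields the competitive ratio directly, and it is bounded by the bracketed expression above.

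Finally, I would maximize that expression over $\alpha \in [0, \theta/2]$ exactly as in the passage preceding Corollary~\ref{cor:SpanningRatio4k+4}: the function is increasing on this interval for $\theta \leq \pi/4$, so it attains its maximum at $\alpha = \theta/2$, giving $1 + 2 \sin(\theta/2) / (\cos(\theta/2) - \sin(\theta/2))$. This is the claimed competitiveness.

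I expect the only genuine obstacle to be the identification of the induction path with the routing walk, i.e. verifying that every choice the induction makes is a legal, deterministic $\theta$-routing step toward the same fixed destination; once that correspondence is nailed down, both the length bound and the maximization are immediate consequences of results already proved. A secondary point worth stating explicitly is that competitiveness here is measured against the complete graph (denominator $|u w|$), which is precisely why the routing ratio matches the spanning-ratio bound rather than being strictly smaller.
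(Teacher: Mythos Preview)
Your proposal is correct and follows essentially the same approach as the paper: the paper's justification is precisely the observation that the inductive proof of Theorem~\ref{theo:PathLength4k+4} traces the $\theta$-routing path (follow the direct edge to $w$ if present, otherwise the edge to the closest vertex in the cone containing $w$), so the path-length bound applies to the routing path and the maximization over $\alpha$ from Corollary~\ref{cor:SpanningRatio4k+4} yields the stated ratio. Your additional check that the closest vertex in $C_0^u$ necessarily lies in \canon{u}{w} is a nice point the paper leaves implicit.
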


\subsection{Upper Bounds on the \Graph{3} and \Graph{5}}
\label{subsec:Theta4k+35}
In this section, we improve the upper bounds on the spanning ratio of the \graph{3} and the \graph{5}, for any integer $k \geq 1$. 

\begin{theorem}
  \label{theo:PathLength4k+3}
  Let $u$ and $w$ be two vertices in the plane. Let $m$ be the midpoint of the side of \canon{u}{w} opposite $u$ and let $\alpha$ be the unsigned angle between $u w$ and $u m$. There exists a path connecting $u$ and $w$ in the \graph{3} of length at most 
  \[\left( \frac{\cos \alpha}{\cos \left(\frac{\theta}{2}\right)} + \frac{\left( \cos \alpha \tan \left(\frac{\theta}{2}\right) + \sin \alpha \right) \cdot \cos \left(\frac{\theta}{4}\right)}{\cos \left(\frac{\theta}{2}\right) - \sin \left(\frac{3\theta}{4}\right)} \right) \cdot |u w|.\] 
\end{theorem}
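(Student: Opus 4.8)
The plan is to invoke the generic framework of Theorem~\ref{theo:PathLengthGeneric} with $\const = \cos(\theta/4)/(\cos(\theta/2) - \sin(3\theta/4))$, so that Cases (a) and (b) are already discharged and only the family-specific Cases (c) and (d) remain. Throughout, recall that the side of $\canon{u}{w}$ opposite $u$ is horizontal (it is perpendicular to the vertical bisector of $C_0^u$), that $a''$ lies on this side to the left of $w$, and that $\beta = \angle a''wv$ is measured from the leftward horizontal at $w$. The one computation that makes the \graph{3} behave differently from the \graph{4} is the orientation of the cones of $v$ relative to this horizontal: since $\theta = 2\pi/(4k+3)$, we have $\pi/2 - k\theta = 3\theta/4$, so the bisector of $C_k^v$ makes angle $3\theta/4$ with the horizontal and the bisector of $C_{k+1}^v$ makes angle $\theta/4$ below it. This $\theta/4$-versus-$3\theta/4$ offset is exactly what introduces the factors $\cos(\theta/4)$ and $\sin(3\theta/4)$. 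I would first record that, because $v \in \canon{u}{w}$ lies strictly below the horizontal through $w$, the segment $vw$ is always directed into the upper half-plane at $v$; this fixes all the sign choices below.

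For Case (c), where $w \in C_k^v$ and $|cw| > |dw|$, the point $w$ sits below the bisector of $C_k^v$ by the angle $\gamma$, so $vw$ makes angle $3\theta/4 - \gamma$ with the horizontal and hence $\beta = 3\theta/4 - \gamma$, i.e. $\gamma = 3\theta/4 - \beta$. Substituting this into Lemma~\ref{lem:CalculationCase}, the required inequality becomes
\[\const \geq \frac{\cos\left(\frac{3\theta}{4} - \beta\right) - \sin\beta}{\cos\left(\frac{\theta}{2} - \beta\right) - \sin\left(\frac{5\theta}{4} - \beta\right)}.\]
The plan is to rewrite both numerator and denominator with the sum-to-product identities; they share the factor $\sin(\pi/4 - 3\theta/8)$, and after cancelling it the right-hand side collapses to $\sin(\pi/4 + 3\theta/8 - \beta)/\sin(\pi/4 - 7\theta/8 + \beta)$, which is manifestly decreasing in $\beta$. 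Hence the supremum over the admissible range $\beta \in (\theta/4, 3\theta/4)$ is attained in the limit $\beta \to \theta/4$ (equivalently $\gamma \to \theta/2$, the boundary between $C_k^v$ and $C_{k+1}^v$), where the bound equals $(\cos(\theta/2) - \sin(\theta/4))/(\cos(\theta/4) - \sin\theta)$. The final step is a trigonometric identity verifying that this quantity equals $\cos(\theta/4)/(\cos(\theta/2) - \sin(3\theta/4))$, i.e. our chosen $\const$.

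For Case (d), where $w \in C_{k+1}^v$, the fact that $vw$ points into the upper half-plane forces $w$ to lie above the bisector of $C_{k+1}^v$ (so $\gamma > \theta/4$), whence $vw$ makes angle $\gamma - \theta/4$ with the horizontal and $\beta = \gamma - \theta/4$, i.e. $\gamma = \beta + \theta/4$. Feeding this into Lemma~\ref{lem:CalculationCase}, the requirement becomes
\[\const \geq \frac{\cos\left(\beta + \frac{\theta}{4}\right) - \sin\beta}{\cos\left(\frac{\theta}{2} - \beta\right) - \sin\left(\frac{3\theta}{4} + \beta\right)}.\]
Here the same sum-to-product rewriting makes the $\beta$-dependent factor cancel from numerator and denominator, leaving the constant value $\sin(\pi/4 + \theta/8)/\sin(\pi/4 - 5\theta/8)$, which is precisely the Case (c) boundary value and hence again equals $\const$. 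Thus the inequality of Lemma~\ref{lem:CalculationCase} holds (with equality) for every configuration in Case (d), and combining with the parallelogram step $|uv| \leq |ua'| + |a'v|$ already present in the framework closes the induction.

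The main obstacle is entirely trigonometric rather than geometric: establishing the monotonicity of the Case (c) expression in $\beta$ (which is what pins the worst case to the cone boundary) and verifying the closing identity $(\cos(\theta/2) - \sin(\theta/4))/(\cos(\theta/4) - \sin\theta) = \cos(\theta/4)/(\cos(\theta/2) - \sin(3\theta/4))$, together with its Case~(d) counterpart. Both reduce to routine sum-to-product manipulations once the correct angle relations $\gamma = 3\theta/4 - \beta$ and $\gamma = \beta + \theta/4$ are in hand; the only place demanding geometric care is justifying those relations, in particular the sign choices, from the position of $v$ strictly below the top side of $\canon{u}{w}$ and the orientation identity $\pi/2 - k\theta = 3\theta/4$.
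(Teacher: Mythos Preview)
Your proposal is correct and follows essentially the same route as the paper: invoke the generic framework with $\const = \cos(\theta/4)/(\cos(\theta/2)-\sin(3\theta/4))$, then in Cases (c) and (d) use the cone-orientation identity $\pi/2 - k\theta = 3\theta/4$ to obtain $\gamma = 3\theta/4 - \beta$ and $\gamma = \theta/4 + \beta$ respectively, and feed these into Lemma~\ref{lem:CalculationCase}. The only difference is that you spell out the sum-to-product simplification that makes the monotonicity in Case~(c) and the $\beta$-independence in Case~(d) transparent, whereas the paper simply asserts these facts; your added detail is sound and the resulting worst-case value at $\beta = \theta/4$ indeed equals $\const$.
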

\begin{proof}
  We apply Theorem~\ref{theo:PathLengthGeneric} using $\const = \cos (\theta/4) / (\cos (\theta/2) - \sin (3\theta/4))$. It remains to handle Case (c), where $w \in C_k^v$ and $|c w| > |d w|$, and Case (d), where $w \in C_{k+1}^v$. 

  Recall that $c$ and $d$ are the left and right corners of \canon{v}{w}, opposite to $v$, and $a''$ is the intersection of the side of $\canon{u}{w}$ opposite $u$ and the left boundary of $C_0^v$. Let $\beta$ be $\angle a'' w v$ and let $\gamma$ be the angle between $v w$ and the bisector of \canon{v}{w}. Since \canon{v}{w} is smaller than \canon{u}{w}, the induction hypothesis gives an upper bound on $\delta(v, w)$. Since $|u v| \leq |u a'| + |a' v|$ and $v$, $a''$, $a$, and $a'$ form a parallelogram, we need to show that $\delta(v, w) \leq |v a''| + \const \cdot |a'' w|$ for both cases in order to complete the proof. 

  \begin{figure}[ht]
    \begin{center}
      \includegraphics{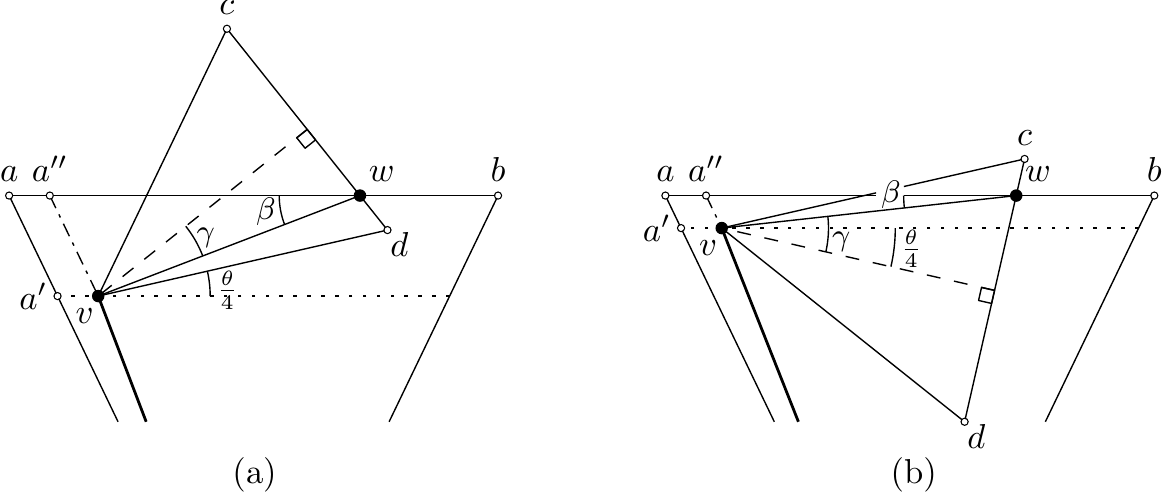}
    \end{center}
    \caption{The remaining cases of the induction step for the \graph{3}: (a) $w$ lies in $C_k^v$ and $|c w| > |d w|$, (b) $w$ lies in $C_{k+1}^v$}
    \label{fig:SpanningProof4k+3}
  \end{figure}

  \textbf{Case (c):} When $w$ lies in $C_k^v$ and $|c w| > |d w|$, the induction hypothesis for \canon{v}{w} gives $\delta(v, w) \leq |v c| + \const \cdot |c w|$ (see Figure~\ref{fig:SpanningProof4k+3}a). We note that $\gamma = 3\theta/4 - \beta$. Hence, the inequality follows from Lemma~\ref{lem:CalculationCase} when $\const \geq (\cos (3\theta/4 - \beta) - \sin \beta) / (\cos (\theta/2 - \beta) - \sin (5\theta/4 - \beta))$. Since this function is decreasing in $\beta$ for $\theta/4 \leq \beta \leq 3\theta/4$, it is maximized when $\beta$ equals $\theta/4$. Hence, $\const$ needs to be at least $(\cos (\theta/2) - \sin (\theta/4)) / (\cos (\theta/4) - \sin \theta)$, which is equal to $\cos (\theta/4) / (\cos (\theta/2) - \sin (3\theta/4))$. 

  \textbf{Case (d):} When $w$ lies in $C_{k+1}^v$, $w$ lies above the bisector of \canon{v}{w} (see Figure~\ref{fig:SpanningProof4k+3}b) and the induction hypothesis for \canon{v}{w} gives $\delta(v, w) \leq |w d| + \const \cdot |d v|$. We note that $\gamma = \theta/4 + \beta$. Hence, the inequality follows from Lemma~\ref{lem:CalculationCase} when $\const \geq (\cos (\theta/4 + \beta) - \sin \beta) / (\cos (\theta/2 - \beta) - \sin (3\theta/4 + \beta))$, which is equal to $\cos (\theta/4) / (\cos (\theta/2) - \sin (3\theta/4))$. 
\end{proof}

\begin{theorem}
  \label{theo:PathLength4k+5}
  Let $u$ and $w$ be two vertices in the plane. Let $m$ be the midpoint of the side of \canon{u}{w} opposite $u$ and let $\alpha$ be the unsigned angle between $u w$ and $u m$. There exists a path connecting $u$ and $w$ in the \graph{5} of length at most 
  \[\left( \frac{\cos \alpha}{\cos \left(\frac{\theta}{2}\right)} + \frac{\left( \cos \alpha \tan \left(\frac{\theta}{2}\right) + \sin \alpha \right) \cdot \cos \left(\frac{\theta}{4}\right)}{\cos \left(\frac{\theta}{2}\right) - \sin \left(\frac{3\theta}{4}\right)} \right) \cdot |u w|.\] 
\end{theorem}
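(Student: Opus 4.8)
The plan is to follow the proof of Theorem~\ref{theo:PathLength4k+3} essentially verbatim, applying Theorem~\ref{theo:PathLengthGeneric} with the same constant $\const = \cos(\theta/4)/(\cos(\theta/2) - \sin(3\theta/4))$; one checks first that $\const \geq 1$, as the framework requires. Cases (a) and (b) and the base case are already discharged by the generic framework, so the whole task is to verify Cases (c) and (d). In each, combining the induction hypothesis on $\canon{v}{w}$ with the parallelogram $v, a'', a, a'$ and the bound $|uv| \leq |ua'| + |a'v|$ reduces the claim to $\delta(v,w) \leq |va''| + \const \cdot |a''w|$; since the induction hypothesis bounds $\delta(v,w)$ by $\max\{|vc| + \const \cdot |cw|, |vd| + \const \cdot |dw|\}$, this is exactly inequality~(\ref{ineq:CalculationCase2}) of Lemma~\ref{lem:CalculationCase}. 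So it suffices to check inequality~(\ref{ineq:CalculationCase1}) for this $\const$, which amounts to computing the relation between $\beta = \angle a''wv$ and the angle $\gamma$ between $vw$ and the bisector of $\canon{v}{w}$, and then maximizing the right-hand side of~(\ref{ineq:CalculationCase1}) over the admissible range of $\beta$.

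For Case (c) ($w \in C_k^v$ with $|cw| > |dw|$), the angle bookkeeping of Lemma~\ref{lem:ApplyFourPoints}, using $\theta = 2\pi/(4k+5)$ and $\pi/2 - k\theta = (5/4)\theta$, should give $\gamma = 5\theta/4 - \beta$. Substituting into~(\ref{ineq:CalculationCase1}) yields the ratio $(\cos(5\theta/4 - \beta) - \sin\beta)/(\cos(\theta/2 - \beta) - \sin(7\theta/4 - \beta))$; I expect this to be decreasing in $\beta$ on $[3\theta/4, 5\theta/4]$ and hence maximized at $\beta = 3\theta/4$, where it equals $\cos(\theta/4)/(\cos(\theta/2) - \sin(\theta/4))$. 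Because $\sin(3\theta/4) > \sin(\theta/4)$, this value is strictly below $\const$, so Case (c) is not the binding case -- a first point of contrast with the \graph{3}, where the analogous case was tight.

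The real work, and the genuine departure from Theorem~\ref{theo:PathLength4k+3}, is Case (d) ($w \in C_{k+1}^v$). Since $v \in \canon{u}{w}$, the vertex $v$ lies no higher than $w$, so $\beta \geq 0$; and since for the \graph{5} one has $\pi/2 - (k+1)\theta = \theta/4 > 0$, the bisector of $\canon{v}{w}$ points \emph{above} the horizontal, whereas it points below the horizontal for the \graph{3}. Consequently $w$ can lie either above or below that bisector, and I would split Case (d) on the sign of $\beta - \theta/4$. When $w$ is above the bisector ($\beta \geq \theta/4$) the relation is $\gamma = \beta - \theta/4$, and the right-hand side of~(\ref{ineq:CalculationCase1}) collapses to the constant $\cos(\theta/4)/(\cos(\theta/2) - \sin(\theta/4)) \leq \const$. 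When $w$ is below the bisector ($\beta < \theta/4$) the relation is $\gamma = \theta/4 - \beta$, and the right-hand side becomes $(\cos(\theta/4 - \beta) - \sin\beta)/(\cos(\theta/2 - \beta) - \sin(3\theta/4 - \beta))$, which is decreasing in $\beta$ and hence maximized at $\beta = 0$, where it equals exactly $\const = \cos(\theta/4)/(\cos(\theta/2) - \sin(3\theta/4))$. Thus the extremal configuration is $w$ just below the bisector with $v$ level with $w$, and it produces precisely the claimed constant.

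The main obstacle is therefore not any single hard estimate but correctly recognizing this two-way split in Case (d): for the \graph{3} the bisector of $C_{k+1}^v$ sits below the horizontal and forces $w$ above it, giving one relation, while for the \graph{5} the bisector has crossed above the horizontal and it is the below-bisector sub-case that dictates $\const$. Once the two monotonicity claims and the trigonometric identity reducing the above-bisector (and Case (c)) ratio to $\cos(\theta/4)/(\cos(\theta/2) - \sin(\theta/4))$ are confirmed -- routine manipulations of the kind already performed in Lemma~\ref{lem:CalculationCase} -- feeding $\const$ back through Theorem~\ref{theo:PathLengthGeneric} gives the stated path-length bound.
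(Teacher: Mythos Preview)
Your proposal is correct and follows the same approach as the paper's proof: apply Theorem~\ref{theo:PathLengthGeneric} with $\const = \cos(\theta/4)/(\cos(\theta/2) - \sin(3\theta/4))$, handle Case~(c) via Lemma~\ref{lem:CalculationCase} with $\gamma = 5\theta/4 - \beta$, and split Case~(d) into two sub-cases. Your above/below-bisector dichotomy (equivalently, the sign of $\beta - \theta/4$) is exactly the paper's split on $|cw| \lessgtr |dw|$, and your identification of the below-bisector sub-case at $\beta = 0$ as the binding configuration yielding $\const$ matches the paper; in the above-bisector sub-case you compute the ratio as the strictly smaller constant $\cos(\theta/4)/(\cos(\theta/2) - \sin(\theta/4))$, whereas the paper states it equals $\const$ itself---your value is the correct one, but since it is $\leq \const$ either way the argument goes through.
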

\begin{proof}
  We apply Theorem~\ref{theo:PathLengthGeneric} using $\const = \cos (\theta/4) / (\cos (\theta/2) - \sin (3\theta/4))$. It remains to handle Case (c), where $w \in C_k^v$ and $|c w| > |d w|$, and Case (d), where $w \in C_{k+1}^v$. 

  Recall that $c$ and $d$ are the left and right corners of \canon{v}{w}, opposite to $v$, and $a''$ is the intersection of the side of $\canon{u}{w}$ opposite $u$ and the left boundary of $C_0^v$. Let $\beta$ be $\angle a'' w v$ and let $\gamma$ be the angle between $v w$ and the bisector of \canon{v}{w}. Since \canon{v}{w} is smaller than \canon{u}{w}, the induction hypothesis gives an upper bound on $\delta(v, w)$. Since $|u v| \leq |u a'| + |a' v|$ and $v$, $a''$, $a$, and $a'$ form a parallelogram, we need to show that $\delta(v, w) \leq |v a''| + \const \cdot |a'' w|$ for both cases in order to complete the proof. 

  \begin{figure}[ht]
    \begin{center}
      \includegraphics{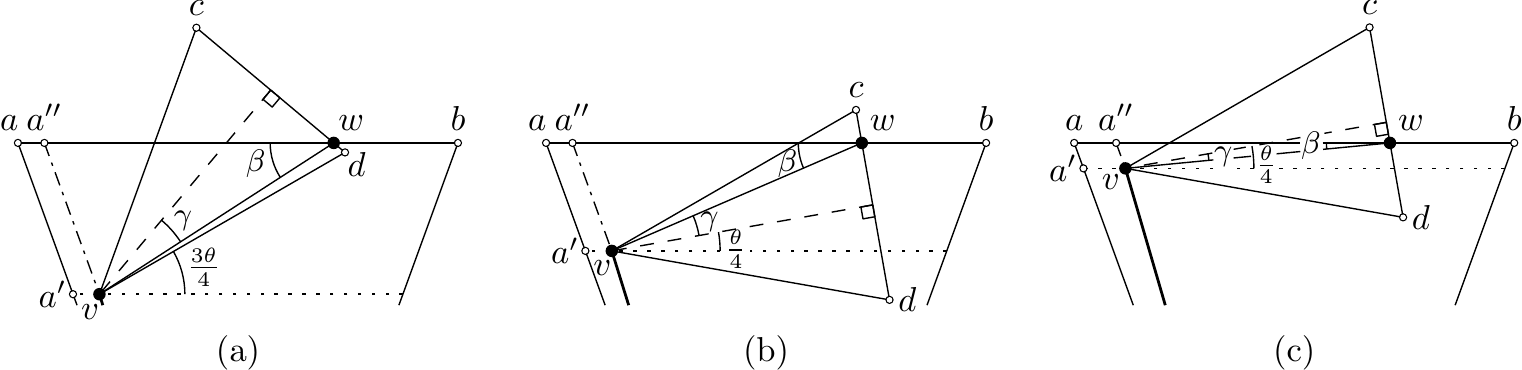}
    \end{center}
    \caption{The remaining cases of the induction step for the \graph{5}: (a) $w$ lies in $C_k^v$ and $|c w| > |d w|$, (b) $w$ lies in $C_{k+1}^v$ and $|c w| < |d w|$, (c) $w$ lies in $C_{k+1}^v$ and $|c w| \geq |d w|$}
    \label{fig:SpanningProof4k+5}
  \end{figure}

  \textbf{Case (c):} When $w$ lies in $C_k^v$ and $|c w| > |d w|$, the induction hypothesis for \canon{v}{w} gives $\delta(v, w) \leq |v c| + \const \cdot |c w|$ (see Figure~\ref{fig:SpanningProof4k+5}a). We note that $\gamma = 5\theta/4 - \beta$. Hence, the inequality follows from Lemma~\ref{lem:CalculationCase} when $\const \geq (\cos (5\theta/4 - \beta) - \sin \beta) / (\cos (\theta/2 - \beta) - \sin (7\theta/4 - \beta))$. Since this function is decreasing in $\beta$ for $3\theta/4 \leq \beta \leq 5\theta/4$, it is maximized when $\beta$ equals $3\theta/4$. Hence, $\const$ needs to be at least $(\cos (\theta/2) - \sin (3\theta/4)) / (\cos (\theta/4) - \sin \theta)$, which is less than $\cos (\theta/4) / (\cos (\theta/2) - \sin (3\theta/4))$. 

  \textbf{Case (d):} When $w$ lies in $C_{k+1}^v$, the induction hypothesis for \canon{v}{w} gives $\delta(v, w) \leq \max\{|v c| + \const \cdot |c w|, |v d| + \const \cdot |d w|\}$. If $|c w| < |d w|$ (see Figure~\ref{fig:SpanningProof4k+5}b), the induction hypothesis for \canon{v}{w} gives $\delta(v, w) \leq |v d| + \const \cdot |d w|$. We note that $\gamma = \beta - \theta/4$. Hence, the inequality follows from Lemma~\ref{lem:CalculationCase} when $\const \geq (\cos (\beta - \theta/4) - \sin \beta) / (\cos (\theta/2 - \beta) - \sin (\theta/4 + \beta))$, which is equal to $\cos (\theta/4) / (\cos (\theta/2) - \sin (3\theta/4))$. 

  If $|c w| \geq |d w|$, the induction hypothesis for \canon{v}{w} gives $\delta(v, w) \leq |v c| + \const \cdot |c w|$ (see Figure~\ref{fig:SpanningProof4k+5}c). We note that $\gamma = \theta/4 - \beta$. Hence, the inequality follows from Lemma~\ref{lem:CalculationCase} when $\const \geq (\cos (\theta/4 - \beta) - \sin \beta) / (\cos (\theta/2 - \beta) - \sin (3\theta/4 - \beta))$. Since this function is decreasing in $\beta$ for $0 \leq \beta \leq \theta/4$, it is maximized when $\beta$ equals $0$. Hence, $\const$ needs to be at least $\cos (\theta/4) / (\cos (\theta/2) - \sin (3\theta/4))$.
\end{proof}

By looking at two vertices $u$ and $w$ in the \graph{3} and the \graph{5}, we can see that when the angle between $u w$ and the bisector of \canon{u}{w} is $\alpha$, the angle between $w u$ and the bisector of \canon{w}{u} is $\theta/2 - \alpha$. Hence the worst case spanning ratio corresponds to the minimum of the spanning ratio when looking at \canon{u}{w} and the spanning ratio when looking at \canon{w}{u}. 

\begin{theorem}
  \label{theo:SpanningRatio4k+3,5}
  The \graph{3} and \graph{5} are $\frac{\cos \left(\frac{\theta}{4}\right)}{\cos \left(\frac{\theta}{2}\right) - \sin \left(\frac{3\theta}{4}\right)}$-spanners. 
\end{theorem}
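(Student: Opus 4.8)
The plan is to combine the two path-length estimates that the graph provides for a pair of vertices $u$ and $w$, namely the one obtained from \canon{u}{w} and the one obtained from \canon{w}{u}, and then to optimize over their relative orientation. Writing
\[
f(\alpha) = \frac{\cos\alpha}{\cos\left(\frac{\theta}{2}\right)} + \const\cdot\left(\cos\alpha\tan\left(\frac{\theta}{2}\right) + \sin\alpha\right), \qquad \const = \frac{\cos\left(\frac{\theta}{4}\right)}{\cos\left(\frac{\theta}{2}\right) - \sin\left(\frac{3\theta}{4}\right)},
\]
Theorem~\ref{theo:PathLength4k+3} (for the \graph{3}) and Theorem~\ref{theo:PathLength4k+5} (for the \graph{5}) guarantee a path of length at most $f(\alpha)\cdot|uw|$, where $\alpha$ is the angle between $uw$ and the bisector of \canon{u}{w}. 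First I would apply the very same theorem with the roles of $u$ and $w$ interchanged, i.e. to the canonical triangle \canon{w}{u}: as noted in the paragraph preceding the statement, the relevant angle there is $\theta/2-\alpha$, so this produces a second $u$--$w$ path, of length at most $f(\theta/2-\alpha)\cdot|uw|$.

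Since both paths lie in the graph, the spanning ratio for the pair $u,w$ is bounded by the shorter of the two, so it suffices to bound $\min\{f(\alpha),f(\theta/2-\alpha)\}$ over $\alpha\in[0,\theta/2]$ and, in particular, to find its maximum. The key structural step is to establish that $f$ is increasing on $[0,\theta/2]$ for the apertures in question (here $k\geq 1$, so $\theta\leq 2\pi/7$ across both families); this is the analogue of the monotonicity remarks already used for the \graph{2} and \graph{4}, and I would verify it by checking that $f'(\alpha) = -\left(1/\cos(\theta/2) + \const\tan(\theta/2)\right)\sin\alpha + \const\cos\alpha$ stays nonnegative up to $\alpha=\theta/2$ (it suffices to check the endpoint, since $f''<0$). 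Granting monotonicity, for $\alpha\le\theta/4$ we have $f(\alpha)\le f(\theta/2-\alpha)$ so the minimum is $f(\alpha)\le f(\theta/4)$, while for $\alpha\ge\theta/4$ the minimum is $f(\theta/2-\alpha)\le f(\theta/4)$; hence the maximum of the minimum is attained at the symmetric configuration $\alpha=\theta/4$, where both triangles yield the common bound $f(\theta/4)$.

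It then remains to evaluate $f(\theta/4)$ and to check that it collapses to exactly \const. Substituting $\alpha=\theta/4$ and clearing $\cos(\theta/2)$ reduces the claim $f(\theta/4)=\const$ to showing that $\cos(\theta/2)\bigl(1 - \cos(\theta/4)\tan(\theta/2) - \sin(\theta/4)\bigr)$ equals $\cos(\theta/2) - \sin(3\theta/4)$, which is precisely the sine addition identity $\sin(\theta/2)\cos(\theta/4) + \cos(\theta/2)\sin(\theta/4) = \sin(3\theta/4)$. I expect the monotonicity verification to be the main obstacle: unlike the \graph{2} and \graph{4} cases, the optimum here is interior (at $\alpha=\theta/4$) rather than at the endpoint $\alpha=\theta/2$, so one must genuinely rule out a larger value of the minimum elsewhere on the interval, whereas the concluding trigonometric simplification is routine. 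Because \const takes the identical value for both families, this single argument settles the \graph{3} and the \graph{5} simultaneously.
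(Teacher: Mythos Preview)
Your proposal is correct and follows essentially the same approach as the paper: take the minimum of the bounds coming from \canon{u}{w} and \canon{w}{u}, use monotonicity of $f$ on $[0,\theta/2]$ to conclude the worst case occurs at $\alpha=\theta/4$, and then simplify $f(\theta/4)$ to $\const$. The paper merely asserts the monotonicity and the final trigonometric simplification, whereas you outline how to verify both, but the structure of the argument is identical.
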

\begin{proof}
  The spanning ratio of the \graph{3} and the \graph{5} is at most 
  \[ \min \left\{
  \begin{array}{l}
    \frac{\cos \alpha}{\cos \left(\frac{\theta}{2}\right)} + \frac{\left( \cos \alpha \tan \left(\frac{\theta}{2}\right) + \sin \alpha \right) \cdot \cos \left(\frac{\theta}{4}\right)}{\cos \left(\frac{\theta}{2}\right) - \sin \left(\frac{3\theta}{4}\right)}, \\
    \frac{\cos \left(\frac{\theta}{2} - \alpha\right)}{\cos \left(\frac{\theta}{2}\right)} + \frac{\left( \cos \left(\frac{\theta}{2} - \alpha\right) \tan \left(\frac{\theta}{2}\right) + \sin \left(\frac{\theta}{2} - \alpha\right) \right) \cdot \cos \left(\frac{\theta}{4}\right)}{\cos \left(\frac{\theta}{2}\right) - \sin \left(\frac{3\theta}{4}\right)}
  \end{array}
  \right\}.
  \]

  Since $\cos \alpha / \cos (\theta/2) + \const \cdot (\cos \alpha \tan (\theta/2) + \sin \alpha)$ is increasing for $\alpha \in [0, \theta/2]$, for $\theta \leq 2\pi/7$, the minimum of these two functions is maximized when the two functions are equal, i.e. when $\alpha = \theta/4$. Thus the \graph{3} and the \graph{5} have spanning ratio at most 
  \begin{eqnarray*}
    \frac{\cos \left(\frac{\theta}{4}\right)}{\cos \left(\frac{\theta}{2}\right)} + \frac{\left( \cos \left(\frac{\theta}{4}\right) \tan \left(\frac{\theta}{2}\right) + \sin \left(\frac{\theta}{4}\right) \right) \cdot \cos \left(\frac{\theta}{4}\right)}{\cos \left(\frac{\theta}{2}\right) - \sin \left(\frac{3\theta}{4}\right)}
    &=& \frac{\cos \left(\frac{\theta}{4}\right)}{\cos \left(\frac{\theta}{2}\right) - \sin \left(\frac{3\theta}{4}\right)}. 
  \end{eqnarray*} 
\end{proof}

Furthermore, we observe that the proofs of Theorem~\ref{theo:PathLength4k+3} and Theorem~\ref{theo:PathLength4k+5} follow the same path as the $\theta$-routing algorithm follows. Since in the case of routing, we are forced to consider the canonical triangle with the source as apex, the arguments that decreased the spanning ratio cannot be applied. Hence, we obtain the following corollary. 

\begin{corollary}
  \label{cor:Routing4k+3,5}
  The $\theta$-routing algorithm is $\left( 1 + \frac{2 \sin \left(\frac{\theta}{2}\right) \cos \left(\frac{\theta}{4}\right)}{\cos \left(\frac{\theta}{2}\right) - \sin \left(\frac{3\theta}{4}\right)} \right)$-competitive on the \graph{3} and the \graph{5}. 
\end{corollary}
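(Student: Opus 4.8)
The plan is to reuse the path-length bounds of Theorems~\ref{theo:PathLength4k+3} and \ref{theo:PathLength4k+5} directly, observing that the path they construct is literally the one produced by $\theta$-routing, and then to extract the worst case over the single free parameter $\alpha$ rather than over the minimum of two expressions as was done for the spanning ratio.

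First I would verify the key structural claim underlying the generic framework of Theorem~\ref{theo:PathLengthGeneric}: in its induction step, whenever the edge $(u,w)$ is absent, the proof advances to $v$, the vertex closest to $u$ in the cone $C_0^v$'s counterpart at $u$, i.e. the cone of $u$ containing $w$, and then recurses on the pair $(v,w)$. This is exactly the rule of the $\theta$-routing algorithm: follow the direct edge when it exists, and otherwise follow the edge to the closest vertex in the cone containing the destination. Hence the concatenated path $u \to v \to \cdots \to w$ built by the induction is precisely the $\theta$-routing path, and Theorems~\ref{theo:PathLength4k+3} and \ref{theo:PathLength4k+5} bound its length by
\[ \left( \frac{\cos \alpha}{\cos \left(\frac{\theta}{2}\right)} + \const \cdot \left( \cos \alpha \tan \left(\frac{\theta}{2}\right) + \sin \alpha \right) \right) \cdot |u w|, \]
with $\const = \cos(\theta/4) / (\cos(\theta/2) - \sin(3\theta/4))$. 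Since the underlying graph $G$ is the complete Euclidean graph, the shortest path in $G$ between $u$ and $w$ is the straight segment of length $|uw|$; thus the displayed quantity is exactly the ratio of the routing-path length to the shortest-path length, and maximizing it over $\alpha$ upper-bounds the competitiveness of $\theta$-routing.

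The decisive point is that, for routing, the worst case must be taken over this one expression alone. In the spanning proof (Theorem~\ref{theo:SpanningRatio4k+3,5}) we were free to analyze either \canon{u}{w} or \canon{w}{u} and keep the smaller bound, which let us trade the angle $\alpha$ at $u$ against the complementary angle $\theta/2 - \alpha$ at $w$ and settle the worst case at $\alpha = \theta/4$. When routing from $u$ to $w$, however, the first step is forced by the cone structure at the source, i.e. it is dictated by \canon{u}{w}; the walk toward $u$ described by \canon{w}{u} is a different path and is not the one the algorithm takes. Consequently the minimization trick is unavailable, and I must instead maximize the displayed expression over all admissible $\alpha \in [0, \theta/2]$.

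Finally I would carry out this maximization. As already observed for the spanning corollaries, the function $\cos\alpha / \cos(\theta/2) + \const \cdot (\cos\alpha \tan(\theta/2) + \sin\alpha)$ is increasing on $[0, \theta/2]$ for $\theta \le 2\pi/7$, so its maximum is attained at $\alpha = \theta/2$. Substituting $\alpha = \theta/2$ gives $\cos\alpha / \cos(\theta/2) = 1$ and $\cos\alpha \tan(\theta/2) + \sin\alpha = 2\sin(\theta/2)$, whence the bound collapses to $1 + 2\const\sin(\theta/2) = 1 + 2\sin(\theta/2)\cos(\theta/4) / (\cos(\theta/2) - \sin(3\theta/4))$, matching the claimed competitive ratio. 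The only genuine obstacle is the structural argument of the first step --- convincing oneself that the inductively constructed path coincides with the routing path and that the source-apex viewpoint is forced --- after which the trigonometry is the same routine monotonicity computation used throughout the upper-bound section.
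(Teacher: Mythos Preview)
Your proposal is correct and follows essentially the same approach as the paper: observe that the inductive proofs of Theorems~\ref{theo:PathLength4k+3} and~\ref{theo:PathLength4k+5} trace the $\theta$-routing path, note that routing forces the source-apex canonical triangle so the $\min$ over $\alpha$ and $\theta/2-\alpha$ from Theorem~\ref{theo:SpanningRatio4k+3,5} is unavailable, and maximize the bound at $\alpha=\theta/2$. The paper states this in a single informal paragraph before the corollary; your write-up simply spells out the same reasoning in more detail.
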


\section{Lower Bounds}
\label{sec:LowerBounds}
In this section, we provide lower bounds for the \graph{3}, the \graph{4}, and the \graph{5}. For each of the families, we construct a lower bound example by extending the shortest path between two vertices $u$ and $w$. For brevity, we describe only how to extend one of the shortest paths between these vertices. To extend all shortest paths between $u$ and $w$, the same transformation is applied to all equivalent paths or canonical triangles. 

For example, when constructing the lower bound for the \graph{3}, our first step is to ensure that there is no edge between $u$ and $w$. To this end, the proof of Theorem~\ref{theo:LowerBound4k+3} states that we place a vertex $v_1$ in the corner of \canon{u}{w} that is furthest from $w$. Placing only this single vertex, however, does not prevent the edge $u w$ from being present, as $u$ is still the closest vertex in \canon{w}{u}. Hence, we also place a vertex in the corner of \canon{w}{u} that is furthest from $u$. Since these two modifications are essentially the same, but applied to different canonical triangles, we describe only the placement of one of these vertices. The full result of each step is shown in the accompanying figures.

\subsection{Lower Bounds on the \Graph{3}}
In this section, we construct a lower bound on the spanning ratio of the \graph{3}, for any integer $k \geq 1$. 

\begin{theorem}
  \label{theo:LowerBound4k+3}
  The worst case spanning ratio of the \graph{3} is at least \[\frac{3\cos\left(\frac{\theta}{4}\right)+\cos\left(\frac{3\theta}{4}\right)+\sin\left(\frac{\theta}{2}\right)+\sin \theta +\sin\left(\frac{3\theta}{2}\right)}{3\cos\left(\frac{\theta}{2}\right)+\cos\left(\frac{3\theta}{2}\right)}.\] 
\end{theorem}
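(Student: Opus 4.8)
The plan is to exhibit an explicit family of point sets on which the shortest $u$--$w$ path is forced to follow a self-similar zigzag whose total length, relative to $|uw|$, tends to the stated expression. I would begin by fixing two vertices $u$ and $w$ with $w \in C_0^u$, positioned so that the angle $\alpha$ between $uw$ and the bisector of $\canon{u}{w}$ takes the value that optimizes the construction. The quarter-angle terms $\cos(\theta/4)$ and $\cos(3\theta/4)$ in the target ratio strongly suggest the symmetric choice $\alpha = \theta/4$, the same angle that governs the worst case in the upper-bound proof (cf.\ Theorem~\ref{theo:PathLength4k+3}).

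The first step is to guarantee that the direct edge $uw$ is absent. Following the recipe described at the start of Section~\ref{sec:LowerBounds}, I would place a vertex $v_1$ arbitrarily close to the corner of $\canon{u}{w}$ farthest from $w$; this vertex projects closer to $u$ along the bisector than $w$ does, so the $\theta$-graph rule connects $u$ to $v_1$ rather than to $w$. Because $u$ would still be the closest vertex to $w$ in $\canon{w}{u}$, I would apply the identical transformation to the reversed triangle, placing a mirror vertex near the corner of $\canon{w}{u}$ farthest from $u$. This destroys the edge $uw$ and forces every $u$--$w$ path to detour through the freshly inserted vertices.

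The heart of the argument is self-similarity. After the first-level vertices are placed, the residual task of joining $v_1$ to $w$ reproduces, up to a scaling and a reflection, the original $u$--$w$ configuration. I would make this explicit by computing the contraction factor $\rho < 1$, namely the ratio of the residual straight-line distance to $|uw|$, together with the length of the edges the construction adds at the first level. Iterating the placement yields an infinite sequence $v_1, v_2, \dots$ converging to $w$, so the shortest-path length is a geometric series in $\rho$; summing it and dividing by $|uw|$ should collapse, after trigonometric simplification, to the claimed ratio. In this accounting the denominator $3\cos(\theta/2)+\cos(3\theta/2)$ plays the role of the net progress toward $w$ over one full period of the figure, and the numerator is the corresponding path length; the sines at $\theta/2$, $\theta$, and $3\theta/2$ indicate that one period is assembled from three elementary zigzag steps whose turn angles are successive multiples of $\theta/2$.

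The main obstacle is proving that this zigzag is truly a shortest path, that is, that the $\theta$-graph construction introduces no shortcut edge that would let a path skip ahead in the sequence. For every inserted vertex I would have to determine, cone by cone, exactly which edges the closest-vertex rule creates and verify that each such edge stays within the forced detour; this is where the general-position assumptions and the precise value $\theta = 2\pi/(4k+3)$ are needed. A secondary, more mechanical difficulty is pinning down the exact first-level lengths and the contraction factor so that the geometric series evaluates to the stated closed form, a delicate but routine computation. Finally, I would confirm that the constructed point set is in general position and that the same transformation applied to all symmetric paths leaves no shorter alternative, completing the lower bound.
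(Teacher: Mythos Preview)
Your proposal rests on a structural misconception. The paper's construction is not an infinite self-similar iteration summed as a geometric series; it is a \emph{finite} three-step construction with intermediate vertices $v_1,v_2,v_3$ (together with their symmetric copies). After placing $v_1$ in the far corner of $\canon{u}{w}$, the pair $(v_1,w)$ is not a scaled copy of $(u,w)$: the vertex $w$ lies in a different cone of $v_1$, the relevant angle changes, and the geometry of each subsequent step is distinct. Concretely, the contraction ratios $|v_1w|/|uw| = \sin(3\theta/4)/\cos(\theta/2)$, $|v_2w|/|v_1w| = \sin(\theta/4)/\cos(\theta/2)$, and $|v_3w|/|v_2w| = \sin(\theta/4)/\sin\theta$ are all different, so there is no single $\rho$ to iterate. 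The claimed ratio is simply the explicit sum $|uv_1|+|v_1v_2|+|v_2v_3|+|v_3w|$ divided by $|uw|$, simplified; it is not the closed form of a geometric series, and your reading of the denominator as ``net progress over one period'' does not match what is happening.

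More seriously, your plan to iterate indefinitely would fail for precisely the reason the paper stops at $v_3$: placing the next vertex in the obvious corner creates a shortcut edge from $u$ back into the detour, collapsing the lower bound. The paper flags this explicitly (``we cannot place $v_3$ in the lower right corner of $\canon{v_2}{w}$ since this would cause an edge between $u$ and $v_3$ to be added''), and it dictates the special placement of $v_3$ so that $\canon{v_3}{w}$ has the same orientation as $\canon{v_2}{w}$. Your remark that shortcut avoidance is ``the main obstacle'' understates the issue: it is not a verification to be done after the fact but the constraint that limits the construction to three levels and fixes its shape. Drop the self-similar picture; instead, place $v_1,v_2,v_3$ as described, read off the angles $\angle v_2v_1w=\angle v_3v_2w=\theta/4$, $\angle v_3wv_2=3\theta/4$, $\angle v_2v_3w=\pi-\theta$, and compute the four segment lengths directly by the sine law.
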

\begin{proof}
We construct the lower bound example by extending the shortest path between two vertices $u$ and $w$ in three steps. We describe only how to extend one of the shortest paths between these vertices. To extend all shortest paths, the same modification is performed in each of the analogous cases, as shown in Figure~\ref{fig:LowerBound4k+3}. 

\begin{figure}[ht]
  \begin{center}
    \includegraphics{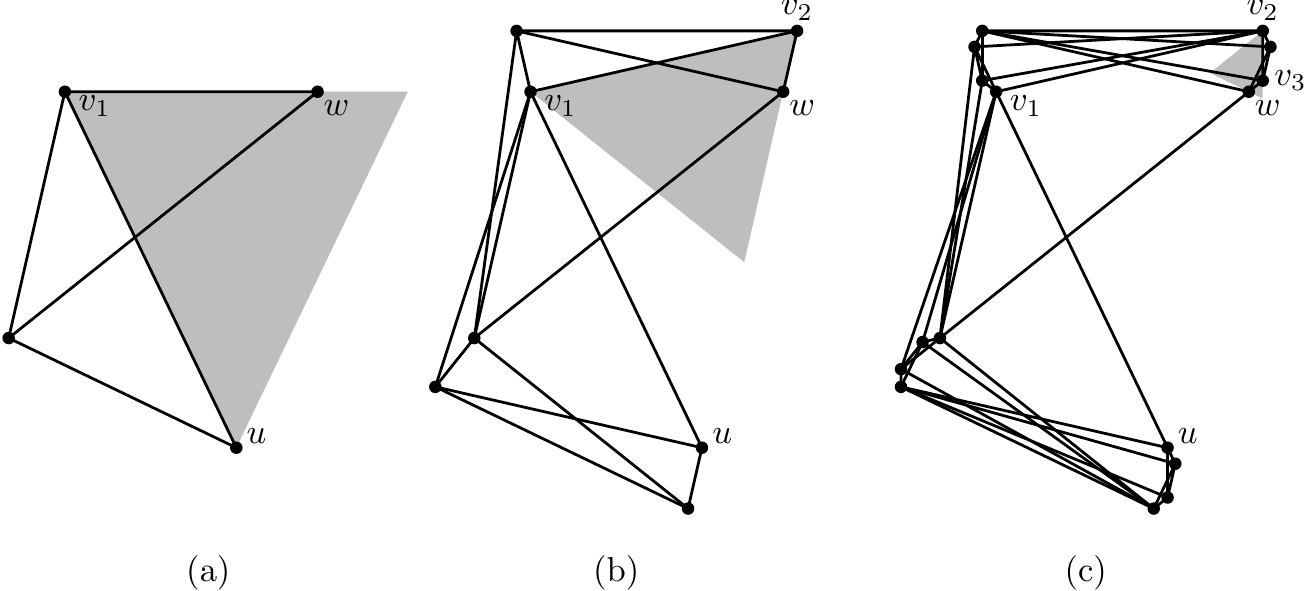}
  \end{center}
  \vspace{-1em}
  \caption{The construction of the lower bound for the \graph{3}}
  \label{fig:LowerBound4k+3}
\end{figure}

First, we place $w$ such that the angle between $u w$ and the bisector of the cone of $u$ that contains $w$ is $\theta/4$. Next, we ensure that there is no edge between $u$ and $w$ by placing a vertex $v_1$ in the upper corner of \canon{u}{w} that is furthest from $w$ (see Figure~\ref{fig:LowerBound4k+3}a). Next, we place a vertex $v_2$ in the corner of \canon{v_1}{w} that lies outside \canon{u}{w} (see Figure~\ref{fig:LowerBound4k+3}b). Finally, to ensure that there is no edge between $v_2$ and $w$, we place a vertex $v_3$ in \canon{v_2}{w} such that \canon{v_2}{w} and \canon{v_3}{w} have the same orientation (see Figure~\ref{fig:LowerBound4k+3}c). Note that we cannot place $v_3$ in the lower right corner of \canon{v_2}{w} since this would cause an edge between $u$ and $v_3$ to be added, creating a shortcut to $w$. 

One of the shortest paths in the resulting graph visits $u$, $v_1$, $v_2$, $v_3$, and $w$. Thus, to obtain a lower bound for the \graph{3}, we compute the length of this path. 

\begin{figure}[ht]
  \begin{center}
    \includegraphics{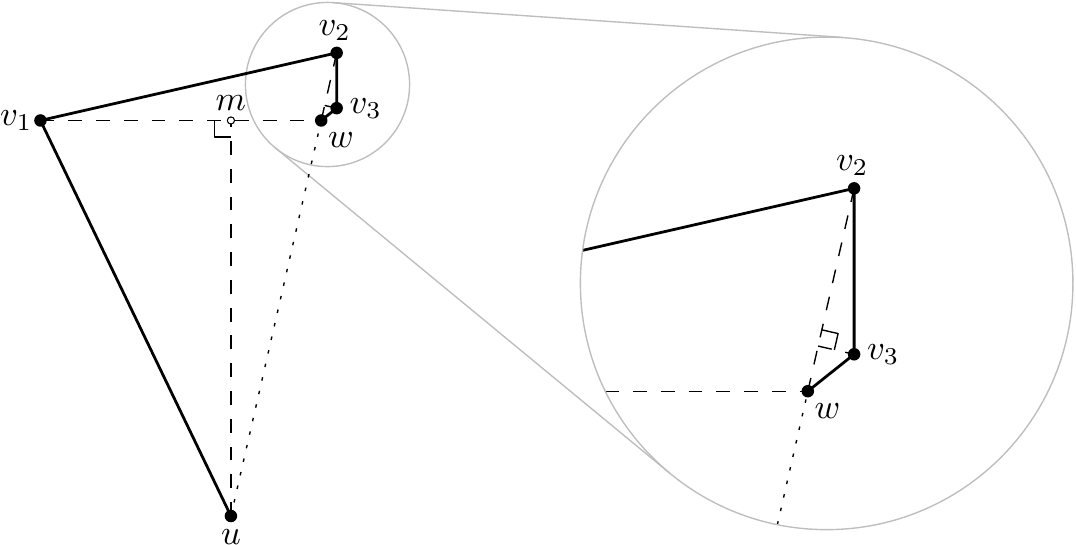}
  \end{center}
  \caption{The lower bound for the \graph{3}}
  \label{fig:LowerBoundComputation4k+3}
\end{figure}

Let $m$ be the midpoint of the side of \canon{u}{w} opposite $u$. By construction, we have that $\angle v_1 u m = \theta/2$, $\angle w u m = \angle v_2 v_1 w = \angle v_3 v_2 w = \theta/4$, $\angle v_3 w v_2 = 3\theta/4$, $\angle u v_1 w = \angle v_1 v_2 w = \pi/2 - \theta/2$, and $\angle v_2 v_3 w = \pi - \theta$ (see Figure~\ref{fig:LowerBoundComputation4k+3}). We can express the various line segments as follows: 
\begin{eqnarray*}
  |u v_1| &=& \frac{\cos \left( \frac{\theta}{4} \right)}{\cos \left( \frac{\theta}{2} \right)} \cdot |u w| \\ \\ 
  |v_1 w| &=& \frac{\sin \left( \frac{3\theta}{4} \right)}{\sin \left( \frac{\pi}{2} - \frac{\theta}{2} \right)} \cdot |u w| ~~=~~ \frac{\sin \left( \frac{3\theta}{4} \right)}{\cos \left( \frac{\theta}{2} \right)} \cdot |u w| \\ \\ 
  |v_1 v_2| &=& \frac{\cos \left( \frac{\theta}{4} \right)}{\cos \left( \frac{\theta}{2} \right)} \cdot |v_1 w| \\ \\
  |v_2 w| &=& \frac{\sin \left( \frac{\theta}{4} \right)}{\sin \left( \frac{\pi}{2} - \frac{\theta}{2} \right)} \cdot |v_1 w| ~~=~~ \frac{\sin \left( \frac{\theta}{4} \right)}{\cos \left( \frac{\theta}{2} \right)} \cdot |v_1 w| \\ \\ 
  |v_2 v_3| &=& \frac{\sin \left( \frac{3\theta}{4} \right)}{\sin (\pi - \theta)} \cdot |v_2 w| ~~=~~ \frac{\sin \left( \frac{3\theta}{4} \right)}{\sin (\theta)} \cdot |v_2 w| 
\end{eqnarray*}
\begin{eqnarray*}
  |v_3 w| &=& \frac{\sin \left( \frac{\theta}{4} \right)}{\sin (\pi - \theta)} \cdot |v_2 w| ~~=~~ \frac{\sin \left( \frac{\theta}{4} \right)}{\sin (\theta)} \cdot |v_2 w|
\end{eqnarray*}

Hence, the total length of the shortest path is $|u v_1| + |v_1 v_2| + |v_2 v_3| + |v_3 w|$, which can be rewritten to \[\frac{3\cos\left(\frac{\theta}{4}\right)+\cos\left(\frac{3\theta}{4}\right)+\sin\left(\frac{\theta}{2}\right)+\sin \theta +\sin\left(\frac{3\theta}{2}\right)}{3\cos\left(\frac{\theta}{2}\right)+\cos\left(\frac{3\theta}{2}\right)} \cdot |u w|,\] proving the theorem. 
\end{proof}

\subsection{Lower Bound on the \Graph{4}}
The \graph{2} has the nice property that any line perpendicular to the bisector of a cone is parallel to the boundary of a cone (Lemma~\ref{lem:Boundary}). As a result of this, if $u$, $v$, and $w$ are vertices with $v$ in one of the upper corners of \canon{u}{w}, then \canon{w}{v} is completely contained in \canon{u}{w}. The \graph{4} does not have this property. In this section, we show how to exploit this to construct a lower bound for the \graph{4} whose spanning ratio exceeds the worst case spanning ratio of the \graph{2}. 

\begin{theorem}
  The worst case spanning ratio of the \graph{4} is at least \[1 + 2 \tan \left( \frac{\theta}{2} \right) + 2 \tan^2 \left( \frac{\theta}{2} \right).\] 
\end{theorem}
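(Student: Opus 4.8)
The plan is to exhibit a single, carefully engineered pair $u,w$ together with a point set in which the shortest $u$–$w$ path in the \graph{4} is forced to spiral through three explicit intermediate vertices whose edge lengths already sum to the claimed bound. I would normalise $|uw|=1$, put $u$ at the origin with the bisector of $C_0^u$ pointing straight up, and — following exactly the template of Theorem~\ref{theo:LowerBound4k+3} — describe the vertex placements along one shortest path, then apply the same transformation to every equivalent path and canonical triangle (in particular the mirror placement in \canon{w}{u}). This symmetrisation is what lets me conclude about the \emph{graph distance} $\delta(u,w)$ rather than about the length of one particular path.

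First I would place $w$ so that $uw$ makes angle $\theta/2$ with the bisector of $C_0^u$, i.e. $w$ sits on the right boundary ray of $C_0^u$; then $w$ is the right corner of \canon{u}{w}, the left corner is $a$, and $|ua| = \cos(\theta/2)/\cos(\theta/2)\cdot|uw| = |uw|$. Placing $v_1$ arbitrarily close to $a$ (the upper corner furthest from $w$) destroys the edge $uw$ and contributes a first edge of length $|uv_1| = |uw|$, accounting for the leading $1$ in the bound.

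The engine of the construction is the property highlighted just before the statement: since in the \graph{4} we have $\pi/2 = (k+1)\theta$, the side of a canonical triangle opposite its apex is parallel to a cone \emph{bisector} rather than, as Lemma~\ref{lem:Boundary} gives for the \graph{2}, to a cone boundary. Consequently $w$ lies exactly on the bisector of $C_{k+1}^{v_1}$, and \canon{v_1}{w} protrudes above \canon{u}{w}. I would place $v_2$ at the upper corner of \canon{v_1}{w} — the corner lying outside \canon{u}{w} — which blocks the edge $v_1 w$ while keeping $v_1$ (not $v_2$) the closest vertex to $u$ in $C_0^u$; since $w$ is on the bisector of $C_{k+1}^{v_1}$ at distance $2\sin(\theta/2)$, the apex-to-corner edge has length $|v_1 v_2| = 2\sin(\theta/2)/\cos(\theta/2) = 2\tan(\theta/2)$. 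The same phenomenon then recurs: $w$ now lies on the bisector of the downward cone $C_{2k+2}^{v_2}$, and placing $v_3$ at the corner of \canon{v_2}{w} that avoids creating a direct edge to $u$ or to $v_1$ yields $|v_2 v_3| = 2\tan^2(\theta/2)$. Because any route must still cover the positive remaining distance from $v_3$ to $w$, the shortest path obeys $\delta(u,w) \ge |uv_1| + |v_1 v_2| + |v_2 v_3| = \left(1 + 2\tan(\theta/2) + 2\tan^2(\theta/2)\right)|uw|$, which is the claim.

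The hard part, exactly as flagged in the proof of Theorem~\ref{theo:LowerBound4k+3}, will be guaranteeing that these three edges genuinely lie on a shortest path, so that no cheaper detour sneaks in. Each corner of a protruding canonical triangle lies in a cone of several earlier vertices, so the ``wrong'' corner at a given step creates a short-circuiting edge (for instance $uv_3$ if I take the right corner, or $v_1 v_3$ if I take the left one), and I must weigh the two candidate corners against these competing edges and insert auxiliary blocking vertices — furnished automatically by symmetrising over all equivalent paths — to kill the surviving shortcuts while preserving general position. Verifying that after all these placements the unique shortest route really is the spiral $u,v_1,v_2,v_3,w$, and that the cones and corners sit as claimed (which only needs $\theta \le 2\pi/8$, the regime $k\ge 1$ forces), is the delicate bookkeeping the remainder of the proof must carry out.
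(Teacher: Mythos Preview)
Your first two placements ($v_1$ at the far corner of \canon{u}{w}, $v_2$ at the upper corner of \canon{v_1}{w}) and the associated length computations match the paper exactly, so the leading $1+2\tan(\theta/2)$ is fine. The divergence is at the third step. You propose to put $v_3$ at a \emph{corner} of \canon{v_2}{w} and then count only the three edges $|uv_1|+|v_1v_2|+|v_2v_3|=1+2\tan(\theta/2)+2\tan^2(\theta/2)$. The paper instead places $v_3$ at the intersection of the left boundary of \canon{v_2}{w} with the right boundary of \canon{w}{v_2}; this makes triangle $v_2v_3w$ isoceles with $\angle v_3v_2w=\angle v_3wv_2=\theta/2$, so that $|v_2v_3|=|v_3w|=\tan^2(\theta/2)$ and the \emph{four}-edge path $u,v_1,v_2,v_3,w$ has length exactly the claimed bound.

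The reason for this more delicate placement is precisely the obstruction you flag but do not resolve: neither corner of \canon{v_2}{w} avoids a shortcut. With $u$ at the origin and $w=(\sin(\theta/2),\cos(\theta/2))$, the two corners are $(\sin(\theta/2)(1\pm 2\tan^2(\theta/2)),\cos(\theta/2))$. The right corner lies in $C_1^u$ and (absent further blockers) becomes $u$'s neighbour there, giving a path $u\to v_3\to w$ of length close to $1$; the paper notes this explicitly. The left corner lies on the horizontal bisector of $C_{k+1}^{v_1}$ strictly closer to $v_1$ than both $w$ and $v_2$, so the edge out of $v_1$ in that cone now goes to $v_3$, producing a path $u\to v_1\to v_3\to w$ of total length $1+2\sin(\theta/2)$. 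Thus your sentence ``the corner of \canon{v_2}{w} that avoids creating a direct edge to $u$ or to $v_1$'' posits a corner that does not exist, and the symmetrisation you invoke does not automatically supply blockers in the right cones to rescue it. The fix is the paper's: abandon the corner for the boundary-intersection point, and account for the bound as $|uv_1|+|v_1v_2|+|v_2v_3|+|v_3w|$.
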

\begin{proof}
We construct the lower bound example by extending the shortest path between two vertices $u$ and $w$ in three steps. We describe only how to extend one of the shortest paths between these vertices. To extend all shortest paths, the same modification is performed in each of the analogous cases, as shown in Figure~\ref{fig:LowerBound4k+4}. 

\begin{figure}[H]
  \begin{center}
    \includegraphics{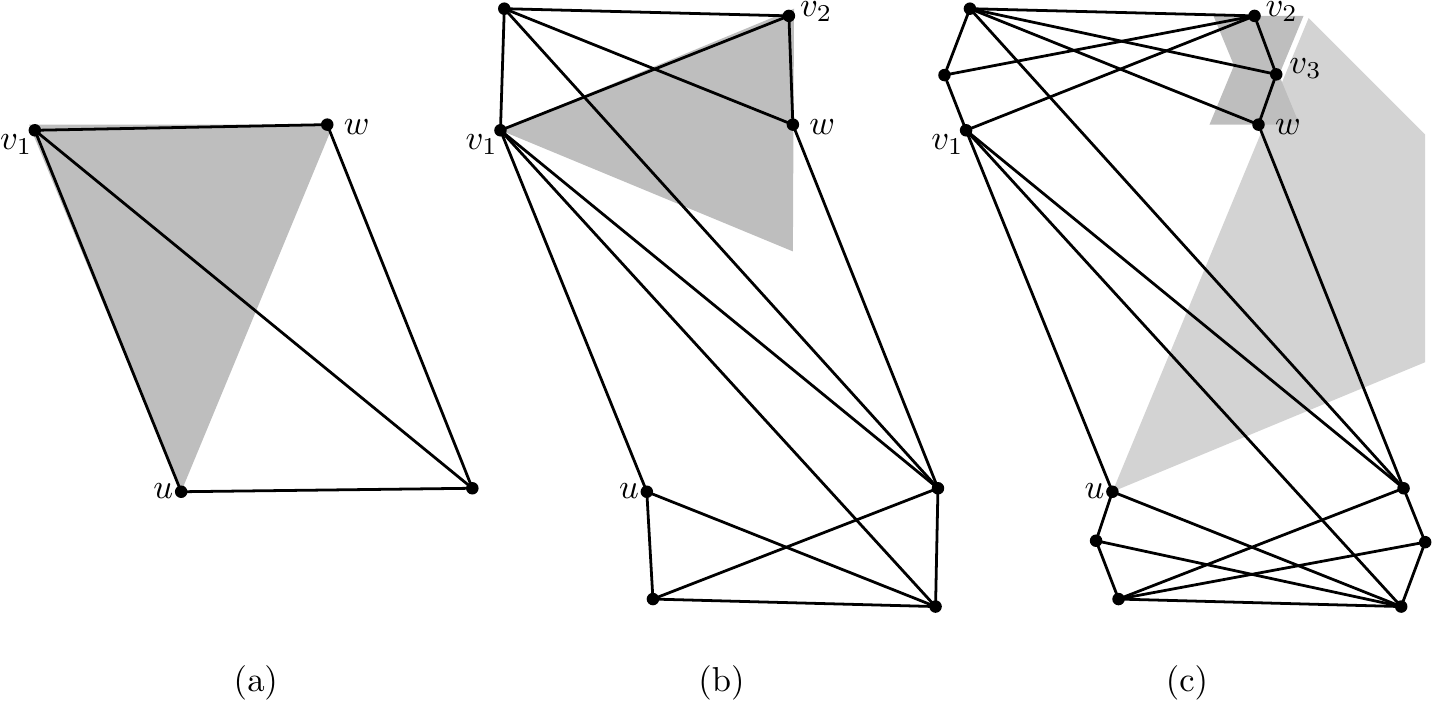}
  \end{center}
  \vspace{-1em}
  \caption{The construction of the lower bound for the \graph{4}}
  \label{fig:LowerBound4k+4}
\end{figure}

First, we place $w$ such that the angle between $u w$ and the bisector of the cone of $u$ that contains $w$ is $\theta/2$. Next, we ensure that there is no edge between $u$ and $w$ by placing a vertex $v_1$ in the upper corner of \canon{u}{w} that is furthest from $w$ (see Figure~\ref{fig:LowerBound4k+4}a). Next, we place a vertex $v_2$ in the corner of \canon{v_1}{w} that lies in the same cone of $u$ as $w$ and $v_1$ (see Figure~\ref{fig:LowerBound4k+4}b). Finally, we place a vertex $v_3$ in the intersection of the left boundary of \canon{v_2}{w} and the right boundary of \canon{w}{v_2} to ensure that there is no edge between $v_2$ and $w$ (see Figure~\ref{fig:LowerBound4k+4}c). Note that we cannot place $v_3$ in the lower right corner of \canon{v_2}{w} since this would cause an edge between $u$ and $v_3$ to be added, creating a shortcut to $w$. 

One of the shortest paths in the resulting graph visits $u$, $v_1$, $v_2$, $v_3$, and $w$. Thus, to obtain a lower bound for the \graph{4}, we compute the length of this path. 

\begin{figure}[H]
  \begin{center}
    \includegraphics{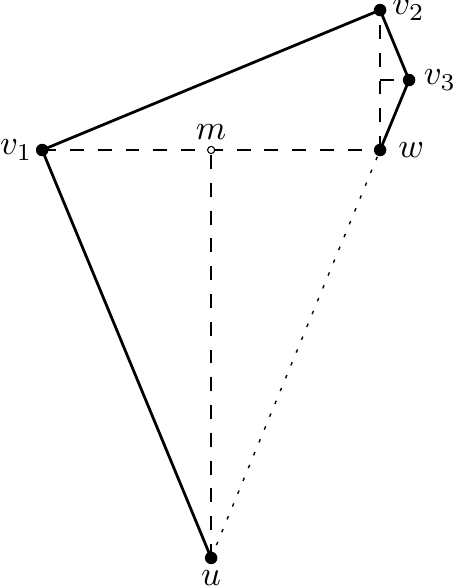}
  \end{center}
  \caption{The lower bound for the \graph{4}}
  \label{fig:LowerBoundComputation4k+4}
\end{figure}

Let $m$ be the midpoint of the side of \canon{u}{w} opposite~$u$. By construction, we have that $\angle v_1 u m = \angle w u m = \angle v_2 v_1 w = \angle v_3 v_2 w = \angle v_3 w v_2 = \theta/2$ (see Figure~\ref{fig:LowerBoundComputation4k+4}). We can express the various line segments as follows: 
\begin{eqnarray*}
  |u v_1| &=& |u w| \\
  |v_1 w| &=& 2 \sin \left( \frac{\theta}{2} \right) \cdot |u w| \\
  |v_1 v_2| &=& \frac{|v_1 w|}{\cos \left( \frac{\theta}{2} \right)} ~~=~~ 2 \tan \left( \frac{\theta}{2} \right) \cdot |u w| \\
  |v_2 w| &=& \tan \left( \frac{\theta}{2} \right) \cdot |v_1 w| ~~=~~ 2 \sin \left( \frac{\theta}{2} \right) \tan \left( \frac{\theta}{2} \right) \cdot |u w| \\
  |v_2 v_3| &=& |v_3 w| ~~=~~ \frac{\frac{1}{2}|v_1 w|}{\cos \left( \frac{\theta}{2} \right)} ~~=~~ \tan^2 \left( \frac{\theta}{2} \right) \cdot |u w| 
\end{eqnarray*}

Hence, the total length of the shortest path is $|u v_1| + |v_1 v_2| + |v_2 v_3| + |v_3 w|$, which can be rewritten to \[\left( 1 + 2 \tan \left( \frac{\theta}{2} \right) + 2 \tan^2 \left( \frac{\theta}{2} \right) \right) \cdot |u w|.\]  
\end{proof}

\subsection{Lower Bounds on the \Graph{5}}
In this section, we give a lower bound on the spanning ratio of the \graph{5}, for any integer $k \geq 1$. 

\begin{theorem}
  The worst case spanning ratio of the \graph{5} is at least \[ \frac{1}{2} \sqrt{4\sec\left(\frac{\theta}{2}\right) + 7\sec^2\left(\frac{\theta}{2}\right) + 4\sec^3\left(\frac{\theta}{2}\right) + \sec^4\left(\frac{\theta}{2}\right) - 8\cos\left(\frac{\theta}{2}\right) - 4}\] \[+ \tan\left(\frac{\theta}{2}\right) + \frac{1}{2}\sec\left(\frac{\theta}{2}\right)\tan\left(\frac{\theta}{2}\right).\]
\end{theorem}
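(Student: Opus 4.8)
The plan is to mirror the construction used for the \graph{3} and \graph{4} lower bounds in the preceding subsections, since the three lower-bound proofs share the same skeleton: start with a suitably placed pair $u$, $w$, then repeatedly insert vertices $v_1, v_2, v_3$ that each block a direct edge and force the shortest path to detour, and finally sum the lengths of the resulting path segments. The first choice to pin down is the placement of $w$: for the \graph{3} the optimal offset angle was $\theta/4$, and for the \graph{4} it was $\theta/2$; for the \graph{5} I expect the worst-case offset to again be the value that equalizes the two canonical-triangle bounds from Theorem~\ref{theo:SpanningRatio4k+3,5}, namely $\alpha = \theta/4$, so I would begin by placing $w$ so that the angle between $uw$ and the bisector of the cone of $u$ containing $w$ is $\theta/4$.

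First I would place $v_1$ in the upper corner of \canon{u}{w} farthest from $w$ to destroy the edge $uw$, exactly as in the earlier proofs; by the general-position and construction geometry this fixes $\angle v_1 u m$ and $\angle u v_1 w$ in terms of $\theta$. Next I would place $v_2$ in the appropriate corner of \canon{v_1}{w}, and then $v_3$ in \canon{v_2}{w} at the intersection of the relevant cone boundaries so that no shortcut edge (in particular no edge $uv_3$) is created, again following the pattern of the \graph{4} argument. The key difference from the $4k+4$ case is that the $\theta_{4k+5}$ cone structure changes which boundary the perpendicular-to-bisector line is parallel to (Lemma~\ref{lem:Boundary} fails here just as it does for $4k+4$), so the angles $\angle v_2 v_1 w$, $\angle v_3 v_2 w$, and $\angle v_3 w v_2$ will take slightly different values, producing the more elaborate trigonometric expression in the statement.

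Once the angles are fixed, the computation is routine sine-law bookkeeping: I would express each of $|uv_1|$, $|v_1 v_2|$, $|v_2 v_3|$, and $|v_3 w|$ as a multiple of $|uw|$ using the law of sines in the triangles $u v_1 w$, $v_1 v_2 w$, and $v_2 v_3 w$, normalizing $|uw| = 1$. Summing these four segment lengths gives the length of the shortest path $u, v_1, v_2, v_3, w$, which divided by $|uw|$ is the claimed lower bound. The presence of a square root and terms in $\sec(\theta/2)$ up to the fourth power suggests that after forming the sum one must simplify a nested expression (likely arising from a distance computed via the law of cosines rather than a clean sine-law ratio), so the final algebraic simplification to the stated closed form is where I expect the only real friction.

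The main obstacle will not be the geometry of the construction, which transfers directly from the $4k+3$ and $4k+4$ cases, but rather two points of care. First, I must verify that the blocking vertices are placed so that no unwanted shortcut edges appear in the $\theta_{4k+5}$-graph specifically — the cone count affects which vertices fall into which cones, so the exact corner of \canon{v_2}{w} used for $v_3$ must be checked against the $4k+5$ cone boundaries to ensure the edge $uv_3$ is genuinely absent. Second, I must confirm that $\alpha = \theta/4$ (or whatever offset I fix) actually yields the worst case and not merely a valid lower bound; this amounts to checking that the path length is maximized over the admissible placement of $w$, which I would justify by the monotonicity/equalization argument already invoked in Theorem~\ref{theo:SpanningRatio4k+3,5}. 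With those two verifications in hand, the square-root term should fall out of the law-of-cosines length for the one segment whose endpoints do not subtend a clean inscribed angle, and the rest is trigonometric rewriting.
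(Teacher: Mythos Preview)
Your plan follows the \graph{3} and \graph{4} template too closely and misses the key structural difference in the \graph{5} case. The paper's construction stops after only \emph{two} rounds of vertex insertion, not three: it places $v_1$ in the far corner of \canon{u}{w}, then places $v_2$ in a corner of \canon{v_1}{w} together with a symmetric vertex $v_2'$ in a corner of \canon{w}{v_1}. The crucial observation is that in the \graph{5} geometry, this $v_2'$ \emph{necessarily} falls where $u$ is the closest vertex in one of the cones of $v_2'$, so an edge $u v_2'$ appears and cannot be suppressed. Consequently the shortest $u$--$w$ path is not $u, v_1, v_2, v_3, w$ as you propose, but simply $u, v_2', w$, and the lower bound is $|u v_2'| + |v_2' w|$.

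This is exactly why the closed form contains a square root: $|u v_2'|$ is computed by the law of cosines in triangle $u v_1 v_2'$ (with $\angle u v_1 v_2' = \pi - 3\theta/4$), not by a sine-law ratio along a cone boundary. Your intuition that the square root signals a law-of-cosines segment is right, but you have attached it to the wrong edge and the wrong path. If you attempt to push the construction to a third step and compute the length of $u, v_1, v_2, v_3, w$, that path will not be shortest in the graph you build --- the shortcut through $v_2'$ (or its analogue) will undercut it --- so the sum you compute would not be a valid lower bound on the spanning ratio. The fix is to recognize, at the second step, that the shortcut is unavoidable in the \graph{5} cone structure, accept it as part of the shortest path, and compute $|u v_2'|$ directly.
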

\begin{proof}
We construct the lower bound example by extending the shortest path between two vertices $u$ and $w$ in two steps. We describe only how to extend one of the shortest paths between these vertices. To extend all shortest paths, the same modification is performed in each of the analogous cases, as shown in Figure~\ref{fig:LowerBound4k+5}. 

\begin{figure}[ht]
  \begin{center}
    \includegraphics{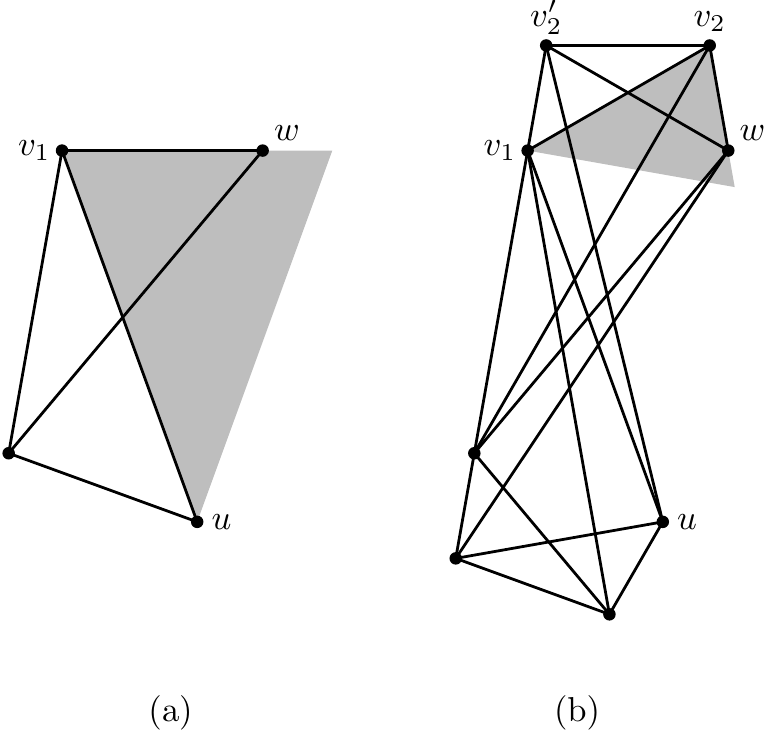}
  \end{center}
  \caption{The construction of the lower bound for the \graph{5}}
  \label{fig:LowerBound4k+5}
\end{figure}

First, we place $w$ such that the angle between $u w$ and the bisector of the cone of $u$ that contains $w$ is $\theta/4$. Next, we ensure that there is no edge between $u$ and $w$ by placing a vertex $v_1$ in the upper corner of \canon{u}{w} that is furthest from $w$ (see Figure~\ref{fig:LowerBound4k+5}a). Finally, we place a vertex $v_2$ in the corner of \canon{v_1}{w} that lies outside \canon{u}{w}. We also place a vertex $v_2'$ in the corner of \canon{w}{v_1} that lies in the same cone of $u$ as $w$ and $v_1$ (see Figure~\ref{fig:LowerBound4k+5}b). Note that placing $v_2'$ creates a shortcut between $u$ and $v_2'$, as $u$ is the closest vertex in one of the cones of $v_2'$. 

One of the shortest paths in the resulting graph visits $u$, $v_2'$, and $w$. Thus, to obtain a lower bound for the \graph{5}, we compute the length of this path. 

\begin{figure}[ht]
  \begin{center}
    \includegraphics{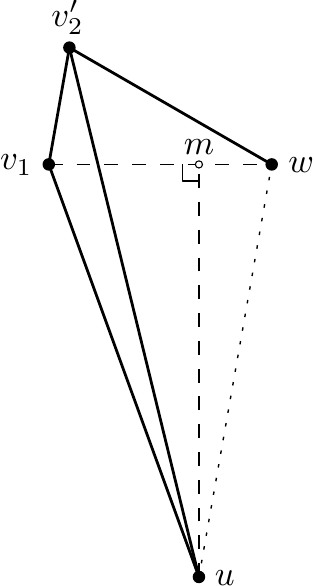}
  \end{center}
  \caption{The lower bound for the \graph{5}}
  \label{fig:LowerBoundComputation4k+5}
\end{figure}

Let $m$ be the midpoint of the side of \canon{u}{w} opposite $u$. By construction, we have that $\angle v_1 u m = \theta/2$, $\angle w u m = \theta/4$, $\angle v_1 w v_2' = 3\theta/4$, and $\angle u v_1 v_2' = \angle u v_1 w + \angle w v_1 v_2' = (\pi - \theta)/2 + (\pi - (\pi - \theta)/2 - 3\theta/4) = \pi - 3\theta/4$ (see Figure~\ref{fig:LowerBoundComputation4k+5}). We can express the various line segments as follows: 
\begin{eqnarray*}
  |u v_1| &=& \frac{\cos \left( \frac{\theta}{4} \right)}{\cos \left( \frac{\theta}{2} \right)} \cdot |u w| \\ \\
  |v_2' w| &=& \frac{\cos \left( \frac{\theta}{4} \right)}{\cos \left( \frac{\theta}{2} \right)} \cdot \left( \sin \left( \frac{\theta}{4} \right) + \cos \left( \frac{\theta}{4} \right) \tan \left( \frac{\theta}{2} \right) \right) \cdot |u w| \\ \\
  |v_1 v_2'| &=&  \left( \sin \left( \frac{\theta}{4} \right) + \cos \left( \frac{\theta}{4} \right) \tan \left( \frac{\theta}{2} \right) \right)^2 \cdot |u w| \\ \\
  |u v_2'| &=& \sqrt{|u v_1|^2 + |v_1 v_2'|^2 - 2 \cdot |u v_1| \cdot |v_1 v_2'| \cdot \cos \left( \pi - \frac{3\theta}{4} \right)}
\end{eqnarray*}

Hence, the total length of the shortest path is $|u v_2'| + |v_2' w|$, which can be rewritten to \[ \frac{1}{2} \sqrt{4\sec\left(\frac{\theta}{2}\right) + 7\sec^2\left(\frac{\theta}{2}\right) + 4\sec^3\left(\frac{\theta}{2}\right) + \sec^4\left(\frac{\theta}{2}\right) - 8\cos\left(\frac{\theta}{2}\right) - 4}\] \[+ \tan\left(\frac{\theta}{2}\right) + \frac{1}{2}\sec\left(\frac{\theta}{2}\right)\tan\left(\frac{\theta}{2}\right)\]  times the length of $u w$.
\end{proof}

\section{Comparison}
\label{sec:Comparison}
In this section we prove that the upper and lower bounds of the four families of $\theta$-graphs admit a partial ordering. We need the following lemma that can be proved by elementary calculus. 

\begin{lemma}
\label{lemma trigonometric inequalities}
  Let $x \in \left[0,\frac{\pi}{4}\right]$ be a real number. Then the following inequalities hold:
  \begin{enumerate}
    \item \label{lemma trigonometric inequalities item sin <}
    $\sin(x) \leq x$ with equality if and only if $x=0$.

    \item \label{lemma trigonometric inequalities item cos >}
    $\cos(x) \geq 1-\frac{x^2}{2}$ with equality if and only if $x=0$.

    \item \label{lemma trigonometric inequalities item sin >}
    $\sin(x) \geq x-\frac{x^3}{6}$ with equality if and only if $x=0$.

    \item \label{lemma trigonometric inequalities item cos <}
    $\cos(x) \leq 1-\frac{x^2}{2}+\frac{x^4}{24}$ with equality if and only if $x=0$.

    \item \label{lemma trigonometric inequalities item tan >}
    $\tan(x) \geq x$ with equality if and only if $x=0$.

    \item \label{lemma trigonometric inequalities item tan^2 >}
    $\tan^2\!(x) \geq x^2$ with equality if and only if $x=0$.
  \end{enumerate}
\end{lemma}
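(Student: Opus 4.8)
The plan is to prove the first four inequalities as a chain, where each one furnishes the sign of the derivative needed for the next, and then to dispatch the last two separately. For each inequality I would introduce the auxiliary function given by the difference of its two sides, observe that this function vanishes at $x=0$, and show that its derivative is nonnegative on $[0,\pi/4]$ and vanishes only at the single point $x=0$. This makes the auxiliary function strictly increasing on the closed interval, hence strictly positive for every $x>0$, which simultaneously yields the stated inequality and pins down the equality case to $x=0$.

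Concretely, I would start with (\ref{lemma trigonometric inequalities item sin <}) by setting $f(x)=x-\sin(x)$, so that $f(0)=0$ and $f'(x)=1-\cos(x)$, which is nonnegative and vanishes only at $x=0$ on the interval. Next, for (\ref{lemma trigonometric inequalities item cos >}) I would take $g(x)=\cos(x)-1+\tfrac{x^2}{2}$, whose derivative $g'(x)=x-\sin(x)$ is exactly the quantity just shown to be positive. Then (\ref{lemma trigonometric inequalities item sin >}) follows from $h(x)=\sin(x)-x+\tfrac{x^3}{6}$, since $h'(x)=\cos(x)-1+\tfrac{x^2}{2}\geq 0$ by (\ref{lemma trigonometric inequalities item cos >}); and (\ref{lemma trigonometric inequalities item cos <}) follows from $p(x)=1-\tfrac{x^2}{2}+\tfrac{x^4}{24}-\cos(x)$, since $p'(x)=\sin(x)-x+\tfrac{x^3}{6}\geq 0$ by (\ref{lemma trigonometric inequalities item sin >}). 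In each link of this chain the derivative is strictly positive on $(0,\pi/4]$, delivering both the inequality and its equality condition.

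For (\ref{lemma trigonometric inequalities item tan >}) I would argue independently of the chain, using $q(x)=\tan(x)-x$ with $q'(x)=\sec^2(x)-1=\tan^2(x)$, which is nonnegative and vanishes only at $x=0$; this is legitimate because $\pi/4<\pi/2$, so $\tan$ and $\sec$ are finite and smooth throughout. Finally, (\ref{lemma trigonometric inequalities item tan^2 >}) requires no new calculus: on $[0,\pi/4]$ both $x$ and $\tan(x)$ are nonnegative, so squaring the inequality $\tan(x)\geq x$ from (\ref{lemma trigonometric inequalities item tan >}) preserves it and gives $\tan^2(x)\geq x^2$, with equality precisely when $\tan(x)=x$, i.e.\ when $x=0$.

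The only real subtlety, rather than an obstacle, is the careful treatment of the equality cases: each auxiliary derivative is merely nonnegative and happens to vanish at the isolated point $x=0$, so I must note explicitly that this single zero does not prevent the function from being strictly increasing on the closed interval, which is what secures strict inequality for every $x>0$. Everything else amounts to routine differentiation on an interval where $\cos$, $\sec$, and $\tan$ are well behaved.
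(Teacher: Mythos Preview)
Your proof is correct and is precisely the kind of argument the paper has in mind: the paper does not actually spell out a proof of this lemma, stating only that it ``can be proved by elementary calculus,'' and your chained derivative argument is the standard elementary-calculus route. There is nothing to compare; you have simply filled in what the paper left as routine.
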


Using the above properties, we proceed to prove a number of relations between the four families of $\theta$-graphs. 

\setcounter{equation}{0}
\renewcommand{\theequation}{\alph{equation}}

\begin{lemma}
  \label{proposition inequalities}
  Let $ub(m)$ and $lb(m)$ denote the upper and lower bound on the $\theta_m$-graph: 
  \begin{align*}
    ub(m) &=
    \begin{cases}
    1 + 2 \sin\left(\frac{\pi}{4k+2}\right) & \textrm{if }m=4k+2\quad(k\geq 1)\cr
    &\cr
    \frac{\cos\left(\frac{\pi}{2(4k+3)}\right)}{\cos\left(\frac{\pi}{4k+3}\right) - \sin\left(\frac{3\pi}{2(4k+3)}\right)} & \textrm{if }m=4k+3\quad(k\geq 1)\cr
    &\cr
    1 + 2 \frac{\sin\left(\frac{\pi}{4k+4}\right)}{\cos\left(\frac{\pi}{4k+4}\right) - \sin\left(\frac{\pi}{4k+4}\right)} & \textrm{if }m=4k+4\quad(k\geq 1)\cr
    &\cr
    \frac{\cos\left(\frac{\pi}{2(4k+5)}\right)}{\cos\left(\frac{\pi}{4k+5}\right) - \sin\left(\frac{3\pi}{2(4k+5)}\right)} & \textrm{if }m=4k+5\quad(k\geq 1)\cr
    \end{cases}\\
    lb(m) &=
    \begin{cases}
    1 + 2 \sin\left(\frac{\pi}{4k+2}\right) & \textrm{\hspace{-1.5cm} if }m=4k+2\quad(k\geq 1)\cr
    &\cr
    \frac{3\cos\left(\frac{\pi}{2(4k+3)}\right)+\cos\left(\frac{3\pi}{2(4k+3)}\right)+\sin\left(\frac{\pi}{4k+3}\right)+\sin\left(\frac{2\pi}{4k+3}\right)+\sin\left(\frac{3\pi}{4k+3}\right)}{3\cos\left(\frac{\pi }{4k+3}\right)+\cos\left(\frac{3\pi }{4k+3}\right)} & \textrm{\hspace{-1.5cm} if }m=4k+3\quad(k\geq 1)\cr
    &\cr
    1+2\tan\left(\frac{\pi}{4k+4}\right)+2\tan^2\!\left(\frac{\pi}{4k+4}\right) & \textrm{\hspace{-1.5cm} if }m=4k+4\quad(k\geq 1)\cr
    &\cr
    \frac{\sqrt{4\sec\left(\frac{\pi}{4k+5}\right) + 7\sec^2\left(\frac{\pi}{4k+5}\right) + 4\sec^3\left(\frac{\pi}{4k+5}\right) + \sec^4\left(\frac{\pi}{4k+5}\right) - 8\cos\left(\frac{\pi}{4k+5}\right) - 4}}{2}  \cr +  \tan\left(\frac{\pi}{4k+5}\right) + \frac{1}{2}\sec\left(\frac{\pi}{4k+5}\right)\tan\left(\frac{\pi}{4k+5}\right) & \textrm{\hspace{-1.5cm} if }m=4k+5\quad(k\geq 1)\cr
    \end{cases}
  \end{align*}
  Then the following inequalities hold where $k$ is an integer.
  \begin{align}
    \label{4k+2 monotonic}
    ub(4(k+1)+2) &< lb(4k+2) \qquad (k \geq 1) \\
    \label{4k+3 monotonic}
    ub(4(k+1)+3) &< lb(4k+3) \qquad (k \geq 1) \\
    \label{4k+4 monotonic}
    ub(4(k+1)+4) &< lb(4k+4) \qquad (k \geq 1) \\
    \label{4k+5 monotonic}
    ub(4(k+1)+5) &< lb(4k+5) \qquad (k \geq 1) \\
    ub(4k+2) &< lb(4k+4) \qquad (k \geq 1) \\
    \label{4k+4 < 4(k-1)+2}
    ub(4(k+1)+4) &< lb(4k+2) \qquad (k \geq 1) \\
    ub(4(k+1)+5) &< lb(4k+3) \qquad (k \geq 1) \\
    ub(4(k+1)+3) &< lb(4k+5) \qquad (k \geq 1) \\
    ub(4k+5) &< lb(4k+2) \qquad (k \geq 2)
  \end{align}
\end{lemma}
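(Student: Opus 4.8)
The plan is to reduce each of the nine inequalities to an elementary polynomial inequality in the angle $x = \pi/m$ by replacing every trigonometric term with the appropriate estimate from Lemma~\ref{lemma trigonometric inequalities}. Since $k \geq 1$ forces $m \geq 6$, every relevant angle satisfies $x = \pi/m \in (0, \pi/6] \subset [0, \pi/4]$, so all six bounds of Lemma~\ref{lemma trigonometric inequalities} are available. For each inequality I would bound the upper-bound side $ub$ from above and the lower-bound side $lb$ from below, so that a single polynomial comparison implies the claim.

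First I would dispatch the trivial case. Inequality (\ref{4k+2 monotonic}) reads $1 + 2\sin(\pi/(4k+6)) < 1 + 2\sin(\pi/(4k+2))$, which is immediate from the monotonicity of $\sin$ on $[0,\pi/4]$. More generally, in the four ``monotonic'' inequalities (\ref{4k+2 monotonic})--(\ref{4k+5 monotonic}), in (\ref{4k+4 < 4(k-1)+2}), and in the comparisons $ub(4(k+1)+5)<lb(4k+3)$, $ub(4(k+1)+3)<lb(4k+5)$, and $ub(4k+5)<lb(4k+2)$, the upper bound is evaluated at a strictly smaller angle than the lower bound, so the polynomial estimates only have to overcome a gap that is already present at equal angle; these are the most forgiving.

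The mechanical heart of the argument is the substitution step. In each $ub$ expression the numerator is bounded above using $\sin(x) \le x$ and $\cos(x) \le 1 - x^2/2 + x^4/24$, while the denominators, of the form $\cos(\theta/2) - \sin(\theta/2)$ or $\cos(\theta/2) - \sin(3\theta/4)$, are bounded below using $\cos(x) \ge 1 - x^2/2$ together with $-\sin(x) \ge -x$; conversely, each $lb$ expression is bounded below using $\cos(x) \ge 1 - x^2/2$, $\sin(x) \ge x - x^3/6$, $\tan(x) \ge x$, and $\tan^2(x) \ge x^2$. After these substitutions every quantity is a rational function of $x$ (or of the two distinct angles in a cross-family comparison), and clearing the positive denominators turns each inequality into the assertion that an explicit polynomial is positive for all integers $k \ge 1$ (respectively $k \ge 2$), which can be verified by collecting terms and crudely bounding the negative contributions.

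The main obstacle will be the single comparison $ub(4k+2) < lb(4k+4)$, where monotonicity gives no help because the upper-bounded family is evaluated at the \emph{larger} angle: the inequality can hold only because the denominator penalty $\cos(\theta/2) - \sin(\theta/2)$ in $lb(4k+4) = 1 + 2\tan(\cdot) + 2\tan^2(\cdot)$ pushes the $4k+4$ lower bound above the $4k+2$ value despite the angle disadvantage, so the polynomial estimates (in particular $\tan^2(x) \ge x^2$ and the quartic bound on $\cos$) must be tight enough to capture this. A secondary difficulty is $lb(4k+5)$, whose square-root term I would handle by lower-bounding the radicand via $\cos(x) \le 1 - x^2/2 + x^4/24$ (which simultaneously lower-bounds each $\sec$ power and raises the $-8\cos$ term) and then invoking monotonicity of the square root; keeping the direction of every estimate consistent through the radical and the several reciprocals is where the bookkeeping is most error-prone.
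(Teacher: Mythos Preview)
Your proposal is correct and follows the same overall strategy as the paper: replace every trigonometric term by the appropriate Taylor-type bound from Lemma~\ref{lemma trigonometric inequalities}, clear denominators, and verify the resulting polynomial inequality in $k$. The paper in fact only spells out inequality~(\ref{4k+2 monotonic}) in full and declares the rest ``analogous,'' so your level of detail already exceeds what is there.

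Two points of divergence are worth noting. First, for the square-root term in $lb(4k+5)$ the paper does \emph{not} lower-bound the radicand as you propose. Instead it sets $lb(4k+5) = (\sqrt{r(k)}+g(k))/2$ and $ub = f(k)$, rewrites the target as $2f(k)-g(k) < \sqrt{r(k)}$, checks $2f(k)-g(k)>0$, and squares to obtain the purely polynomial inequality $(2f(k)-g(k))^2 < r(k)$. Your route---lower-bounding $r(k)$ by a polynomial $P(k)$---still leaves you with $\sqrt{P(k)}$, which is not polynomial; to finish you would have to exhibit a polynomial $Q(k)$ with $Q(k)^2 \le P(k)$ and $Q(k) \ge 2f(k)-g(k)$, which is effectively the paper's squaring trick with extra slack. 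The paper's move is cleaner and avoids the bookkeeping you flag as error-prone. Second, your observation that~(\ref{4k+2 monotonic}) is immediate from the monotonicity of $\sin$ is correct and simpler than what the paper writes: the paper runs the full chain $\sin x \le x$, then a polynomial step, then $x - x^3/6 \le \sin x$, evidently to display the template reused for the nontrivial cases rather than because it is needed here.
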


\setcounter{equation}{2}
\renewcommand{\theequation}{\arabic{equation}}

\begin{proof}
We use the same strategy for each inequality. We use the definitions of $ub$ and $lb$ in combination with Lemma~\ref{lemma trigonometric inequalities}. Notice that the restriction on $k$ in each of these inequalities ensures that we can apply Lemma~\ref{lemma trigonometric inequalities}. We are then left with an algebraic inequality that can be translated into a polynomial inequality, which is easy to verify. 
\begin{enumerate}[(\alph{enumi})]
\Item 
\begin{align}
\nonumber & ~~~~ ub(4(k+1)+2) \\
\nonumber & = 1+2\sin\left(\frac{\pi}{4(k+1)+2}\right) & \textrm{by the definition of $ub$,} \\
\nonumber & < 1 + 2 \left(\frac{\pi}{4(k+1)+2}\right) & \textrm{by Lemma~\ref{lemma trigonometric inequalities}-\ref{lemma trigonometric inequalities item sin <},} \\
\label{proof eqn 1} & < 1 + 2\left(\left(\frac{\pi}{4k+2}\right)-\frac{1}{6}\left(\frac{\pi}{4k+2}\right)^3\right) & \textrm{see below,}\\
\nonumber & < 1+2\sin\left(\frac{\pi}{4k+2}\right) & \textrm{by Lemma~\ref{lemma trigonometric inequalities}-\ref{lemma trigonometric inequalities item sin >},} \\
\nonumber & = lb(4k+2) & \textrm{by the definition of $lb$.}
\end{align}
We now explain why~\eqref{proof eqn 1} holds. The inequality \[1+2\left(\frac{\pi}{4(k+1)+2}\right) < 1 + 2\left(\left(\frac{\pi}{4k+2}\right)-\frac{1}{6}\left(\frac{\pi}{4k+2}\right)^3\right)\] can be simplified to
\begin{align}
\label{ineq polyn}
192k^2 + \left(192-2\pi^2\right) k +\left(48-3\pi^2\right) > 0.
\end{align}
The largest real root of the polynomial involved in~\eqref{ineq polyn} is negative. Moreover, \eqref{proof eqn 1} holds for $k=1$. Therefore, \eqref{proof eqn 1} holds for any $k\geq 1$.

\item The proof is analogous to the one of~\eqref{4k+2 monotonic}.

\item The proof is analogous to the one of~\eqref{4k+2 monotonic}.

\item We let
\begin{align*}
f(k) =~ & \frac{\cos\left(\frac{\pi}{2(4(k+1)+5)}\right)}{\cos\left(\frac{\pi}{4(k+1)+5}\right)-\sin\left(\frac{3\pi}{2(4(k+1)+5)}\right)} ,\\
r(k) =~ & 4\sec\left(\frac{\pi}{4k+5}\right) + 7\sec^2\left(\frac{\pi}{4k+5}\right) + 4\sec^3\left(\frac{\pi}{4k+5}\right) + \\ 
& \sec^4\left(\frac{\pi}{4k+5}\right) - 8\cos\left(\frac{\pi}{4k+5}\right) - 4,\\
g(k) =~ & 2\tan\left(\frac{\pi}{4k+5}\right)+\sec\left(\frac{\pi}{4k+5}\right)\tan\left(\frac{\pi}{4k+5}\right),
\end{align*}
so that
\begin{align*}
ub(4(k+1)+5) &= f(k), \\
lb(4k+5) &= \frac{\sqrt{r(k)}+g(k)}{2}.
\end{align*}
Using a proof similar to the one of~\eqref{4k+2 monotonic}, we can prove that
\begin{align*}
(2f(k)-g(k))^2 &< r(k). 
\end{align*}
Using a proof similar to the one of~\eqref{4k+2 monotonic}, we can prove that $2f(k)-g(k) > 0$, for $k \geq 1$, thus we can proceed as follows
\begin{align*}
2f(k)-g(k) &< \sqrt{r(k)} \\
f(k) &< \frac{\sqrt{r(k)}+g(k)}{2} \\
ub(4(k+1)+5) &< lb(4k+5), 
\end{align*}
for $k \geq 1$.

\item The proof is analogous to the one of~\eqref{4k+2 monotonic}.

\item The proof is analogous to the one of~\eqref{4k+2 monotonic}.

\item The proof is analogous to the one of~\eqref{4k+5 monotonic}.

\item The proof is analogous to the one of~\eqref{4k+5 monotonic}.

\item The proof is analogous to the one of~\eqref{4k+5 monotonic}.
\end{enumerate}
\vspace{-1.5em}
\end{proof}

We note that inequalities~\eqref{4k+2 monotonic}, \eqref{4k+3 monotonic}, \eqref{4k+4 monotonic}, and~\eqref{4k+5 monotonic} imply that the spanning ratio is monotonic within each of the four families. We also note that increasing the number of cones of a $\theta$-graph by 2 from $4k + 2$ to $4k + 4$ increases the worst case spanning ratio, thus showing that adding cones can make the spanning ratio worse instead of better. Therefore, the spanning ratio is non-monotonic between families. 

\begin{corollary}
  We have the following partial order on the spanning ratios of the four families (see Figure~\ref{figure partial order}).
\end{corollary}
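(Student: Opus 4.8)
The plan is to observe that the corollary follows almost immediately from Lemma~\ref{proposition inequalities} once we record the elementary fact that the true worst-case spanning ratio, which I will denote $sr(m)$, of the $\theta_m$-graph is sandwiched between the bounds established earlier: by Corollaries~\ref{cor:SpanningRatio} and~\ref{cor:SpanningRatio4k+4}, Theorem~\ref{theo:SpanningRatio4k+3,5}, and the matching lower-bound constructions in Section~\ref{sec:LowerBounds}, we have $lb(m) \leq sr(m) \leq ub(m)$ for each of the four families. The single engine driving the whole argument is then the implication
\[
ub(m_1) < lb(m_2) \;\Longrightarrow\; sr(m_1) \leq ub(m_1) < lb(m_2) \leq sr(m_2),
\]
so that every inequality of the form $ub(m_1) < lb(m_2)$ proved in Lemma~\ref{proposition inequalities} yields a strict comparison $sr(m_1) < sr(m_2)$ between the corresponding graphs.

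First I would read off the comparisons that Lemma~\ref{proposition inequalities} supplies. Inequalities~\eqref{4k+2 monotonic}--\eqref{4k+5 monotonic} give that, within each of the four families, the spanning ratio is strictly decreasing in $k$; that is, adding four cones within a family always improves the spanning ratio. The remaining inequalities give the cross-family relations: $sr(4k+2) < sr(4k+4)$ for equal $k$ (the surprising direction, showing that two extra cones can make the spanning ratio worse), together with the shifted comparisons relating $4(k+1)+4$ to $4k+2$, $4(k+1)+5$ to $4k+3$, $4(k+1)+3$ to $4k+5$, and $4k+5$ to $4k+2$. Each of these translates mechanically into a strict inequality between the corresponding values of $sr$.

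Next I would assemble these strict inequalities into a relation on the set of all $\theta_m$-graphs with $m \geq 6$ and take its transitive closure. Since every generating relation is a strict inequality between the real numbers $sr(m)$, the closure is automatically irreflexive, antisymmetric, and transitive, hence a strict partial order; drawing its Hasse diagram gives exactly Figure~\ref{figure partial order}. I would also remark explicitly that the order is only partial: pairs for which Lemma~\ref{proposition inequalities} provides no inequality in either direction (for instance certain same-$k$ comparisons across the $4k+3$, $4k+4$, and $4k+5$ families) remain incomparable, which is why no total order is claimed.

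The routine part is the bookkeeping of translating Lemma~\ref{proposition inequalities} into comparisons of $sr$, since all of the analytic difficulty has already been absorbed into that lemma. The main obstacle is therefore not computational but organizational: I must verify that the listed relations are mutually consistent (no chain of them produces $sr(m) < sr(m)$) and that they generate precisely the edges of the depicted diagram without implying a comparison the figure omits. Consistency is guaranteed because each relation is a genuine inequality among fixed reals, so antisymmetry cannot fail; the only residual care is confirming that the transitive reduction of the generated order matches the drawn Hasse diagram.
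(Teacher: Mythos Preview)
Your proposal is correct and follows essentially the same approach as the paper: the corollary is presented there as an immediate consequence of Lemma~\ref{proposition inequalities}, with only a brief remark that the inequalities give monotonicity within families and the cross-family comparison $sr(4k+2) < sr(4k+4)$. Your write-up is in fact more careful than the paper's, explicitly spelling out the sandwich $lb(m) \leq sr(m) \leq ub(m)$ and the transitive-closure argument, but the underlying idea is identical.
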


\begin{figure}[ht]
  \centering
  \includegraphics[scale=1]{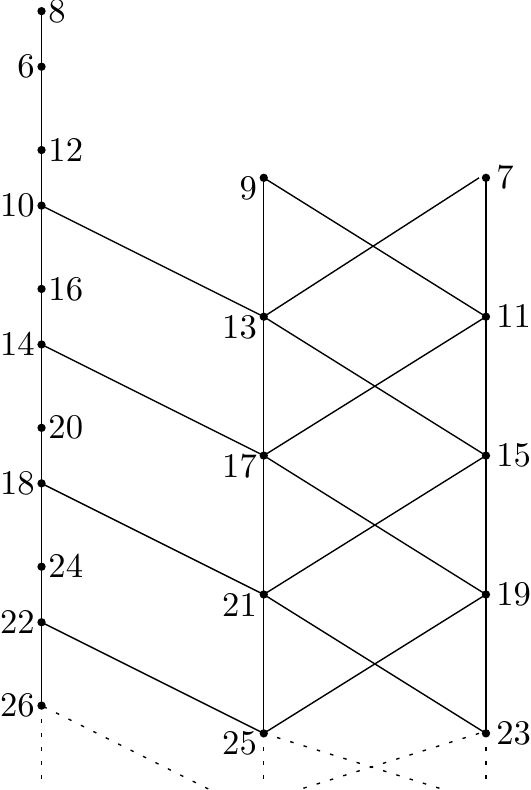}
  \caption{Partial order on the spanning ratios of the four families}
  \label{figure partial order}
\end{figure}

\section{Tight Routing Bounds}
While improving the upper bounds on the spanning ratio of the \graph{4}, we also improved the upper bound on the routing ratio of the $\theta$-routing algorithm. In this section we show that this bound of $1 + 2 \sin (\theta/2) / \left( \cos (\theta/2) - \sin (\theta/2) \right)$ and the current upper bound of $1 / \left( 1 - 2 \sin (\theta/2) \right)$ on the $\theta_{10}$-graph are tight, i.e. we provide matching lower bounds on the routing ratio of the $\theta$-routing algorithm on these families of graphs.

\subsection{Tight Routing Bounds for the \Graph{4}}
In this section we show that the upper bound of $1 + (2 \sin (\theta/2)) / (\cos (\theta/2) - \sin (\theta/2))$ on the routing ratio of the $\theta$-routing algorithm for the \graph{4} is a tight bound. 

\begin{theorem}
  The $\theta$-routing algorithm is $\left( 1 + \frac{2 \sin \left( \frac{\theta}{2} \right)}{\cos \left( \frac{\theta}{2} \right) - \sin \left( \frac{\theta}{2} \right)} \right)$-competitive on the \graph{4} and this bound is tight. 
\end{theorem}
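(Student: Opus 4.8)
The upper bound is exactly Corollary~\ref{cor:Routing4k+4}, so the whole content of the theorem is the matching lower bound. The plan is to exhibit, for every $\varepsilon > 0$, a point set and a pair $u, w$ on which the $\theta$-routing algorithm returns a path of length at least $(c - \varepsilon) \cdot \delta(u,w)$, where $c = 1 + 2\sin(\theta/2)/(\cos(\theta/2) - \sin(\theta/2))$. Before designing anything I would record the constraint that governs the entire construction. The competitive ratio on an instance is the routing path length $R$ divided by $\delta(u,w)$, and the proof of Corollary~\ref{cor:Routing4k+4} already guarantees $R \leq c \cdot |u w|$. Since $\delta(u,w) \geq |u w|$, we have $R/\delta(u,w) \leq c \cdot |u w|/\delta(u,w) \leq c$, and a ratio approaching $c$ forces simultaneously $R \to c \cdot |u w|$ and $\delta(u,w) \to |u w|$. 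Thus the construction must do two opposite-looking things at once: steer the greedy routing walk into the most expensive possible detour, while keeping a genuine near-straight shortest path of length arbitrarily close to $|u w|$ that the routing algorithm never discovers.

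For the expensive detour I would reuse the geometry behind the worst case of Theorem~\ref{theo:PathLength4k+4}. Recall that the constant $c$ arises from Case (d), where $w \in C_{k+1}^v$ and Lemma~\ref{lem:CalculationCase} yields the factor $1/(\cos(\theta/2) - \sin(\theta/2))$ independently of $\beta$. Rewriting $c = 1 + 2\tan(\theta/2) + 2\tan^2(\theta/2) + 2\tan^3(\theta/2) + \cdots$ exposes the target as a geometric series, and the earlier lower-bound example for the spanning ratio (the one giving $1 + 2\tan(\theta/2) + 2\tan^2(\theta/2)$) realizes exactly its first terms along the greedy walk $u, v_1, v_2, v_3, \dots$. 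I would therefore make that gadget self-similar: after the step placing the walk at the analogue of $v_2$, instead of terminating at $w$ I attach a scaled copy of the whole gadget, forcing routing into one further worst-case Case (d) step contributing the next term $2\tan^{i}(\theta/2) \cdot |u w|$. Truncating the recursion at depth $n$ makes the walk reach $w$ in finitely many steps with total length $(1 + 2\tan(\theta/2)\sum_{i=0}^{n}\tan^{i}(\theta/2)) \cdot |u w|$, which tends to $c \cdot |u w|$ as $n \to \infty$.

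The genuinely new ingredient, absent from the spanning lower bound, is a shortcut that pins $\delta(u,w)$ down to $|u w| + \varepsilon \cdot |u w|$. Here I would exploit that $\theta$-routing is myopic: from $u$ it only ever enters the cone $C_0^u$ containing $w$, so any path leaving $u$ through a different cone is invisible to it. Placing $w$ just inside the clockwise boundary of $C_0^u$, I would seed a chain of vertices lying just on the other side of that boundary, i.e.\ in $C_1^u$ and arbitrarily close to the segment $u w$, chosen so that each is the closest vertex in its cone and the induced edges form a path from $u$ to $w$ whose length exceeds $|u w|$ by an amount driven to $0$. Because this chain lies in $C_1^u$ rather than $C_0^u$, the routing walk ignores it and still descends into the detour gadget. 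The same placement is mirrored in the symmetric cones, exactly as in the earlier lower-bound proofs, so that every shortest path is accounted for.

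The step I expect to be the main obstacle is the simultaneous verification, at every recursion level, that (i) the greedy closest-in-cone choice really is the next detour vertex and not a shortcut vertex, (ii) the shortcut chain creates no edge — most dangerously the edge $u w$ itself, or an edge from a detour vertex straight to $w$ — that would let routing escape or shorten the walk, and (iii) the shortcut chain genuinely realizes $\delta(u,w) \to |u w|$. Discharging (i)--(iii) reduces to careful bookkeeping of projection distances and cone memberships inside the self-similar gadget, of the same flavour as the case analysis of Theorem~\ref{theo:PathLength4k+4} but now used to certify that its worst case actually occurs. Once this is in place, combining $R \to c \cdot |u w|$ with $\delta(u,w) \to |u w|$ yields a competitive ratio tending to $c$, matching Corollary~\ref{cor:Routing4k+4} and proving the bound tight.
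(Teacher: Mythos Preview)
Your spiral construction---placing each successive vertex in the far corner of the current canonical triangle so that routing repeatedly falls into the worst case of Case~(d), producing the geometric series $1 + 2\tan(\theta/2)\sum_{i\geq 0}\tan^{i}(\theta/2)$---is exactly what the paper does. One technical point the paper handles that you only gesture at: once the spiral completes a full turn around $w$ and you begin a second cycle, the new vertex $v_2$ lands inside $\canon{v_1}{w}$ and becomes closer to $v_1$ than $v_1'$ was, destroying the first cycle. The paper fixes this by inserting auxiliary vertices $x_i$ in $\canon{v_i}{v_{i+1}}$ so that routing still passes through $v_i'$; this is the actual ``bookkeeping'' you will need, and it is about the spiral itself, not about any shortcut chain.

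The shortcut-chain paragraph, however, rests on a misreading of the competitiveness definition. In this paper the routing ratio is measured against the shortest path in the \emph{underlying} graph $G$, which is the complete Euclidean graph, so the denominator is simply $|u w|$, not $\delta_H(u,w)$ in the \graph{4} $H$. Your constraint ``$\delta(u,w)\to |u w|$'' is therefore vacuous, and the entire device of seeding vertices in $C_1^u$ just across the cone boundary is unnecessary. Worse, it is risky: those extra vertices can lie in $\canon{v_i}{w}$ for later spiral vertices $v_i$ and become the closest-in-cone choice there, letting the routing walk escape---precisely the failure mode you flag in your point~(ii). Drop the shortcut chain entirely; once you do, the proof reduces to verifying that the spiral (with the auxiliary $x_i$) forces $R \to c\cdot|u w|$, which is the paper's argument.
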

\begin{proof}
  Corollary~\ref{cor:Routing4k+4} showed that the routing ratio is at most $1 + (2 \sin (\theta/2)) / (\cos (\theta/2) - \sin (\theta/2))$, hence it suffices to show that this is also a lower bound. 

  We construct the lower bound example on the competitiveness of the $\theta$-routing algorithm on the \graph{4} by repeatedly extending the routing path from source $u$ to destination $w$. First, we place $w$ in the right corner of \canon{u}{w}. To ensure that the $\theta$-routing algorithm does not follow the edge between $u$ and $w$, we place a vertex $v_1$ in the left corner of \canon{u}{w}. Next, to ensure that the $\theta$-routing algorithm does not follow the edge between $v_1$ and $w$, we place a vertex $v_1'$ in the left corner of \canon{v_1}{w}. We repeat this step until we have created a cycle around $w$ (see Figure~\ref{fig:Routing4k+4TightSteps}a). 

  \begin{figure}[ht]
    \centering
    \includegraphics{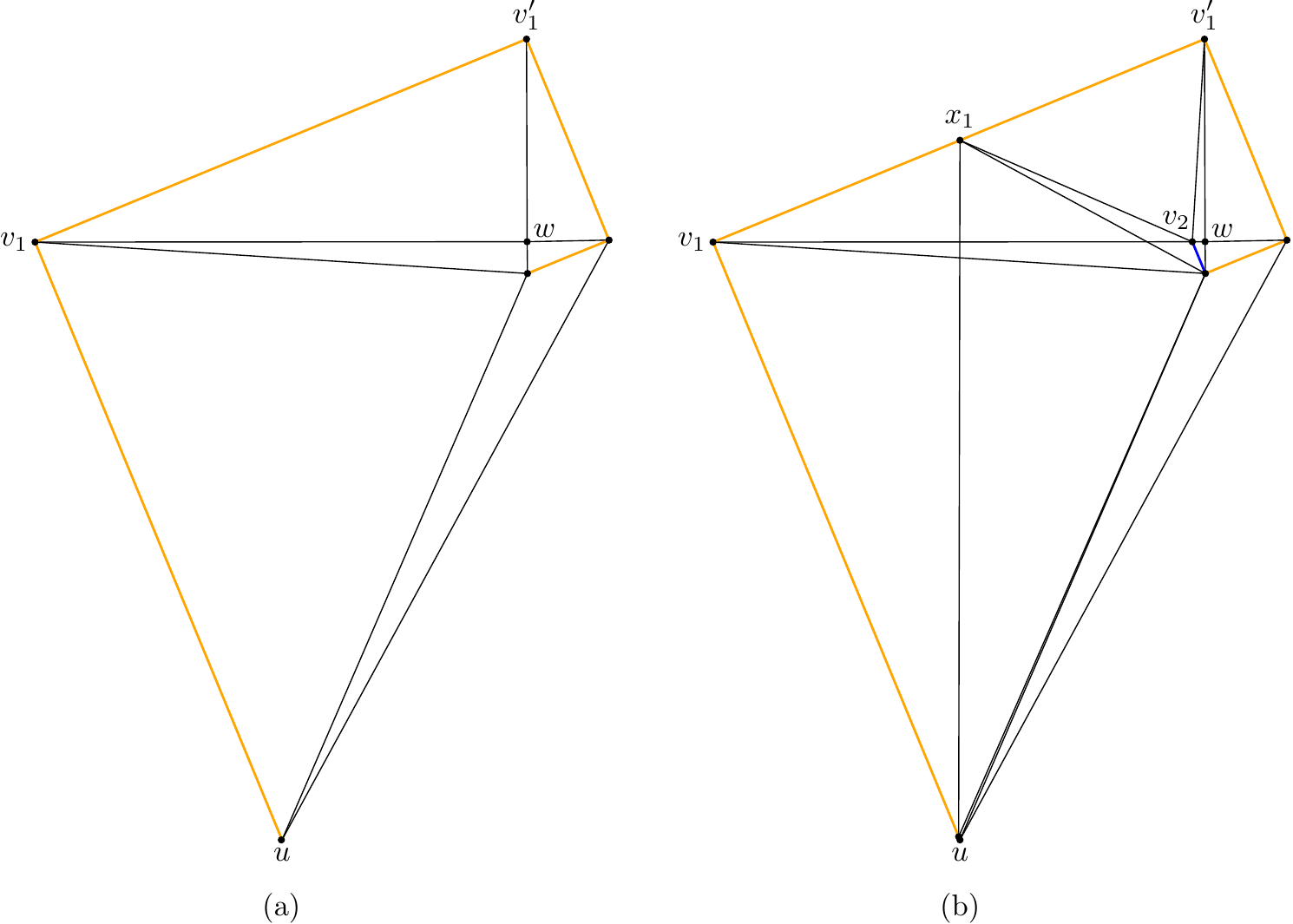}
    \caption{Constructing a lower bound example for $\theta$-routing on the \graph{4}: (a) after constructing the first cycle, (b) after adding $v_2$, the first vertex of the second cycle, and $x_1$, the auxiliary vertex needed to maintain the first cycle}
    \label{fig:Routing4k+4TightSteps}
  \end{figure}

  To extend the routing path further, we again place a vertex $v_2$ in the corner of the current canonical triangle. To ensure that the routing algorithm still routes to $v_1$ from $u$, we place $v_2$ slightly outside of \canon{u}{v_1}. However, another problem arises: vertex $v_1'$ is no longer the vertex closest to $v_1$ in \canon{v_1}{w}, as $v_2$ is closer. To solve this problem, we also place a vertex $x_1$ in \canon{v_1}{v_2} such that $v_1'$ lies in \canon{x_1}{w} (see Figure~\ref{fig:Routing4k+4TightSteps}b). By repeating this process four times, we create a second cycle around $w$. 

  To add more cycles around $w$, we repeat the same process as described above: place a vertex in the corner of the current canonical triangle and place an auxiliary vertex to ensure that the previous cycle stays intact. Note that when placing $x_i$, we also need to ensure that it does not lie in \canon{x_{i-1}}{w}, to prevent shortcuts from being formed. A lower bound example consisting of two cycles is shown in Figure~\ref{fig:Routing4k+4Tight}.

  \begin{figure}[ht]
    \centering
    \includegraphics{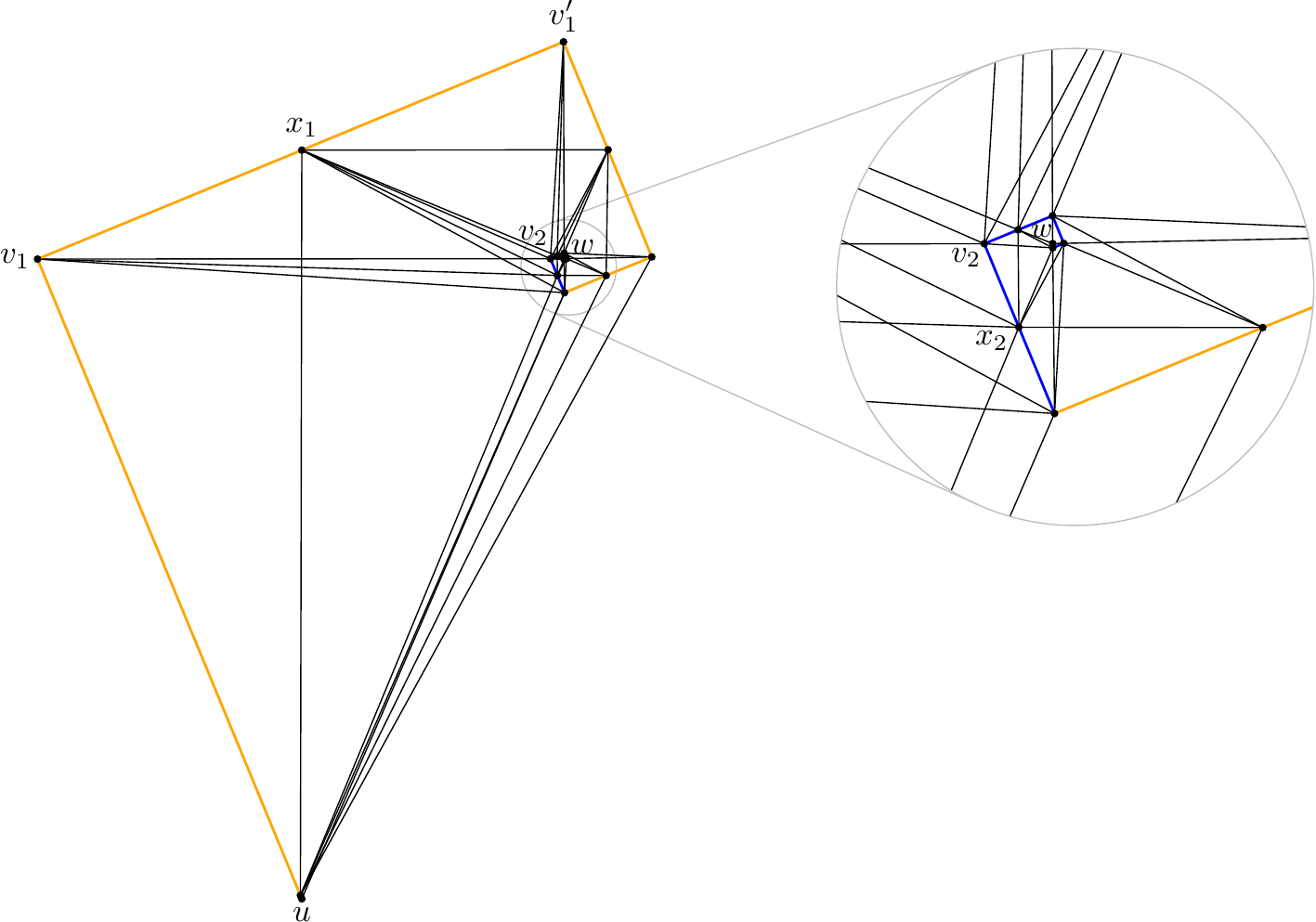}
    \caption{A lower bound example for $\theta$-routing on the \graph{4}, consisting of two cycles: the first cycle is coloured orange and the second cycle is coloured blue}
    \label{fig:Routing4k+4Tight}
  \end{figure}

  This way we need to add auxiliary vertices only to the $(k-1)$-th cycle, when adding the $k$-th cycle, hence we can add an additional cycle using only a constant number of vertices. Since we can place the vertices arbitrarily close to the corners of the canonical triangles, we ensure that $|u v_1| = |u w|$ and that the distance between consecutive vertices $v_i$ and $v_i'$ is always $1 / \cos (\theta/2)$ times $|v_i w|$. Hence, when we take $|u w| = 1$ and let the number of vertices approach infinity, we get that the total length of the path is $1 + 2 \sin (\theta/2) \cdot \sum_{i=0}^\infty (\tan^i (\theta/2) / \cos (\theta/2))$, which can be rewritten to $1 + (2 \sin (\theta/2)) / (\cos (\theta/2) - \sin (\theta/2))$.

\end{proof}

\subsection{Tight Routing Bounds for the $\theta_{10}$-Graph}
In this section we show that the upper bound of $1/(1 - 2 \sin (\theta/2))$ on the routing ratio of the $\theta$-routing algorithm for the $\theta_{10}$-graph is a tight bound. 

\begin{theorem}
  The $\theta$-routing algorithm is $\left( 1 / \left(1 - 2 \sin \left( \theta/2 \right) \right) \right)$-competitive on the $\theta_{10}$-graph and this bound is tight. 
\end{theorem}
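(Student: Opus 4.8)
Corollary~\ref{cor:SpanningRatio} already pins down the spanning ratio, but for routing the relevant upper bound is the one of Ruppert and Seidel~\cite{RS91}: since the $\theta_{10}$-graph has $\theta = \pi/5 < \pi/3$, their analysis shows that $\theta$-routing is $1/(1 - 2\sin(\theta/2))$-competitive. The plan is therefore to exhibit a matching lower bound, i.e. a point set on which the $\theta$-routing path from a source $u$ to a destination $w$ has length arbitrarily close to $|uw|/(1 - 2\sin(\theta/2))$. The guiding identity is $1/(1 - 2\sin(\theta/2)) = \sum_{i=0}^\infty (2\sin(\theta/2))^i$, so I would aim for a self-similar construction in which the routing path is a spiral whose consecutive legs shrink by exactly the factor $2\sin(\theta/2)$.

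Concretely, I would first place $w$ so that $uw$ makes the maximal angle $\theta/2$ with the bisector of $C_0^u$, i.e. $w$ sits at a corner of \canon{u}{w}, and then force the routing algorithm off the direct edge by placing $v_1$ (arbitrarily close to) the opposite corner of \canon{u}{w}; perturbing $v_1$ so that its projection onto the bisector beats that of $w$ both suppresses the edge $uw$ and makes $v_1$ the vertex $\theta$-routing selects. With this placement $|uv_1| = |uw|$ and $|v_1 w| = 2\sin(\theta/2)\cdot|uw|$, and the key computation I would carry out is that the direction from $v_1$ to $w$ is horizontal, so that $w$ falls at a corner of \canon{v_1}{w} lying on the boundary of $C_2^{v_1}$ at angle exactly $\theta/2$ from its bisector. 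The pair $(v_1,w)$ is thus a copy of $(u,w)$ scaled by $2\sin(\theta/2)$ and rotated by $k\theta = 2\theta$ (here $k=2$). Iterating, the routing path visits $u, v_1, v_2, \dots$ with $|v_{i-1}v_i| = (2\sin(\theta/2))^{i-1}|uw|$, and summing the geometric series yields a path of length tending to $|uw|/(1 - 2\sin(\theta/2))$.

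The reason this clean spiral is available precisely for the $\theta_{10}$-graph is arithmetic: each step advances the index of the containing cone by $k$ and hence rotates the configuration by $k\theta$, so the spiral returns to its initial orientation after $(4k+2)/k$ steps, which is an integer only when $k \mid 2$. For $k = 2$ this is exactly five steps. I would exploit this periodicity to organise the construction into successive cycles of five vertices around $w$, exactly as in the preceding \graph{4} routing lower bound (Figures~\ref{fig:Routing4k+4TightSteps} and~\ref{fig:Routing4k+4Tight}): after completing one cycle I would begin the next by placing a vertex slightly outside the previous canonical triangle so that the earlier routing decisions survive.

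The main obstacle, as in that earlier argument, is not the length bookkeeping but keeping the routing path intact while adding vertices. At every step two things must be verified: that each newly placed $v_i$ is genuinely the closest vertex in the cone of $v_{i-1}$ that contains $w$ (handled by placing it just inside the far corner), and that no shortcut edge appears — in particular that a vertex of a new cycle does not become the closest vertex in some cone of a vertex on the previous cycle, and that no direct edge to $u$ or to a later spiral vertex is created. I would repair the first issue with auxiliary vertices $x_i$ placed so that the disrupted vertex again lies in a fresh canonical triangle, and avoid the second by perturbing the $x_i$ out of any canonical triangle that would generate a shortcut. Because only the immediately preceding cycle can be disturbed, a constant number of auxiliary vertices per cycle suffices, so taking $|uw| = 1$ and letting the number of cycles tend to infinity drives the competitiveness to $1/(1 - 2\sin(\theta/2))$, matching the upper bound and establishing tightness.
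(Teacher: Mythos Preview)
Your proposal is correct and matches the paper's approach essentially exactly: the paper also builds a spiral of corner-placed vertices around $w$, organised into five-vertex cycles, with auxiliary vertices $x_i$ inserted to preserve earlier routing decisions and kept out of \canon{x_{i-1}}{w} to block shortcuts, and then sums the resulting geometric series $\sum_{i\ge 0}(2\sin(\theta/2))^i$. Your added remark that the five-step periodicity requires $k \mid 2$, singling out the $\theta_{10}$-graph, is a nice piece of motivation that the paper leaves implicit.
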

\begin{proof}
  Ruppert and Seidel~\cite{RS91} showed that the routing ratio is at most $1/(1 - 2 \sin (\theta/2))$, hence it suffices to show that this is also a lower bound. 

  We construct the lower bound example on the competitiveness of the $\theta$-routing algorithm on the $\theta_{10}$-graph by repeatedly extending the routing path from source $u$ to destination $w$. First, we place $w$ in the right corner of \canon{u}{w}. To ensure that the $\theta$-routing algorithm does not follow the edge between $u$ and $w$, we place a vertex $v_1$ in the left corner of \canon{u}{w}. Next, to ensure that the $\theta$-routing algorithm does not follow the edge between $v_1$ and $w$, we place a vertex $v_1'$ in the left corner of \canon{v_1}{w}. We repeat this step until we have created a cycle around $w$ (see Figure~\ref{fig:Routing10Tight}). 

  \begin{figure}[ht]
    \centering
    \includegraphics{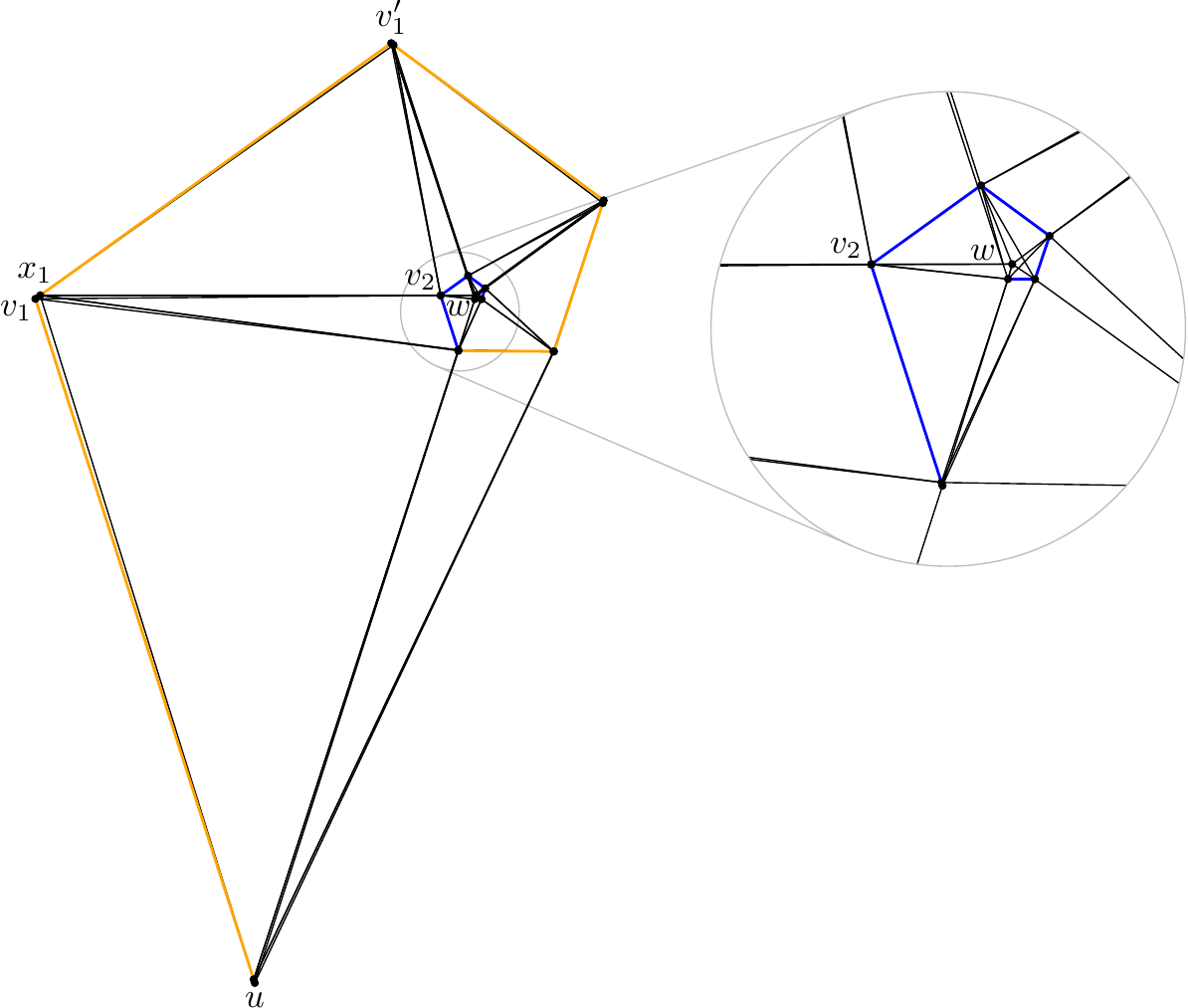}
    \caption{A lower bound example for $\theta$-routing on the $\theta_{10}$-graph, consisting of two cycles: the first cycle is coloured orange and the second cycle is coloured blue}
    \label{fig:Routing10Tight}
  \end{figure}

  To extend the routing path further, we again place a vertex $v_2$ in the corner of the current canonical triangle. To ensure that the routing algorithm still routes to $v_1$ from $u$, we place $v_2$ slightly outside of \canon{u}{v_1}. However, another problem arises: vertex $v_1'$ is no longer the vertex closest to $v_1$ in \canon{v_1}{w}, as $v_2$ is closer. To solve this problem, we also place a vertex $x_1$ in \canon{v_1}{v_2} such that $v_1'$ lies in \canon{x_1}{w} (see Figure~\ref{fig:Routing10Zoomed}). By repeating this process four times, we create a second cycle around $w$. 

  \begin{figure}[ht]
    \centering
    \includegraphics{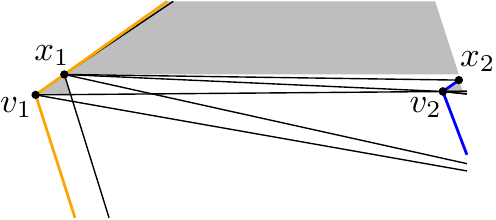}
    \caption{The placement of vertices such that previous cycles stay intact when adding a new cycle}
    \label{fig:Routing10Zoomed}
  \end{figure}

  To add more cycles around $w$, we repeat the same process as described above: place a vertex in the corner of the current canonical triangle and place an auxiliary vertex to ensure that the previous cycle stays intact. Note that when placing $x_i$, we also need to ensure that it does not lie in \canon{x_{i-1}}{w}, to prevent shortcuts from being formed (see Figure~\ref{fig:Routing10Zoomed}). This means that in general $x_i$ does not lie arbitrarily close to the corner of \canon{v_i}{v_{i+1}}. 

  This way we need to add auxiliary vertices only to the $(k-1)$-th cycle, when adding the $k$-th cycle, hence we can add an additional cycle using only a constant number of vertices. Since we can place the vertices arbitrarily close to the corners of the canonical triangles, we ensure that the distance to $w$ is always $2 \sin (\theta/2)$ times the distance between $w$ and the previous vertex along the path. Hence, when we take $|u w| = 1$ and let the number of vertices approach infinity, we get that the total length of the path is $\sum_{i=0}^\infty \left(2 \sin (\theta/2) \right)^i$, which can be rewritten to $1 / \left(1 - 2 \sin (\theta/2) \right)$.
\end{proof}

\section{Conclusion}
We showed that the \graph{2} has a tight spanning ratio of $1 + 2 \sin(\theta/2)$. This is the first time tight spanning ratios have been found for a large family of $\theta$-graphs. Previously, the only $\theta$-graph for which tight bounds were known was the $\theta_6$-graph. We also gave improved upper bounds on the spanning ratio of the \graph{3}, the \graph{4}, and the \graph{5}. 

We also constructed lower bounds for all four families of $\theta$-graphs and provided a partial order on these families. In particular, we showed that the \graph{4} has a spanning ratio of at least $1 + 2 \tan(\theta/2) + 2 \tan^2(\theta/2)$. This result is somewhat surprising since, for equal values of $k$, the worst case spanning ratio of the \graph{4} is greater than that of the \graph{2}, showing that increasing the number of cones can make the spanning ratio worse. 

There remain a number of open problems, such as finding tight spanning ratios for the \graph{3}, the \graph{4}, and the \graph{5}. Similarly, for the $\theta_4$ and $\theta_5$-graphs, though upper and lower bounds are known, these are far from tight. It would also be nice if we could improve the routing algorithms for $\theta$-graphs. At the moment, $\theta$-routing is the standard routing algorithm for general $\theta$-graphs, but it is unclear whether this is the best routing algorithm for general $\theta$-graphs: though we showed that the current bounds on the competitiveness of the $\theta$-routing algorithm are tight in case of the \graph{4}, this does not imply that there exists no algorithm that can do better on these graphs. As a special case, we note that the $\theta$-routing algorithm is not $o(n)$-competitive on the $\theta_6$-graph, but a better (tight) algorithm is known to exist~\cite{BFRV12}.

\bibliographystyle{abbrv}
\bibliography{references}

\end{document}